\numberwithin{equation}{section}
\newcommand\cites[1]{\citeauthor{#1}'s\ (\citeyear{#1})}
\newcommand{\FChol}{\mF}
\let\@addpunct\@gobble
\newtheoremstyle{mystyle}
  {0.5cm}
  {0.5cm}
  {\itshape}
  {}
  {\bfseries}
  { }
  {\newline}
  {\thmname{#1}\thmnumber{ #2}\thmnote{ (#3)}}
\theoremstyle{mystyle}
\newtheorem{theorem}{Theorem}
\newtheorem{remark}{Remark}
\newtheorem{example}{Example}
\newtheorem{assumption}{Assumption}
\newtheorem{lemma}{Lemma}
\renewenvironment{proof}[1][\proofname]{\par
  \pushQED{\qed}%
  \normalfont \topsep6\p@\@plus6\p@\relax
  \trivlist
  \item[\hskip\labelsep
        \itshape
    #1]\ignorespaces
}{%
  \popQED\endtrivlist\@endpefalse
}
\renewcommand\@biblabel[1]{}
\newcommand{\argmin}{\operatornamewithlimits{arg\;min}}
\DeclarePairedDelimiter\abs{\lvert}{\rvert}%
\DeclarePairedDelimiter\norm{\lVert}{\rVert}%
\let\oldabs\abs
\def\abs{\@ifstar{\oldabs}{\oldabs*}}
\let\oldnorm\norm
\def\norm{\@ifstar{\oldnorm}{\oldnorm*}}
\newcommand{\rmnum}[1]{\romannumeral #1}
\newcommand{\Rmnum}[1]{\expandafter\@slowromancap\romannumeral #1@}
\newcommand{\tran}{'}
\newcommand{\distrequal}{\stackrel{d}{=}}
\newcommand{\brown}{\bm{B}}
\newcommand{\brownnormal}{B}
\newcommand{\wiener}{\bm{W}}
\newcommand{\sumtT}{\sum_{t=1}^T}
\newcommand{\myint}{\int_0^1}
\newcommand{\MChol}{\bm{\mathcal{M}}}
\newcommand{\p}{\mathbb{P}}
\newcommand{\SChol}{\bm{\mathcal{S}}}
\providecommand{\lagpol}[1]{\bm{\mathcal{#1}}}
\newcommand{\romanone}{(\mathbf{\uppercase\expandafter{\romannumeral1}})}
\newcommand{\romantwo}{(\mathbf{\uppercase\expandafter{\romannumeral2}})}
\newcommand{\romanthree}{(\mathbf{\uppercase\expandafter{\romannumeral3}})}
\newcommand{\romanfour}{(\mathbf{\uppercase\expandafter{\romannumeral4}})}
\newcommand{\vvarphi}{\bm \varphi}
\newcommand{\row}{\operatorname{row}}
\newcommand{\vbias}{{\bm{\mathcal{B}}}}
\title{Efficient Estimation by Fully Modified GLS with an Application to the Environmental Kuznets Curve}
\author[1]{Yicong Lin}
\author[2]{Hanno Reuvers\thanks{Corresponding author: Department of Econometrics, Erasmus University Rotterdam, 3062 PA Rotterdam, The Netherlands. E-mail address: reuvers@ese.eur.nl.}}
\affil[1]{Department of Econometrics and Data Science, Vrije Universiteit Amsterdam, 1081 HV Amsterdam, The Netherlands}
\affil[2]{Department of Econometrics, Erasmus University Rotterdam, 3062 PA Rotterdam, The Netherlands}
\date{\today}
\begin{document}
\maketitle

\begin{abstract}

\noindent
This paper develops the asymptotic theory of a Fully Modified Generalized Least Squares estimator for multivariate cointegrating polynomial regressions. Such regressions allow for deterministic trends, stochastic trends and integer powers of stochastic trends to enter the cointegrating relations. Our fully modified estimator incorporates: (1) the direct estimation of the inverse autocovariance matrix of the multidimensional errors, and (2) second order bias corrections. The resulting estimator has the intuitive interpretation of applying a weighted least squares objective function to filtered data series. Moreover, the required second order bias corrections are convenient byproducts of our approach and lead to standard asymptotic inference. We also study several multivariate KPSS-type of tests for the null of cointegration. A comprehensive simulation study shows good performance of the FM-GLS estimator and the related tests. As a practical illustration, we reinvestigate the Environmental Kuznets Curve (EKC) hypothesis for six early industrialized countries as in \cite{wagnergrabarczykhong2019}.

\bigskip
\noindent
\textbf{JEL Classification}: C12, C13, C32, Q20

\bigskip
\noindent
\textbf{Keywords}: Cointegrating Polynomial Regression, Cointegration Testing, Environmental Kuznets Curve, Fully Modified Estimation, Generalized Least Squares

\end{abstract}

\newpage
\section{Introduction}
In recent years, there has been an increasing interest in the theoretical properties and theoretical justifications of nonlinear cointegrating relations. For theoretical properties we refer to the textbook treatise by \cite{wang2015} , the recent review article by \cite{tjostheim2020}, and the extensive references found in either of them. Theoretical justifications are in some cases refinements of existing economic theory, e.g. nonlinear cointegration among bond yields with different times to maturity due to yield-dependent risk premia as discussed in \cite{breitung2001}, or nonlinear purchasing power parity due to transaction/transportation costs and trade barriers (e.g. \cite{hongphillips2010}). In other cases, economic theory postulates a nonlinear cointegrating relation from the outset. A popular example of the latter is the Environmental Kuznets curve described in \cite{grossmankrueger1995}.\footnote{There is no direct reference to Kuznets in the original paper by \cite{grossmankrueger1995}. But their nonlinear relations between environmental indicators and per capita GDP do remind strongly of the inverted U-shaped between income inequality and economic growth proposed by Kuznets (1901-1985). The term `environmental Kuznets curve' was used later.}

There are three branches of literature on the estimation of such nonlinear cointegrating relations. First, the papers by \cite{parkphillips1999} and \cite{parkphillips2001} are concerned with nonlinear cointegration analysis of a parametric form. Second, there is a literature on nonparametric kernel estimation of nonlinear cointegrating equations, see for example \cite{wangphillips2009} or \cite{lihillipsgao2017}. The third approach is reminiscent of a nonparametric sieve estimation with power polynomial basis. That is, one estimates a cointegrating relation containing integer powers of integrated regressors. \cite{wagnerhong2016} named this a cointegrating polynomial regression (CPR). The multivariate seemingly unrelated regressions extension is available in \cite{wagnergrabarczykhong2019}. Our model specification builds on this Seemingly Unrelated Cointegrating Polynomial Regression (SUCPR) setup.

We make two theoretical contributions to the literature on cointegrating polynomial regressions. First, we propose the Fully Modified Generalized Least Squares (FM-GLS) estimator. This estimator requires two main steps: (1) It employs the inverse covariance matrix of the $2nT$-dimensional innovation vector, that is, the covariance matrix of the vector which stacks the $n$ disturbances in the cointegrating equations and the $n$ disturbances driving the $I(1)$ regressors over the time span $T$. The estimation of this inverse covariance matrix is based on the Modified Cholesky Decomposition (MCD) originating from Pourahmadi (1999). The approach is computationally simple because the required quantities are obtained from the coefficients and prediction error variances of best linear least squares predictors. In our setting this translates into estimating multiple VAR models up to some maximum lag order $q$. Sufficient conditions for consistency are provided. (2) We exploit the previous results to correct the second-order biases, resulting in improved efficiency and standard chi-square inference. Also note that the approach differs from the linear cointegration results in \cite{markogakisul2005} and \cite{moonperron2005} since our bias corrections do not rely on leads and lags augmentation. Second, a multi-equation cointegration specification asks for a multivariate cointegration test. Building upon the work by \cite{choisaikkonen2010}, we propose three such tests. The first test uses pre-filtered residuals to account for serial correlation, whereas the other two are direct multivariate generalizations of the KPSS-type of test in \cite{wagnerhong2016}. The estimator and cointegration tests are subsequently studied by Monte Carlo simulation. In our simulations, the FM-GLS estimator has a higher estimation accuracy and its implied Wald test has better size control and higher size-adjusted power. We find by simulation that prefiltering improves the size control of the cointegration tests but has an adverse effect on power. In the empirical application there is a surprisingly large spread in the widths of the confidence intervals. It turns out that FM-SUR, and to a lesser degree FM-SOLS, underestimates the parameter uncertainty compared to FM-GLS.

The plan of this paper is as follows. Section \ref{sec:model} introduces the model and the modified Cholesky block decomposition. This decomposition is the main ingredient for the fully modified GLS estimator. The related asymptotic theory and stationarity tests are discussed in Section \ref{sec:asymptotictheory} whereas a finite sample simulation study is presented in Section \ref{sec:simulations}. The empirical application can be found in Section \ref{sec:empappl} where we look at the environmental Kuznets curve. Section \ref{sec:conclusion} concludes. All proofs are collected in the Appendices.\footnote{The Appendices contains the proofs of all the results that are related to the generalized least squares estimator. Supplementary material is available on the websites of the authors.}

Some words on notation. $C$ denotes a generic positive constant. The integer part of the number $a\in \SR^{+}$ is denoted by $[a]$. For a vector $\vx\in\SR^n$,  its dimension is abbreviated by $\dim(\vx)$ and its $p$-norm by $\|\vx\|_p=(\sum_{i=1}^{n}|x_i|^p)^{1/p}$. When applied to a matrix, $\|\mA\|_p$ signifies the induced norm defined by $\|\mA\|_p=\sup_{\vx\neq \vzeros} \|\mA\vx\|_p/\|\vx\|_p$. The subscripts are omitted whenever $p=2$, e.g. $\|\vx\|=\left(\sum_{i=1}^{n}|x_i|^2\right)^{1/2}$ and $\|\mA\|=\left(\lambda_{max}\left(\mA'\mA\right)\right)^{1/2}$ where $\lambda_{max}(\cdot)$ is the largest eigenvalue. Similarly, $\lambda_{min}(\cdot)$ denotes the smallest eigenvalue. The Frobenius norm is denoted as $\|\cdot\|_{\calF}$. The $(n\times n)$ identity matrix is written as $\mI_n$. The $i^{th}$ row or $i^{th}$ column of an arbitrary matrix $\mA$ are selected using $\col_i(\mA)$ and $\row_i(\mA)$, respectively. The Kronecker product is denoted ``$\otimes$''. We use the symbol ``$\wto$'' to signify weak convergence and the symbol ``$\distrequal$'' for equality in distribution. The stochastic order and strict stochastic order relations are indicated by $O_p(\cdot)$ and $o_p(\cdot)$.

\section{The Model}\label{sec:model}
As in \cite{wagnergrabarczykhong2019}, we study a system of seemingly unrelated cointegrating polynomial regressions (SUCPR), that is
\begin{equation}
\begin{aligned}
 \vy_t =\mZ_t\tran \vbeta+\vu_t,\qquad\qquad \text{for }t=1,2,\ldots, T,
 \label{eq:basemodel}
\end{aligned}
\end{equation}
where the dependent variable $\vy_t:=[y_{1t},y_{2t},\ldots,y_{nt}]\tran$ and innovations $\vu_t:=[u_{1t},u_{2t},\ldots,u_{nt}]\tran$ are $(n\times 1)$ random vectors. For the cross-sectional unit $i$, we use as explanatory variables: (1) deterministic components such as an intercept and polynomial time trends up to order $d_i$, and (2) integer powers of the $I(1)$ regressors $x_{it}$ up to degree $s_i$. Defining $\vd_{it}=[1,t,\ldots,t^{d_i}]\tran$, $\vs_{it}=[x_{it},\ldots,x_{it}^{s_{i}}]$, and $\vz_{it}=[\vd_{it}\tran,\vs_{it}\tran]\tran$, we subsequently collect all explanatory variables in the block diagonal matrix $\mZ_t=\diag[\vz_{1t},\ldots,\vz_{nt}]$. We are interested in the $d$-dimensional parameter vector $\vbeta$ where $d=\sum_{i=1}^n (d_i+s_i+1)$. Overall, each cross-sectional unit in \eqref{eq:basemodel} specifies a single cointegrating relation containing polynomials in deterministic and stochastic trends. For each $i$, the highest orders of these polynomials, i.e. $d_i$ and $s_i$, are assumed to be fixed and known. We do not allow for cointegration in the cross-sectional dimension.

The innovation series $\{\vu_t\}$ is allowed to exhibit dependencies over time and across series. We assume that these dependencies can be modeled by a stationary VAR($\infty$) process, that is
\begin{equation}
 \lagpol{A}(L) \vu_t= \left(\mI_n-  \sum_{j=1}^\infty \mA_j L^j \right)\vu_t=\veta_t,
\end{equation}
(see Assumption \ref{assumpt1:linearproc} for further details). Efficient estimation of the parameter vector $\vbeta$ now requires the use of generalized least squares (GLS). Our \cite{zellner1962}-type GLS estimator relies on the inverse of the $(nT\times nT)$ matrix $\mSigma_{\vu}=\E(\vu\vu\tran)$ where $\vu=[\vu_1\tran,\vu_2\tran,\ldots,\vu_T\tran]\tran$. In this paper, we directly estimate $\mSigma_{\vu}^{-1}$ using a multivariate extension of the modified Cholesky decomposition by \cite{pourahmadi1999}. This extension was named the \emph{Modified Cholesky Block Decomposition} (MCBD\label{abr:MCBD}) by \cite{kimzimmerman2012} and \cite{kohligarciapourahmadi2016}. The latter papers used the MCBD to parametrize the covariance matrix of multivariate longitudinal data. As in \cite{beutnerlinsmeekes2019}, we use the MCBD for the time series application mentioned above, i.e. the computation of $\mSigma_{\vu}^{-1}$. The decomposition is closely related to linear minimum MSE predictors.

We define
\begin{equation}
\begin{aligned}
 \mA(\ell) &=
 \begin{bmatrix}
  \mA_1(\ell) & \cdots & \mA_\ell(\ell)
 \end{bmatrix}
 = \argmin_{(\mTheta_1,\ldots,\mTheta_\ell)\in \SR^{n\times n\ell}} \E\left\|\vu_t-\mTheta_1 \vu_{t-1}-\cdots-\mTheta_\ell \vu_{t-\ell} \right\|^2, \\
 \mS(\ell) &= \E\left[\vu_t-\mA_{1}(\ell)\vu_{t-1}-\cdots-\mA_{\ell}(\ell)\vu_{t-\ell} \right]\left[\vu_t-\mA_{1}(\ell)\vu_{t-1}-\cdots-\mA_{\ell}(\ell)\vu_{t-\ell} \right]\tran,
\end{aligned}
\label{eq:populationMCD}
\end{equation}
and $\mS(0)=\E(\vu_t^{}\vu_t\tran)$. The inverse of the covariance matrix $\mSigma_{\vu}$ is then given by
\begin{equation}
 \mSigma_{\vu}^{-1}=\MChol_{\vu}\tran \SChol_{\vu}^{-1}\MChol_{\vu}^{},
\label{eq:modCholdecomp}
\end{equation}
where $ \SChol_{\vu} = \diag \Big(\mS(0),\mS(1),\ldots,\mS(T-1) \Big)$,
\begin{equation}
\begin{aligned}
 \MChol_{\vu} &=\left[\bm{m}_{\vu}^{ij} \right]_{1\leq i,j\leq T},\text{ with }\bm{m}_{\vu}^{ij}=
 \begin{cases}
    \mZeros_{n\times n},&\text{if}\quad i<j,\\
    \mI_n,&\text{if}\quad i=j,\\
    -\mA_{i-j}(i-1),&\text{if}\quad 2\leq i\leq T,\;1\leq j\leq i-1,
    \end{cases}
\end{aligned}
\end{equation}
and the $\mA_{j}(i)$ follow from the partitioning $\mA(\ell)=\big[\mA_{1}(\ell),\ldots,\mA_{\ell}(\ell)\big]$.

Weak stationarity of $\{\vu_t\}$ implies that the block elements of $\MChol_{\vu}$ being far below the main diagonal are small. This suggests a banding approach in which small elements are replaced by zeros. More specifically, we construct a \emph{Banded Inverse Autocovariance Matrix} (BIAM) \label{ref:BIAM} as
\begin{equation}
\mSigma_{\vu}^{-1}(q)=\MChol_{\vu}\tran(q)\SChol_{\vu}^{-1}(q)\MChol_{\vu}^{}(q),
\label{eq:bimam}
\end{equation}
where $1\leq q\ll T$ is called the banding parameter, $\SChol_{\vu}(q)=\diag\Big(\mS(0),\mS(1),\ldots,\mS(q),\ldots,\mS(q)\Big)$ and $\MChol_{\vu}(q)=\left[\bm{m}_{\vu}^{ij}(q) \right]_{1\leq i,j\leq T}$ with
\begin{equation}
\vm_{\vu}^{ij}(q)=\begin{cases}
\mZeros_{n\times n},&\text{if}\quad i<j\;\text{or}\; \{q+1<i\leq T,\;1\leq j\leq i-q-1\}\\
\mI_n,&\text{if}\quad i=j\\
-\mA_{i-j}(i-1),&\text{if}\quad 2\leq i\leq q,\;1\leq j\leq i-1\\
-\mA_{i-j}(q),&\text{if}\quad q+1\leq i\leq T,\;i-q\leq j\leq i-1.
\end{cases}
\label{eq:Mmatrixconstrucion}
\end{equation}

\begin{example} \label{example:VAR1}
 Consider a stationary $n$-dimensional VAR($3$) process specified as $\vu_t= \sum_{j=1}^3 \mA_j \vu_{t-j}+\veta_t$ with $\veta_t\stackrel{i.i.d.}{\sim}(\vzeros,\mSigma_{\eta\eta})$. For $T=4$, the MCBD $\mSigma_{\vu}^{-1}=\MChol_{\vu}\tran \SChol_{\vu}^{-1}\MChol_{\vu}^{}$ is based on 
 \begin{equation}
 \MChol_{\vu}^{}
 =
  \left[
  \begin{smallmatrix}
   \mI_n		& \mZeros		& \mZeros		& \mZeros \\
   -\mA_1(1)	& \mI_n		& \mZeros		& \mZeros \\
  -\mA_2(2)	& -\mA_1(2)	& \mI_n		& \mZeros \\
  -\mA_3		&-\mA_2		& -\mA_1		& \mI_n
  \end{smallmatrix}
  \right]
  ,
  \qquad\qquad
 \SChol_{\vu}=
 \left[
 \begin{smallmatrix}
  \mS(0) \\
  & \mS(1) \\
  & & \mS(2) \\
  & & & \mSigma_{\eta\eta}
 \end{smallmatrix}
 \right].
\end{equation}
Alternatively, with banding parameter $q=2$, the related banded inverse autocovariance matrix is $\mSigma_{\vu}^{-1}(2)=\MChol_{\vu}\tran(2)\SChol_{\vu}^{-1}(2)\MChol_{\vu}^{}(2)$ with
 \begin{equation}
 \MChol_{\vu}^{}(2)
 =
  \left[
  \begin{smallmatrix}
   \mI_n		& \mZeros		& \mZeros		& \mZeros \\
   -\mA_1(1)	& \mI_n		& \mZeros		& \mZeros \\
  -\mA_2(2)	& -\mA_1(2)	& \mI_n		& \mZeros \\
  \mZeros		&-\mA_2(2)	& -\mA_1(2)	& \mI_n
  \end{smallmatrix}
  \right]
  ,
  \qquad\qquad
 \SChol_{\vu}(2)=
 \left[
 \begin{smallmatrix}
  \mS(0) \\
  & \mS(1) \\
  & & \mS(2) \\
  & & & \mS(2)
 \end{smallmatrix}
 \right].
\end{equation}
\end{example}

The model of \eqref{eq:basemodel} can be stacked over time to yield the representation $\vy=\mZ \vbeta+\vu$ with $\vy=[\vy_1\tran,\vy_2\tran,\ldots,\vy_T\tran]\tran$, $\mZ=[\mZ_1,\mZ_2,\ldots,\mZ_T]\tran$ and $\vu$ as before. For the moment, \emph{we will assume $\mSigma_{\vu}^{-1}(q)$ to be known} and focus on the following estimator:
\begin{equation}
 \widehat{\vbeta}_{GLS}:=\left(\mZ\tran \mSigma_{\vu}^{-1}(q) \mZ \right)^{-1} \mZ\tran \mSigma_{\vu}^{-1}(q) \vy.
\end{equation}
A discussion on the properties of this infeasible estimator is informative because: (1) the incurred estimation error of an appropriately constructed estimator$\widehat{\mSigma_{\vu}^{-1}}(q)$ will be asymptotically negligible, and (2) we can suppress the effect of banding by letting $q$ increase with sample size.

Two remarks related to $\widehat{\vbeta}_{GLS}$ are instructive. First, the GLS estimator differs from the usual least squares estimator $\widehat{\vbeta}_{OLS}:=\left(\mZ\tran \mZ \right)^{-1} \mZ\tran \vy$ by a weighing with the inverse covariance matrix $\mSigma_{\vu}^{-1}(q)$. It is well documented in standard econometric textbooks (e.g. chapter 7 of \cite{davidsonmackinnon2004}) that this weighing may lead to substantial efficiency gains. Second, it is illustrative to substitute the Modified Cholesky Decomposition of $\mSigma_{\vu}^{-1}(q)$ into the definition of this infeasible GLS estimator. The result is $\widehat{\vbeta}_{GLS}=(\mZ_{filt}' \SChol_{\vu}^{-1}(q) \mZ_{filt}^{} )^{-1}\mZ_{filt}' \SChol_{\vu}^{-1}(q) \vy_{filt}$ where $\mZ_{filt}^{}=\MChol_{\vu}(q)\mZ$, and $\vy_{filt}=\MChol_{\vu}(q) \vy$. The premultiplications by $\MChol_{\vu}(q)$ have the effect of filtering and take care of serial correlation. $\SChol_{\vu}^{-1}(q)$ applies scaling and rotation to account for the correlations between the series. The following univariate autoregressive setting exemplifies this intuition.

\begin{example} \label{example:praiswinston}
 A regression model $y_t=\beta t +u_t$ has AR($1$) innovations $u_t=\rho u_{t-1}+\eta_t$ where $\eta_t\stackrel{i.i.d.}{\sim}(0,\sigma^2)$ and $|\rho|<1$. Taking $n=1$, the expressions of Example \ref{example:VAR1} are easily adapted to yield:
 \begin{equation}
   \SChol_{\vu}^{}=
  \diag\left(\frac{\sigma^2}{1-\rho^2},\sigma^2,\ldots,\sigma^2 \right),
  \qquad
    \MChol_{\vu}^{}\vy=
  \begin{bmatrix}
   1		 \\
   -\rho	&1 \\
   \vdots	& \ddots	&\ddots \\
   0		& \cdots	& -\rho	& 1
  \end{bmatrix}
  \begin{bmatrix}
   y_1 \\
   y_2 \\
   \vdots \\
   y_T
  \end{bmatrix}
  =
  \begin{bmatrix}
   y_1 \\
   y_2-\rho y_1 \\
   \vdots \\
   y_T-\rho y_{T-1}
  \end{bmatrix},
 \end{equation}
 and a similar transformation for the linear trend. The implied GLS estimator coincides with the estimator from \cite{praiswinston1954}.
\end{example}

\section{Asymptotic Theory} \label{sec:asymptotictheory}
In this section, we present the asymptotic results. More specifically, we derive: (1) the limiting distribution of the GLS estimator, (2) the fully modified GLS (FM-GLS) estimator that corrects for second order bias terms, (3) a Wald test statistic, and (4) several multivariate KPSS-type of tests for the null of cointegration. We will also compare this FM-GLS estimator with the two fully modified estimators defined in Proposition 1 of \cite{wagnergrabarczykhong2019}. The following assumption will facilitate the development of the asymptotic theory.

\begin{assumption}[Innovation Processes]\label{assumpt1:linearproc}
The innovations processes in the model satisfy the following assumptions:
 \begin{enumerate}[(a)]
  \item The process $\vzeta_t^{}=[\veta_t\tran,\vepsi_t\tran]\tran$ is an independent and identically distributed (i.i.d.) sequence with $\E(\vzeta_t^{} \vzeta_t\tran)=\left[\begin{smallmatrix} \mSigma_{\eta\eta} & \mSigma_{\eta \epsilon} \\ \mSigma_{\epsilon \eta} & \mSigma_{\epsilon \epsilon} \end{smallmatrix} \right] \succ 0$ and $\E(\|\vzeta_t\|^{2r})\leq C_r<\infty$ for some constant $C_r>0$ and some $r>2$.
  \item $\det\big(\lagpol{A}(z)\big)\neq 0$ for all $|z|\leq 1$ and $\sum_{j=0}^\infty j \| \mA_j \|_\calF<\infty$.
  \item $\diff \vx_t=\vv_t$ admits the VAR($\infty$) process $\lagpol{D}(L)\vv_t=\vepsi_t$, where $\lagpol{D}(L)=\mI_n-\sum_{j=1}^\infty \mD_j L^j$. Moreover,  $\det\big(\lagpol{D}(z)\big)\neq 0$ for all $|z|\leq 1$ and $\sum_{j=0}^\infty j \, \|\mD_j \|_{\calF}<\infty$.
 \end{enumerate}
\end{assumption}

The stationary VAR($\infty$) specifications for $\{\vu_t\}$ and $\{\vv_t\}$ are natural given the linear minimum MSE predictor formulae that underly the definitions of the MCBD and BIAM. Moreover, the conditions in Assumption \ref{assumpt1:linearproc} ensure that the lag polynomials $\lagpol{A}(L)$ and $\lagpol{D}(L)$ are invertible (see for example Theorem 7.4.2 of \cite{hannandeistler2012}), thereby showing that our Assumption \ref{assumpt1:linearproc} is similar to the linear processes assumptions that are regularly adopted in the literature on nonlinear cointegration, cf. \cite{choisaikkonen2010}, \cite{wagnerhong2016}, and \cite{wagnergrabarczykhong2019}. The assumption $\det\big(\lagpol{D}(1)\big)\neq 0$ rules out cointegration among the components of $\{\vx_t\}$.

Under Assumption \ref{assumpt1:linearproc}(a), an invariance principle holds for $\vzeta_t$, i.e. $\frac{1}{T^{1/2}} \sum_{t=1}^{[rT]} \vzeta_t\wto \brown_{\vzeta}(r)\equiv \left[\begin{smallmatrix}   \brown_\eta(r)\\  \brown_\epsilon(r) \end{smallmatrix} \right]$ where $\brown_{\vzeta}$ denotes an $2n$-dimensional Brownian motion with covariance matrix $\left[\begin{smallmatrix} \mSigma_{\eta\eta} & \mSigma_{\eta \epsilon} \\ \mSigma_{\epsilon \eta} & \mSigma_{\epsilon \epsilon} \end{smallmatrix} \right] $. Moreover, Assumptions \ref{assumpt1:linearproc}(b)-(c) justify the use of the Beveridge-Nelson decomposition (\cite{phillipssolo1992}). A functional central limit theorem for linear processes is thus also applicable to $\vxi_t^{}=[\vu_t\tran,\vv_t\tran]\tran$, that is
\begin{equation}
 \frac{1}{T^{1/2}} \sum_{t=1}^{[rT]} \vxi_t \wto \brown_{\vxi}(r)
 \equiv
 \begin{bmatrix}
  \brown_u(r) \\
  \brown_v(r)
 \end{bmatrix}
 \equiv
 \begin{bmatrix}
  \lagpol{A}(1)		& \mZeros \\
  \mZeros			& \lagpol{D}(1)
 \end{bmatrix}^{-1}
 \begin{bmatrix}
 	\brown_\eta(r)\\
	\brown_\epsilon(r)
 \end{bmatrix},
 \label{eq:brownianequivalences}
\end{equation}
where the Brownian motion $\brown_{\vxi}(r)$ of dimension $2n$ has covariance matrix
\begin{equation}
\mOmega=
\begin{bmatrix}
  \mOmega_{uu}	& \mOmega_{uv} \\
  \mOmega_{vu}	& \mOmega_{vv}
\end{bmatrix}
=
 \begin{bmatrix}
  \lagpol{A}(1)		& \mZeros \\
  \mZeros			& \lagpol{D}(1)
 \end{bmatrix}^{-1}
\begin{bmatrix} \mSigma_{\eta\eta} & \mSigma_{\eta \epsilon} \\ \mSigma_{\epsilon \eta} & \mSigma_{\epsilon \epsilon} \end{bmatrix}
  \begin{bmatrix}
  \lagpol{A}(1)\tran		& \mZeros \\
  \mZeros				& \lagpol{D}(1)\tran
 \end{bmatrix}^{-1}.
\label{eq:covariancetransform}
\end{equation}
Apart from this long-run covariance matrix $\mOmega=\sum_{h=-\infty}^\infty \E\big(\vxi_t^{}\vxi_{t+h}\tran\big)$, we also introduce the one-sided long-run covariance matrix $\mDelta= \left[\begin{smallmatrix} \mDelta_{uu} & \mDelta_{uv} \\ \mDelta_{vu} & \mDelta_{vv} \end{smallmatrix} \right]= \sum_{h=0}^\infty \E\big(\vxi_t^{}\vxi_{t+h}\tran\big)$. The Brownian motion defined by $\brown_{u.v}=\brown_u-\mOmega_{uv}^{} \mOmega_{vv}^{-1} \brown_v^{}$ is by construction orthogonal to $\brown_v$. Its $(n\times n)$ covariance matrix equals $\mOmega_{u.v}^{}=\mOmega_{uu}^{}-\mOmega_{uv}^{} \mOmega_{vv}^{-1} \mOmega_{vu}^{}$.

\subsection{Infeasible GLS}
We start our analysis assuming that the $(nT\times nT)$ covariance matrix $\mSigma_{\vu}(q)$ is a known quantity for each $q$. The modified Cholesky block decomposition of page \pageref{abr:MCBD} can now be used to derive the limiting distribution of this infeasible GLS estimator. A insightful exposition of our results requires further notation.
 \begin{enumerate}[(a)]
  \item Introduce scaling matrices: $\mG_{\vd_i,T}:=T^{-1/2}\diag[1,T^{-1},\ldots,T^{-d_i}]$ for the time trends, and $\mG_{\vs_i,T}:=T^{-1/2}\diag[T^{-1/2},T^{-1},\dots,T^{-s_i/2}]$ for the stochastic trends. Moreover, we define $\mG_T:=\diag\left[\mG_{1,T},\dots,\mG_{n,T}\right]$, where $\mG_{i,T}:=\diag\left[\mG_{\vd_i,T},\mG_{\vs_i,T}\right]$.
  \item Let $\vd_i(r):=\left[1,r,\dots,r^{d_i}\right]\tran$, $\brown_{s_i}(r):=\left[\brownnormal_{v_i}^{}(r),\brownnormal_{v_i}^2(r),\dots,\brownnormal_{v_i}^{s_i}(r)\right]\tran$ and $\vj_i(r):=\left[\vd_i(r)',\brown_{s_i}(r)'\right]\tran$. Define $d\times n$ block-diagonal random matrix $\mJ(r):=\diag\left[\vj_1(r),\dots,\vj_n(r)\right]$.
  \item $\vb_i=\Big[\vzeros_{d_i+1}\tran,1,2\int_{0}^{1}\brownnormal_{v_i}^{}(r)dr,\dots,s_i\int_{0}^{1}\brownnormal_{v_i}^{s_i-1}(r)dr\Big]\tran$.
 \end{enumerate}
 Finally, we use $\brown_{v_j}$ as shorthand notation for the $j^{th}$ component of $\brown_v$.
 
  \begin{theorem}[Limiting Distribution of the infeasible GLS Estimator]\label{thm:infeasGLS}
  If Assumption \ref{assumpt1:linearproc} holds, and if $q=q(T)$ satisfies $\frac{1}{q}+\frac{q}{T}\to 0$ as $T\to\infty$, then
  \begin{equation}
  \begin{aligned}
	\mG_T^{-1}\left(\widehat{\vbeta}_{GLS}-\vbeta\right) &\wto\left(\myint \mJ(r)\mOmega_{uu}^{-1}\mJ(r)'dr\right)^{-1}\\
	&\qquad\times\left(\myint\mJ(r)\mOmega_{uu}^{-1}d\brown_{u.v}^{}(r)+\myint \mJ(r)\mOmega_{uu}^{-1}\mOmega_{uv}^{}\mOmega_{vv}^{-1}d\brown_{v}^{}(r)+\vbias\right),
\end{aligned}
\label{eq:infeasibleGLSlimiting}
\end{equation}
	where $\vbias=\left[\vbias_{1}',\dots,\vbias_{n}'\right]'$ and $\vbias_{i}=\row_i\Big(\mSigma_{\epsilon\eta}\Big)\col_i\Big(\mSigma_{\eta\eta}^{-1}\Big)~\vb_i$.
\end{theorem}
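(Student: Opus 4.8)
The plan is to work with the normal equations for $\widehat{\vbeta}_{GLS}$ after inserting the Modified Cholesky Block Decomposition, so that the GLS estimator is rewritten as a weighted least squares estimator on the filtered data $\mZ_{filt}=\MChol_{\vu}(q)\mZ$ and $\vy_{filt}=\MChol_{\vu}(q)\vy$. Writing $\widehat{\vbeta}_{GLS}-\vbeta=(\mZ_{filt}'\SChol_{\vu}^{-1}(q)\mZ_{filt})^{-1}\mZ_{filt}'\SChol_{\vu}^{-1}(q)\vu_{filt}$ with $\vu_{filt}=\MChol_{\vu}(q)\vu$, I would premultiply by $\mG_T^{-1}$ and insert $\mG_T^{-1}\mG_T$ between the inverse factor and the score, so that the object to analyze is
\begin{equation}
\mG_T^{-1}(\widehat{\vbeta}_{GLS}-\vbeta)=\left(\mG_T\mZ_{filt}'\SChol_{\vu}^{-1}(q)\mZ_{filt}\mG_T\right)^{-1}\mG_T\mZ_{filt}'\SChol_{\vu}^{-1}(q)\vu_{filt}.
\end{equation}
The first key step is to show that the filtering operation $\MChol_{\vu}(q)$ acts, row-block by row-block, as applying the (truncated) AR($\infty$) polynomial $\lagpol{A}(L)$ to the regressors and errors, up to asymptotically negligible edge effects. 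Concretely, for $t>q$ the $t$-th block of $\mZ_{filt}$ is $\vz_t-\sum_{j=1}^q \mA_j(q)\vz_{t-j}$, and $\mA_j(q)\to\mA_j$ as $q\to\infty$ by the Baxter-type bound implied by Assumption \ref{assumpt1:linearproc}(b); the first $q$ blocks contribute $O_p(q/T)$ after scaling, hence vanish under $q/T\to 0$. Similarly $\SChol_{\vu}^{-1}(q)$ is block-diagonal with blocks converging to $\mS(\infty)^{-1}=\mSigma_{\eta\eta}^{-1}$ for indices beyond $q$.

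For the denominator, I would use the standard signal limits for CPRs: the deterministic blocks of $\mG_{i,T}\vz_{it}$ converge (uniformly in $[rT]$) to $\vd_i(r)$, the stochastic-trend blocks to $\brown_{s_i}(r)$, so $\mG_T\mZ'\cdot\approx\int_0^1\mJ(r)\,(\cdot)\,dr$ in the appropriate sense. Because filtering multiplies each block by $\mA_j(q)$ coefficients that sum (over $j$, including $j=0$) to $\lagpol{A}(1)$ acting on the \emph{slowly varying} trend signals, the filtered regressors satisfy $\mG_T\mZ_{filt}\approx (\text{I}_?)$—more precisely, applying $\lagpol{A}(L)$ to $t^k$ or to $\brown^k(t/T)$ leaves the leading term unchanged up to lower-order terms, because these signals are asymptotically constant at the scale of a fixed number of lags. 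Hence $\mG_T\mZ_{filt}'\SChol_{\vu}^{-1}(q)\mZ_{filt}\mG_T\wto \int_0^1\mJ(r)\mOmega_{uu}^{-1}\mJ(r)'dr$, using $\lagpol{A}(1)'\mSigma_{\eta\eta}^{-1}\lagpol{A}(1)=\mOmega_{uu}^{-1}$ from \eqref{eq:covariancetransform}. This "trend passes through the filter at its long-run gain" heuristic is where I expect the main technical work, and I would make it rigorous via summation by parts plus the $\sum j\|\mA_j\|_{\calF}<\infty$ summability.

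For the numerator $\mG_T\mZ_{filt}'\SChol_{\vu}^{-1}(q)\vu_{filt}$, note $\vu_{filt}=\MChol_{\vu}(q)\vu$ has $t$-th block $\approx\lagpol{A}(L)\vu_t=\veta_t$ for $t>q$, i.e. filtering whitens the errors. So the score is asymptotically $\sum_t(\mG_T\,\text{filtered }\vz_t)\mSigma_{\eta\eta}^{-1}\veta_t$. I would then split $\veta_t$ via the endogeneity decomposition: regress $\veta_t$ on $\vepsi_t$ (the innovations to $\diff\vx_t$) to isolate the part correlated with the stochastic trends. The martingale part $\mG_T\mZ_{filt}'\SChol_{\vu}^{-1}(q)(\text{white noise orthogonal to }\vepsi)$ converges to the stochastic integral $\int_0^1\mJ(r)\mOmega_{uu}^{-1}d\brown_{u.v}(r)$ plus the cross term $\int_0^1\mJ(r)\mOmega_{uu}^{-1}\mOmega_{uv}\mOmega_{vv}^{-1}d\brown_v(r)$, by a martingale CLT / continuous-mapping argument jointly with the signal convergence. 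The remaining second-order bias $\vbias$ arises exactly from the blocks of $\mJ(r)$ that are \emph{nonlinear} in $\brown_v$: for the $k$-th power $\brown_{v_i}^k$, the derivative of the filter contraction against the endogenous innovation produces a term $k\int_0^1\brown_{v_i}^{k-1}(r)\,dr$ times the relevant entry of $\E(\vepsi_t\veta_t')$; collecting these across powers $k=1,\dots,s_i$ gives precisely $\vb_i$, and the scalar prefactor $\row_i(\mSigma_{\epsilon\eta})\col_i(\mSigma_{\eta\eta}^{-1})$ comes from the contraction $\mSigma_{\epsilon\eta}\mSigma_{\eta\eta}^{-1}$ appearing when $\SChol_{\vu}^{-1}$ premultiplies the whitened errors and one extracts the $i$-th equation's contribution. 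Keeping careful track of \emph{which} covariance (the instantaneous $\mSigma_{\epsilon\eta}$, not the long-run $\mDelta$) shows up here is the subtle point, and it is a consequence of the whitening: after filtering, only the contemporaneous correlation survives, so the familiar leads-and-lags $\mDelta$ correction of FM-OLS is replaced by this simpler i.i.d.-level bias. Finally I would combine the denominator limit, the stochastic-integral terms, and $\vbias$ via the continuous mapping theorem to obtain \eqref{eq:infeasibleGLSlimiting}, after checking that all the $q$-truncation remainders are $o_p(1)$ under $1/q+q/T\to0$.
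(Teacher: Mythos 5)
Your proposal follows essentially the same route as the paper's proof: interpret $\MChol_{\vu}(q)$ as block-wise application of the truncated filter $\lagpol{A}_q(L)$ with negligible edge effects, use a Beveridge--Nelson/summation-by-parts argument to show that the trend and stochastic-trend regressors pass through the filter at the long-run gain $\lagpol{A}_q(1)\to\lagpol{A}(1)$, note that filtering whitens the errors so that $\lagpol{A}_q(L)\vu_t\approx\veta_t$, and close with the identity $\lagpol{A}(1)'\mSigma_{\eta\eta}^{-1}\lagpol{A}(1)=\mOmega_{uu}^{-1}$ and stochastic-integral convergence. Your observation that whitening replaces the one-sided long-run covariance correction by a contemporaneous covariance $\mSigma_{\epsilon\eta}$ is exactly the right intuition.

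There is, however, one concrete gap in your accounting of the bias term, which is the only genuinely delicate computation in this proof. In the paper's argument the score decomposes (via the BN decomposition applied to the regressors) as
\begin{equation*}
\sum_{t=1}^{T}\left(\lagpol{A}_q(L)\mZ_{t}'\mG_T^{}\right)'\eta_{it}
=\sum_{t=1}^{T}\mG_T\mZ_{t}\eta_{it}\,\lagpol{A}_q(1)'+\sum_{t=1}^{T}\mG_T\Delta\mZ_{t}\eta_{it}\,\widetilde{\mA}_1(q)'+o_p(1),
\end{equation*}
and \emph{both} terms contribute to the bias: the first contributes $\mB_i\lagpol{A}(1)'$ (the standard CPR second-order bias of $\sum_t\mG_T\mZ_t\eta_{it}$, inherited from Proposition 1 of \cite{wagnerhong2016}, post-multiplied by the long-run gain), and the second contributes $\mB_i\big(\sum_{j=1}^{\infty}\mA_j\big)'$ (the ``derivative of the filter'' term you describe, since $\Delta x_{kt}^{j}\approx j x_{kt-1}^{j-1}v_{kt}$ correlates with $\eta_{it}$ only contemporaneously). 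Only their sum, via $\lagpol{A}(1)+\sum_{j}\mA_j=\mI_n$, yields the clean bias $\mB_i$ and hence $\vbias_i=\row_i(\mSigma_{\epsilon\eta})\col_i(\mSigma_{\eta\eta}^{-1})\,\vb_i$. Your sketch attributes the bias to a single mechanism, which, taken literally, would produce one of these two pieces and thus an answer off by a factor $\lagpol{A}(1)'$ or $\big(\sum_j\mA_j\big)'$; since $\lagpol{A}(1)\neq\mI_n$ in general, that is the wrong constant. A second, minor, imprecision: the bias is not confined to the blocks of $\mJ(r)$ that are nonlinear in $\brown_v$ --- the entry of $\vb_i$ corresponding to the linear regressor $x_{it}$ equals $1$, so the linear block contributes bias as well (this is just the familiar FM-OLS endogeneity correction at the i.i.d.\ level).
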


The limiting result in \eqref{eq:infeasibleGLSlimiting} coincides with the limiting distribution of the MSUR estimator, $\widetilde{\vbeta}_{MSUR}:= \left(\mZ\tran(\mI_T\otimes \widehat{\mOmega}_{uu}^{-1}) \mZ \right)^{-1} \Big( \mZ\tran(\mI_T\otimes \widehat{\mOmega}_{uu}^{-1}) \vy\Big)$, as reported in \cite{wagnergrabarczykhong2019}, see their Proof of Proposition 1. The equivalence of these limiting distributions is caused by the facts that: (1) applying a linear filter to an integrated series only affect its long-run variance (e.g. \cite{phillipspark1988}), and (2) the previous statement remaining true when applying a linear filter to higher integer powers of integrated series. The terms $\myint \mJ(r)\mOmega_{uu}^{-1}\mOmega_{uv}^{}\mOmega_{vv}^{-1}d\brown_{v}^{}(r)$ and $\vbias$ in \eqref{eq:infeasibleGLSlimiting} reflect the presence of second order bias terms caused by serial correlation and endogeneity. In Section \ref{sec:fullymodifiedinf}, we introduce the fully modified (FM) correction that adjust these bias terms and leads to standard inference. We first introduce a feasible version of the GLS estimator.

\subsection{Consistent Estimation of $\mSigma_{\vu}^{-1}(q)$ and Feasible GLS}\label{subsec:covarianceconsistency}
Up to this point we have discussed the infeasible estimator  $\widehat{\vbeta}_{GLS}:=\left(\mZ\tran \mSigma_{\vu}^{-1}(q) \mZ \right)^{-1} \mZ\tran \mSigma_{\vu}^{-1}(q) \vy$. A feasible GLS approach requires a consistent estimator of the $(nT\times nT)$ matrix $\mSigma_{\vu}^{-1}(q)$. Several authors, e.g. \cite{wupourahmadi2009} and \cite{mcmurypolitis2010}, have constructed consistent estimators of large covariance matrices using banding or tapering to reduce the number of unknown parameters. Direct usage of their results poses two difficulties because: (1) numerical inversion of large matrices is computationally expensive for large $nT$, and (2) matrix inversion might even be impossible because the estimated covariance matrix cannot be guaranteed to be positive definite. In the light of the such considerations, we will estimate $\mSigma_{\vu}^{-1}(q)$ directly and ensure it to be positive definite. The approach is the sample counterpart of the BIAM described on page \pageref{ref:BIAM}. That is, we replace true innovations by first stage OLS residuals $\widehat{\vu}_t=\vy_t-\mZ_t\widehat{\vbeta}_{OLS}$, and subsequently minimise a sample moment in estimated residuals rather than the population mean squared forecasting error. This method was previously used by \cite{chengingyu2015} and \cite{ICG2016} for univariate time series. For a multivariate time series, we define
\begin{equation}
\begin{aligned}
\widehat{\mA}(\ell)&=
 \begin{bmatrix}
  \widehat{\mA}_1(\ell) & \cdots & \widehat{\mA}_\ell(\ell)
 \end{bmatrix}
=
\argmin_{(\mTheta_1,\ldots,\mTheta_\ell)\in \SR^{n\times n\ell}}\sum_{t=\ell+1}^{T}\left\|\widehat{\vu}_t-\mTheta_1 \widehat{\vu}_{t-1}-\cdots-\mTheta_\ell \widehat{\vu}_{t-\ell} \right\|^2\\
\widehat{\mS}(\ell)&=\frac{1}{T-\ell}\sum_{t=\ell+1}^{T}\left[\widehat{\vu}_t-\widehat{\mA}_1(\ell) \widehat{\vu}_{t-1}-\cdots-\widehat{\mA}_\ell(\ell)\widehat{\vu}_{t-\ell}\right]\left[\widehat{\vu}_t-\widehat{\mA}_1(\ell) \widehat{\vu}_{t-1}-\cdots-\widehat{\mA}_\ell(\ell)\widehat{\vu}_{t-\ell}\right]\tran,
\end{aligned}
\label{eq:sampleMCD}
\end{equation}
$1\leq \ell\leq q$, and $\widehat{\mS}(0)=\frac{1}{T}\sum_{t=1}^{T}\widehat{\vu}_t^{}\widehat{\vu}_t\tran$. Similarly to \eqref{eq:bimam}-\eqref{eq:Mmatrixconstrucion}, we subsequently construct the matrices  $\widehat{\MChol}_{\vu}(q)=\left[\widehat{\vm}_{\vu}^{ij}(q) \right]_{1\leq i,j\leq T}$ and $\widehat{\SChol}_{\vu}(q)=\diag\left(\widehat{\mS}(0),\widehat{\mS}(1),\ldots,\widehat{\mS}(q),\ldots,\widehat{\mS}(q)\right)$, and obtain the implied multivariate BIAM estimator as
\begin{equation}
\widehat{\mSigma_{\vu}^{-1}}(q)=\widehat{\MChol}_{\vu}\tran(q)\widehat{\SChol}_{\vu}^{-1}(q)\widehat{\MChol}_{\vu}^{}(q).
\label{eq:BIAM}
\end{equation}

\begin{assumption}\label{assump3:residuals}
For $\widehat{\vu}=[\widehat{\vu}_1\tran,\ldots,\widehat{\vu}_T\tran]\tran$ and $\vu=[\vu_1\tran,\ldots,\vu_T\tran]\tran$, assume $\|\widehat{\vu}-\vu\|^{2}=O_p(1)$.
\end{assumption}

\begin{assumption}\label{assump4:bandingparameter}
Assume $q=q_T$ satisfies $\frac{1}{q_T}+\frac{q_T^3}{T}\to 0$ as $T\to\infty$.
\end{assumption}

Assumption \ref{assump3:residuals} requires the residuals to be sufficiently close to the true innovations. It is a rather mild assumption and it is satisfied if residuals are computed by least squares. Assumption \ref{assump4:bandingparameter} places constraints on the banding parameter $q_T$. First, Assumption \ref{assump4:bandingparameter} requires the banding parameter to diverge with sample size. This ensures that no nonzero elements are (asymptotically) set to zero. Moreover, the assumption $q_T^3/T\to 0$ establishes an upper bound for the growth rate of $q_T$. The definition of $\widehat{\mA}(\ell)$, see \eqref{eq:sampleMCD}, shows that we are fitting a vector autoregression (VAR) of increasing lag order to the residuals. Identical rate requirements are reported by \cite{lewisreinsel1985} when they derive consistency and asymptotic normality results when finite VAR models are fitted to infinite order VAR processes. The following theorem shows the consistent estimation of $\mSigma_{\vu}^{-1}$ and implies that the infeasible and feasible GLS estimator have the same limiting distribution.

\begin{theorem}[Consistent Estimation of $\mSigma_{\vu}^{-1}$]\label{thm:consistentDECOMP}
 If Assumptions \ref{assumpt1:linearproc}-\ref{assump4:bandingparameter} hold, then
 	\begin{equation}
	\begin{aligned}
	\left\|\widehat{\mSigma_{\vu}^{-1}}(q_T)-\mSigma_{\vu}^{-1}\right\|
	&\leq\Big\|\widehat{\mSigma_{\vu}^{-1}}(q_T)-\mSigma_{\vu}^{-1}(q_T)\Big\|+\Big\|\mSigma_{\vu}^{-1}(q_T)-\mSigma_{\vu}^{-1}\Big\| \\
	&=O_p\left(\,\sqrt{q_T^3/T} \, \right)+O\left(\frac{1}{\sqrt{q_T}}\sum_{s=q_T+1}^{\infty}s\left\|\mA_{s}\right\|_{\calF}\right) \pto 0 \text{ as $T\rightarrow \infty$}.
	\end{aligned}
	\end{equation}
\end{theorem}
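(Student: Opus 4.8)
I would begin from the triangle-inequality split already displayed in the statement, reducing the claim to two rate bounds: $\big\|\mSigma_{\vu}^{-1}(q_T)-\mSigma_{\vu}^{-1}\big\|=O\big(q_T^{-1/2}\sum_{s>q_T}s\|\mA_s\|_{\calF}\big)$ for the \emph{banding bias}, and $\big\|\widehat{\mSigma_{\vu}^{-1}}(q_T)-\mSigma_{\vu}^{-1}(q_T)\big\|=O_p\big(\sqrt{q_T^3/T}\big)$ for the \emph{estimation error}. Both rely on a few uniform bounds recorded at the outset. Since $\mSigma_{\eta\eta}\preceq\mS(\ell)\preceq\mS(0)$ with $\mS(\ell)\to\mSigma_{\eta\eta}\succ0$, one has $\|\SChol_{\vu}^{-1}\|=\max_\ell\|\mS(\ell)^{-1}\|=O(1)$; and $\|\MChol_{\vu}\|=O(1)$ because $\SChol_{\vu}^{-1/2}\MChol_{\vu}=U\mSigma_{\vu}^{-1/2}$ for some orthogonal $U$ (any two square-root factors of the positive-definite matrix $\mSigma_{\vu}^{-1}$ differ by an orthogonal factor), whence $\|\MChol_{\vu}\|\le\|\mS(0)\|^{1/2}\lambda_{min}(\mSigma_{\vu})^{-1/2}=O(1)$, the last bound holding because the spectral density of $\{\vu_t\}$ is bounded away from $0$ under Assumption~\ref{assumpt1:linearproc}(b). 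The banded counterparts $\|\MChol_{\vu}(q)\|$, $\|\SChol_{\vu}^{-1}(q)\|$ are $O(1)$ uniformly in $q$ as well: $\MChol_{\vu}(q)$ decomposes into a banded part whose block row and column sums equal $1+\sum_{k\le q}\|\mA_k(q)\|=O(1)$ by a Baxter-type finite-order autoregressive approximation inequality (valid under Assumption~\ref{assumpt1:linearproc}(b)), plus a triangular corner that is a submatrix of $\MChol_{\vu}$. Consistency of $\widehat{\mS}(\ell)$ and $\widehat{\mA}(\ell)$, established below, then also gives $\|\widehat{\SChol}_{\vu}^{-1}(q)\|=O_p(1)$ and $\|\widehat{\MChol}_{\vu}(q)\|=O_p(1)$.

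\textbf{Banding bias.} Using $\mSigma_{\vu}^{-1}=\MChol_{\vu}\tran\SChol_{\vu}^{-1}\MChol_{\vu}$ and $\mSigma_{\vu}^{-1}(q)=\MChol_{\vu}\tran(q)\SChol_{\vu}^{-1}(q)\MChol_{\vu}(q)$, I would expand the difference as
\[
(\MChol_{\vu}(q)-\MChol_{\vu})\tran\SChol_{\vu}^{-1}(q)\MChol_{\vu}(q)+\MChol_{\vu}\tran\big(\SChol_{\vu}^{-1}(q)-\SChol_{\vu}^{-1}\big)\MChol_{\vu}(q)+\MChol_{\vu}\tran\SChol_{\vu}^{-1}\big(\MChol_{\vu}(q)-\MChol_{\vu}\big),
\]
so that, by the preliminary $O(1)$ bounds, the rate is governed by $\|\MChol_{\vu}(q)-\MChol_{\vu}\|$ and $\|\SChol_{\vu}^{-1}(q)-\SChol_{\vu}^{-1}\|$. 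The matrix $\MChol_{\vu}(q)-\MChol_{\vu}$ is supported on block rows $i>q$ and has two kinds of blocks: the approximation errors $\mA_k(i-1)-\mA_k(q)$ at the retained lags $k\le q$ and the discarded tail coefficients $\mA_k(i-1)$ at lags $k>q$; similarly $\SChol_{\vu}^{-1}(q)-\SChol_{\vu}^{-1}$ gathers $\mS(q)^{-1}-\mS(\ell)^{-1}$ for $\ell>q$. Bounding these blocks by tails of the form $\sum_{s>q}\|\mA_s\|_{\calF}$ through Baxter-type inequalities (in an $\ell_1$- and an $\ell_2$-over-lags version), then combining the block row and column sums by Schur's test together with Cauchy--Schwarz over the at most $q$ retained bands while carrying the lag weights, yields $\|\MChol_{\vu}(q)-\MChol_{\vu}\|+\|\SChol_{\vu}^{-1}(q)-\SChol_{\vu}^{-1}\|=O\big(q_T^{-1/2}\sum_{s>q_T}s\|\mA_s\|_{\calF}\big)$. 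This tends to $0$ because $\sum_j j\|\mA_j\|_{\calF}<\infty$ forces the weighted tail $\sum_{s>q}s\|\mA_s\|_{\calF}\to0$.

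\textbf{Estimation error.} Expanding $\widehat{\MChol}_{\vu}\tran(q)\widehat{\SChol}_{\vu}^{-1}(q)\widehat{\MChol}_{\vu}(q)-\MChol_{\vu}\tran(q)\SChol_{\vu}^{-1}(q)\MChol_{\vu}(q)$ in the analogous three-term fashion, and using $\widehat{\SChol}_{\vu}^{-1}(q)-\SChol_{\vu}^{-1}(q)=\widehat{\SChol}_{\vu}^{-1}(q)\big(\SChol_{\vu}(q)-\widehat{\SChol}_{\vu}(q)\big)\SChol_{\vu}^{-1}(q)$, all surrounding factors are $O(1)$ or $O_p(1)$, so it remains to prove $\|\widehat{\MChol}_{\vu}(q)-\MChol_{\vu}(q)\|=O_p(\sqrt{q_T^3/T})$ and $\|\widehat{\SChol}_{\vu}(q)-\SChol_{\vu}(q)\|$ of the same or smaller order. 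I would first invoke Assumption~\ref{assump3:residuals} to replace the first-stage OLS residuals $\widehat{\vu}_t$ by the true innovations $\vu_t$ in the sample normal equations defining $\widehat{\mA}(\ell)$ and $\widehat{\mS}(\ell)$: the substitution errors involve averages such as $\tfrac1T\sum_t(\widehat{\vu}_t-\vu_t)\widehat{\vu}_{t-j}\tran$ and $\tfrac1T\sum_t(\widehat{\vu}_t-\vu_t)(\widehat{\vu}_{t-j}-\vu_{t-j})\tran$, which are controlled by Cauchy--Schwarz using $\|\widehat{\vu}-\vu\|^2=O_p(1)$ and contribute only lower-order terms. I would then run a Lewis--Reinsel-type analysis of the least-squares VAR($\ell$) fits for $1\le\ell\le q_T$ (uniform-in-$\ell$ invertibility and boundedness of the sample moment matrices of lagged innovations, and uniform control of $\widehat{\mA}(\ell)-\mA(\ell)$ and $\widehat{\mS}(\ell)-\mS(\ell)$), so that each of the at most $q_T$ bands of $\widehat{\MChol}_{\vu}(q_T)-\MChol_{\vu}(q_T)$ has spectral norm $O_p(\sqrt{q_T/T})$ and the triangular corner has (Frobenius, hence spectral) norm $O_p(q_T/\sqrt T)$; summing over the bands then gives the operator-norm rate $O_p\big(q_T\sqrt{q_T/T}\big)=O_p(\sqrt{q_T^3/T})$, which vanishes under Assumption~\ref{assump4:bandingparameter} ($q_T^3/T\to0$).

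\textbf{Main obstacle.} The delicate step is the estimation-error term: controlling the spectral norm of an $(nT\times nT)$ block-banded \emph{random} matrix whose number of bands diverges with $T$, uniformly over the band index, while simultaneously (i) establishing uniform-in-$\ell$ invertibility and boundedness of the sample normal-equation matrices as the fitted autoregressive order $\ell\le q_T$ grows, (ii) turning per-band estimation rates into an operator-norm bound via Schur's test (or, for the corner, the Frobenius norm) without losing more than the condition $q_T^3/T\to0$ can absorb, and (iii) handling cleanly the replacement of the unobserved innovations by first-stage residuals under the deliberately weak Assumption~\ref{assump3:residuals}. Once the Baxter-type finite-order autoregressive approximation inequalities are in place, the banding-bias term is, by comparison, routine.
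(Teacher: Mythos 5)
Your proposal follows the same overall strategy as the paper's proof: the triangle-inequality split into estimation error and banding bias, the three-term add-and-subtract expansion of each difference of triple products, the reduction to coefficient-level rates (a Lewis--Reinsel-type uniform bound for $\widehat{\mA}(\ell)-\mA(\ell)$ and $\widehat{\mS}(\ell)-\mS(\ell)$, Baxter's inequality for the truncation), and the replacement of residuals by true innovations via Cauchy--Schwarz and $\|\widehat{\vu}-\vu\|^2=O_p(1)$. The paper likewise defers the banding-bias bound $O\big(q^{-1/2}\sum_{s>q}s\|\mA_s\|_{\calF}\big)$ to a supplement lemma generalizing Lemma 2 of Cheng, Ing and Yu (2015) and Propositions 2.1--2.2 of Ing, Chiou and Guo (2016), which is essentially the Baxter-based computation you sketch. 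The one substantive difference is the bookkeeping for $\bm{\calG}:=\widehat{\MChol}_{\vu}(q)-\MChol_{\vu}(q)$. You propose summing the $q_T$ bands, which needs the per-band spectral norms $\max_{\ell}\|\widehat{\mA}_k(\ell)-\mA_k(\ell)\|$ to be $O_p(\sqrt{q_T/T})$ \emph{uniformly} in the band index $k$ and the fitted order $\ell$; the uniform bound the paper's auxiliary lemma actually supplies is $\max_{\ell\le q_T}\|\widehat{\mA}(\ell)-\mA(\ell)\|=O_p(q_T/\sqrt{T})$ for the full coefficient matrix, and feeding that into a band-by-band triangle inequality (or a crude Schur test) yields $O_p\big(q_T^2/\sqrt{T}\big)=O_p\big(\sqrt{q_T^4/T}\big)$, which Assumption \ref{assump4:bandingparameter} does not absorb. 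The paper sidesteps this with the inequality $\|\bm{\calG}\|^2\le\|\bm{\calG}\bm{\calG}\tran\|_1\le n(2q-1)\max_{\ell}\sum_{s}\|\widehat{\mA}_s(\ell)-\mA_s(\ell)\|_1^2$, which exploits that within each block row the lag-wise errors enter as a sum of squares, so the band count $2q-1$ is paid only once and the rate $O_p(\sqrt{q_T^3/T})$ follows directly from the available uniform bound. Your route can be repaired (e.g.\ by treating the homogeneous rows with $\ell=q_T$ as a banded block-Toeplitz matrix and the first $q_T$ block rows via a Frobenius bound, as you hint for the corner), and you correctly identify this as the delicate step, but as written it is the one place where a factor of $\sqrt{q_T}$ is at risk; the remaining ingredients — the $O(1)$ and $O_p(1)$ bounds on $\|\MChol_{\vu}(q)\|$, $\|\SChol_{\vu}^{-1}(q)\|$ and their estimated counterparts, and $\|\widehat{\mS}(0)-\mS(0)\|=O_p(T^{-1/2})$ — match the paper's (largely asserted) treatment.
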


\subsection{Fully Modified Inference} \label{sec:fullymodifiedinf}
The asymptotic results of Theorem \ref{thm:infeasGLS} is not immediately useful for statistical inference. There are two difficulties. First, the second order bias dislocates the limiting distribution which can translate into substantial finite sample bias. This leads to a loss in efficiency. Second, possible dependencies between the Brownian motions $\brown_u$ and $\brown_v$ cause the limiting distribution to depend on nuisance parameters. Critical values would therefore be nuisance parameter dependent as well.

These two issues have received extensive attention in the linear cointegration literature. A (non-exhaustive) list of solution methods is: joint modeling as in \cite{phillips1991}, \cites{saikkonen1992} dynamic least squares, and the integrated modified OLS and fixed-b approaches by \cite{vogelsangwagner2014}. We rely on the fully modified (FM) approach advocated by \cite{phillipshansen1990} and \cite{phillips1995}. The idea is a twofold modification of the estimator: (1) second order bias terms are subtracted, and (2) a transformation of the dependent variable is introduced to obtain a zero-mean Gaussian mixture limiting distribution. Recently, \cite{wagnergrabarczykhong2019} have proposed two estimators within the framework of seemingly unrelated cointegrating polynomial regressions. These estimators, FM-SOLS and FM-SUR, rely on kernel estimators of the one- and two-sided long-run covariance matrix (see Theorem \ref{thm:fullymodifiedOLSGLS}). As such, we introduce the following assumption.

\begin{assumption}[Consistent Estimation of Long-run Covariance Matrices]\label{assumpt:consistentlongrun}
 $\widehat{\mOmega}$ and $\widehat{\mDelta}$ are consistent kernel estimators of the long-run covariance matrix $\mOmega$ and the one-sided long-run covariance matrix $\mDelta$, respectively.
\end{assumption}

\cite{andrews1991} and \cite{neweywest1994} use kernel estimators for long-run covariance estimation. Their method involves the calculation of weighted sums of the autocovariance matrices of the residuals. These weights are determined by a kernel function and bandwidth parameter. Our Assumption \ref{assumpt:consistentlongrun} is easily satisfied by imposing suitable conditions on the kernel function and bandwidth parameter. We refer to \cite{phillips1995} and \cite{jansson2002} for an enumeration of such conditions.

Alternatively, we can obtain consistent one- and two-sided long-run covariance estimators within the BIAM framework of Section \ref{subsec:covarianceconsistency}.\footnote{An overview of the procedure is given here. Section \ref{detailsFMinference} in the Supplement provides further details.} This approach resembles \cite{berk1974}. The GLS estimator and its FM counterpart are thus constructed within a single framework. The estimators are as follows. For all $t=1,2,\ldots,T$, we first stack $\widehat{\vu}_t$ and $\Delta \vx_t = \vv_t$ in the $2n$-dimensional vector $\widehat{\vxi}_t=[\widehat{\vu}_t\tran,\Delta \vx_t\tran]\tran$. Since the BIAM estimator is fitting VAR processes up to order $q_T$, we will use the estimated VAR($q_T$) approximations to define the long-run covariance estimators. For $\mOmega$, the estimator is $\widehat{\mOmega}_{q_T}=\left(\mI_{2n}-\sum_{j=1}^{q_T} \widehat{\mF}_j^{}(q_T) \right)^{-1} \widehat{\mSigma}_{q_T}\left(\mI_{2n}-\sum_{j=1}^{q_T} \widehat{\mF}_j\tran(q_T) \right)^{-1}\label{eq:twosidedLRV}$, where $\widehat{\mSigma}_{q_T}=\widehat{\mS}(q_T)$ and $\widehat{\mF}_j(q_T)$ denote respectively the estimated prediction error variance and the coefficient matrix of the $j^{th}$ lag when a VAR($q_T$) is fitted to $\{\widehat{\vxi}_t \}_{t=1}^T$. The population one-sided long-run covariance matrix is $\mDelta= \sum_{h=0}^\infty \E\big(\vxi_t^{}\vxi_{t+h}\tran\big)$. It is thus intuitive to approximate this quantity by a finite sum of estimated covariance matrices of $\{\widehat{\vxi}_t \}_{t=1}^T$. These covariance matrices are nothing but subblocks of the matrix $\widehat{\mSigma}_{\vxi}^{}(q_T)=\widehat{\MChol}_{\vxi}^{-1}(q_T)\widehat{\SChol}_{\vxi}^{}(q_T)\widehat{\MChol}_{\vxi}^{-1\prime}(q_T)$.\footnote{We use $\mSigma_{\vxi}$ to denote the $(2nT\times 2nT)$ matrix $\E(\vxi \vxi\tran)$ where $\vxi=[\vxi_1\tran,\vxi_2\tran,\ldots,\vxi_T\tran]\tran$. The matrices $\widehat{\MChol}_{\vxi}(q)$ and $\widehat{\SChol}_{\vxi}(q)$ are defined similarly to respectively $\widehat{\MChol}_{\vu}(q)$ and $\widehat{\SChol}_{\vu}(q)$ (see page \pageref{eq:BIAM}). The matrix $\widehat{\MChol}_{\vxi}(q_T)$ is lower triangular with identity matrices on the main diagonal. Therefore, its matrix inverse exists and is fast to compute.} We therefore use
\begin{equation}
 \widehat{\mDelta}_{q_T,r_T}= \mQ_{r_T}\tran \widehat{\mSigma}_{\vxi}^{}(q_T) \mQ_1^{}
\end{equation}
where $\mQ_r=\left[\mZeros_{2n \times 2n}, \cdots,  \mZeros_{2n \times 2n}, \mI_{2n}, \cdots, \mI_{2n}\right]\tran$ is an $\big(2n T \times 2n \big)$ block matrix of zeros of which the last $r$ blocks have been replaced by identity matrices. To ensure consistency, we place the following rate restriction on the number of included autocovariance matrices.

\begin{assumption} \label{assumpt:rT}
 As $T\rightarrow\infty$, $r_T\rightarrow\infty$, $\frac{r_T^{}q_T^3}{T}\rightarrow 0$, and $r_T=O(q_T)$.
\end{assumption}

Definitions and limiting results for FM estimators are presented in Theorem \ref{thm:fullymodifiedOLSGLS}. The FM-SOLS, FM-SUR and FM-GLS estimator all depend on estimators for $\mDelta$ and $\mOmega$. It is only the consistency of these estimators that is relevant for the asymptotic analysis, not whether the kernel or BIAM approach is employed. As such, we will not complicate notation by introducing additional notation to indicate whether the kernel or BIAM approach is used. In subsequent theorems, simulation results and the empirical application we will use kernel estimators for FM-SOLS and FM-SUR, and the BIAM approach for FM-GLS. This seems to be the logical choice for these estimators.

\begin{theorem}\label{thm:fullymodifiedOLSGLS}
For $i=1,\ldots,n$, define $\widehat{\vb}_i=\left[\vzeros_{d_i+1}\tran,T, 2 \sumtT x_{it},\ldots, s_i \sumtT x_{it}^{s_i-1} \right]\tran$, for $i=1,\ldots,n$. Also, define the $(n\times n)$ matrix $\widehat{\mDelta}_{vu}^+$ as the (implied) consistent estimator of $\mDelta_{vu}^+=\mDelta_{vu}^{}-\mDelta_{vv}^{}\mOmega_{vv}^{-1} \mOmega_{vu}^{}$.
 \begin{enumerate}[(a)]
  
  \item Define the FM-SOLS estimator as
  \begin{equation}
    \widehat{\vbeta}_{SOLS}^+ = \left( \mZ\tran \mZ \right)^{-1} \left( \mZ\tran \vy^+ -\widehat{\mA}  \, \right),
  \end{equation}
  where $\vy^+:=[\vy_{1}^{+\prime},\vy_{2}^{+\prime},\ldots,\vy_{T}^{+\prime}]\tran$ with $\vy_t^+ = \vy_t^{}-\widehat{\mOmega}_{uv}^{} \widehat{\mOmega}_{vv}^{-1} \diff \vx_t^{}$, and $\widehat{\mA}:= [\widehat{\mA}_1\tran,\ldots,\widehat{\mA}_n\tran]\tran$ with $\widehat{\mA}_i= \widehat{\mDelta}_{v_i u_i}^+\widehat{\vb}_i$ and $\widehat{\mDelta}_{v_i u_i}^+$ being the $i^{th}$ element on the main diagonal of $\widehat{\mDelta}_{v u}^+$. If Assumptions \ref{assumpt1:linearproc} and \ref{assumpt:consistentlongrun} hold, then
  \begin{equation}
   \mG_T^{-1} \left( \widehat{\vbeta}_{SOLS}^+  - \vbeta \right) \wto \left( \myint \mJ(r) \mJ(r)\tran dr \right)^{-1} \myint  \mJ(r)  d\brown_{u.v}(r).
  \end{equation}
  
  \item Define the FM-SUR estimator as
 \begin{equation}
  \widehat{\vbeta}_{SUR}^+ = \left(\mZ\tran (\mI_T\otimes \widehat{\mOmega}_{u.v}^{-1}) \mZ  \right)^{-1} \left( \mZ\tran (\mI_T\otimes \widehat{\mOmega}_{u.v}^{-1}) \vy^+ - \widetilde{\mA}^* \right),
 \end{equation}
 where $\widetilde{\mA}^*:= [\widetilde{\mA}_1^*,\ldots,\widetilde{\mA}_n^*]$ with $\widetilde{\mA}_i^*= \row_i\left(\widehat{\mDelta}_{vu}^+\right) \col_i\left( \widehat{\mOmega}_{u.v}^{-1} \right) \widehat{\vb}_i$. If Assumptions \ref{assumpt1:linearproc} and \ref{assumpt:consistentlongrun} hold, then
 \begin{equation}
  \mG_T^{-1} \left( \widehat{\vbeta}_{SUR}^+ - \vbeta \right) \wto \left( \myint \mJ(r) \mOmega_{u.v}^{-1} \mJ(r)\tran dr \right)^{-1} \myint \mJ(r) \mOmega_{u.v}^{-1} d\brown_{u.v}^{}(r).
 \end{equation}
 
 \item Define the FM-GLS estimator as
 \begin{equation}
  \widehat{\vbeta}_{FGLS}^+=\Big(\mZ'\widehat{\mSigma_{\vu}^{-1}}(q)\mZ\Big)^{-1}\left[\mZ'\widehat{\mSigma_{\vu}^{-1}}(q)\vy-\mZ'\Big(\mI_T\otimes \widehat{\mOmega}_{uu}^{-1}\widehat{\mOmega}_{uv}^{}\widehat{\mOmega}_{vv}^{-1}\Big)\vv-\widehat{\vbias}^{+}\right],
 \label{eq:FMGLSestimator}
 \end{equation}
 where $\vv := [\diff \vx_1\tran,\ldots,\diff \vx_T\tran]\tran = [\vv_1\tran,\ldots,\vv_T\tran]\tran$, and $\widehat{\vbias}^{+}=\big[\widehat{\vbias}_1^{+\prime},\dots,\widehat{\vbias}_n^{+\prime}\big]\tran$ with
 $$\widehat{\vbias}_{i}^{+}=\left[\row_i\Big(\widehat{\mSigma}_{\epsilon\eta}\Big)\col_i\Big(\widehat{\mSigma}_{\eta\eta}^{-1}\Big)-\row_i\Big(\widehat{\mDelta}_{vv}\Big)\col_i\Big(\widehat{\mOmega}_{vv}^{-1}\widehat{\mOmega}_{vu}^{}\widehat{\mOmega}_{uu}^{-1}\Big)\right]~\widehat{\vb}_i.
 $$
 If Assumptions \ref{assumpt1:linearproc}-\ref{assump4:bandingparameter} and \ref{assumpt:rT} hold, then
 \begin{equation}
  \mG_T^{-1} \left( \widehat{\vbeta}_{FGLS}^+ - \vbeta \right) \wto \left(\myint \mJ(r)\mOmega_{uu}^{-1}\mJ(r)'dr\right)^{-1} \myint \mJ(r)\mOmega_{uu}^{-1}d\brown_{u.v}^{}(r).
 \end{equation}
 \end{enumerate}
\end{theorem}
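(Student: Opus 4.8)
Here is my plan for proving the FM-GLS limiting distribution in Theorem \ref{thm:fullymodifiedOLSGLS}(c).

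The plan is to substitute $\vy=\mZ\vbeta+\vu$ into \eqref{eq:FMGLSestimator}, pre-multiply by $\mG_T^{-1}$, and analyse the resulting exact identity
\[
 \mG_T^{-1}\!\left(\widehat{\vbeta}_{FGLS}^{+}-\vbeta\right)
 =\Big(\mG_T\mZ'\widehat{\mSigma_{\vu}^{-1}}(q_T)\mZ\mG_T\Big)^{-1}\big[A_1-A_2-A_3\big],
\]
where $A_1:=\mG_T\mZ'\widehat{\mSigma_{\vu}^{-1}}(q_T)\vu$ is the ``GLS part'', $A_2:=\mG_T\mZ'\big(\mI_T\otimes\widehat{\mOmega}_{uu}^{-1}\widehat{\mOmega}_{uv}^{}\widehat{\mOmega}_{vv}^{-1}\big)\vv$ is the endogeneity correction, and $A_3:=\mG_T\widehat{\vbias}^{+}$ is the bias correction. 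I would obtain the joint weak limit of the weighting matrix and of $A_1,A_2,A_3$ and then invoke the continuous mapping theorem, checking that the three second-order-bias contributions cancel so that only $\myint\mJ(r)\mOmega_{uu}^{-1}d\brown_{u.v}^{}(r)$ survives in the numerator. The weighting matrix and $A_1$ are exactly the quantities treated in the proof of Theorem \ref{thm:infeasGLS} with $\mSigma_{\vu}^{-1}(q_T)$ replaced by its estimate; the arguments behind Theorem \ref{thm:consistentDECOMP} and the feasibility remark following it (together with Assumption \ref{assump3:residuals}) show that this replacement is asymptotically negligible, so $\mG_T\mZ'\widehat{\mSigma_{\vu}^{-1}}(q_T)\mZ\mG_T\wto\myint\mJ(r)\mOmega_{uu}^{-1}\mJ(r)'dr$ and
\[
 A_1\wto\myint\mJ(r)\mOmega_{uu}^{-1}d\brown_{u.v}^{}(r)+\myint\mJ(r)\mOmega_{uu}^{-1}\mOmega_{uv}^{}\mOmega_{vv}^{-1}d\brown_v^{}(r)+\vbias .
\]

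\emph{The endogeneity correction $A_2$.} Under Assumptions \ref{assump4:bandingparameter}--\ref{assumpt:rT} the BIAM-based long-run covariance estimators are consistent (the Berk-type construction described after Assumption \ref{assumpt:consistentlongrun}, with details in the supplementary material), so, writing $\bm\Pi:=\mOmega_{uu}^{-1}\mOmega_{uv}^{}\mOmega_{vv}^{-1}$, one has $\widehat{\mOmega}_{uu}^{-1}\widehat{\mOmega}_{uv}^{}\widehat{\mOmega}_{vv}^{-1}\pto\bm\Pi$ and $A_2$ shares the limit of $\mG_T\sum_{t=1}^{T}\mZ_t\bm\Pi\vv_t$. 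Because $\mZ_t=\diag[\vz_{1t},\dots,\vz_{nt}]$, the $i^{th}$ block of this vector equals $\mG_{i,T}\sum_{t}\vz_{it}\,\row_i(\bm\Pi)\vv_t$. The deterministic coordinates carry non-random weights $t^k$ and contribute $\myint\vd_i(r)\,\row_i(\bm\Pi)\,d\brown_v^{}(r)$ with no drift; for the $m^{th}$ stochastic coordinate ($1\le m\le s_i$) the standard limit theory for sample covariances between powers of an integrated regressor and stationary increments (as already used in the proof of Theorem \ref{thm:infeasGLS}) gives
\[
 T^{-1/2}\sumtT(x_{it}/\sqrt{T})^{m}\row_i(\bm\Pi)\vv_t
 \ \wto\ \myint\brownnormal_{v_i}^{m}(r)\,\row_i(\bm\Pi)\,d\brown_v^{}(r)
 +m\Big(\sum_{k=1}^{n}[\bm\Pi]_{ik}\,[\mDelta_{vv}]_{ik}\Big)\myint\brownnormal_{v_i}^{m-1}(r)\,dr .
\]
Collecting $m=1,\dots,s_i$ and using the definition of $\vb_i$, the drift in block $i$ equals $\big(\sum_{k}[\bm\Pi]_{ik}[\mDelta_{vv}]_{ik}\big)\vb_i$, so $A_2\wto\myint\mJ(r)\mOmega_{uu}^{-1}\mOmega_{uv}^{}\mOmega_{vv}^{-1}d\brown_v^{}(r)+\vbias^{\dagger}$, where $\vbias^{\dagger}:=[\vbias_1^{\dagger\prime},\dots,\vbias_n^{\dagger\prime}]'$ and $\vbias_i^{\dagger}=\big(\sum_{k}[\bm\Pi]_{ik}[\mDelta_{vv}]_{ik}\big)\vb_i$.

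\emph{The bias correction $A_3$ and assembly.} The $i^{th}$ block of $A_3$ is the scalar $\big[\row_i(\widehat{\mSigma}_{\epsilon\eta})\col_i(\widehat{\mSigma}_{\eta\eta}^{-1})-\row_i(\widehat{\mDelta}_{vv})\col_i(\widehat{\mOmega}_{vv}^{-1}\widehat{\mOmega}_{vu}^{}\widehat{\mOmega}_{uu}^{-1})\big]$ times $\mG_{i,T}\widehat{\vb}_i$. A partial-sum computation gives $\mG_{i,T}\widehat{\vb}_i\wto\vb_i$ (for instance $m\,T^{-(m+1)/2}\sumtT x_{it}^{m-1}\wto m\myint\brownnormal_{v_i}^{m-1}(r)\,dr$), while $\widehat{\mSigma}_{\eta\eta}$, $\widehat{\mSigma}_{\epsilon\eta}$ (sub-blocks of the fitted VAR($q_T$) innovation covariance) and the BIAM long-run estimators are consistent, so the scalar prefactor converges in probability. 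Since $\mOmega_{vv}^{-1}\mOmega_{vu}^{}\mOmega_{uu}^{-1}=\bm\Pi'$, we have $\row_i(\mDelta_{vv})\col_i(\bm\Pi')=\sum_{k}[\mDelta_{vv}]_{ik}[\bm\Pi]_{ik}$, which with $\vbias_i=\row_i(\mSigma_{\epsilon\eta})\col_i(\mSigma_{\eta\eta}^{-1})\vb_i$ yields $A_3\wto\vbias-\vbias^{\dagger}$. These limits hold jointly, each being a continuous functional of the partial-sum process of $\vxi_t$ and of the consistent nuisance estimates; on adding them, the term $\myint\mJ\mOmega_{uu}^{-1}\mOmega_{uv}\mOmega_{vv}^{-1}d\brown_v$ cancels between $A_1$ and $A_2$, and $\vbias-\vbias^{\dagger}$ from $A_3$ annihilates $\vbias$ in $A_1$ and $\vbias^{\dagger}$ in $A_2$, so $A_1-A_2-A_3\wto\myint\mJ(r)\mOmega_{uu}^{-1}d\brown_{u.v}^{}(r)$; the continuous mapping theorem applied to the ratio gives the claim.

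\emph{Expected main obstacle.} The crux is the coupling of the last two steps: one must pin down the \emph{exact} drift the endogeneity correction generates --- a stochastic-integral functional of the mutually correlated integrated regressors and their own increments --- and check that, coordinate by coordinate and with the correct $\mDelta_{vv}$ and $\mOmega$ factors, it matches the extra term subtracted inside $\widehat{\vbias}^{+}$. That cancellation is only meaningful once the BIAM long-run covariance estimators $\widehat{\mOmega}_{q_T}$ and $\widehat{\mDelta}_{q_T,r_T}$ have been shown consistent under the joint rate restrictions $q_T^3/T\to 0$, $r_Tq_T^3/T\to 0$, $r_T=O(q_T)$; this consistency result (in the supplementary material) is the real technical workhorse ensuring that every nuisance-estimation error entering $A_2$ and $A_3$ is $o_p(1)$.
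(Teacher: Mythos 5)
Your decomposition into the GLS part, the endogeneity correction and the bias correction, the limits you assign to each piece (including the explicit drift $\sum_{k}\big[\mOmega_{uu}^{-1}\mOmega_{uv}^{}\mOmega_{vv}^{-1}\big]_{ik}\big[\mDelta_{vv}\big]_{ik}\,\vb_i$ generated by the endogeneity correction and its match with the subtracted term in $\widehat{\vbias}^{+}$), and the final cancellation are exactly the route the paper takes. The only step where the paper is visibly more careful is in showing that $\mG_T\mZ'\widehat{\mSigma_{\vu}^{-1}}(q)\vu$ shares the limit of its infeasible counterpart --- it reruns the Beveridge--Nelson argument with the estimated polynomial $\widehat{\lagpol{A}}_q(L)$ rather than relying on the operator-norm bound of Theorem \ref{thm:consistentDECOMP} alone (which by itself is too weak against $\|\vu\|=O_p(\sqrt{T})$) --- but this is the same idea you gesture at when you invoke ``the arguments behind'' that theorem.
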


The FM-GLS estimator is new to the seemingly unrelated CPR literature, whereas the FM-SOLS and FM-SUR estimators have recently appeared in \cite{wagnergrabarczykhong2019}. Theorem \ref{thm:fullymodifiedOLSGLS} indicates that all three estimators have a zero-mean Gaussian mixture limiting distribution implying that standard inference is applicable for each. However, we also see from Theorem \ref{thm:fullymodifiedOLSGLS} that the limiting distributions are generally different because different types of weighing are used in the construction of the estimators.\footnote{There are special cases in which some (pairs of) estimators become asymptotically equivalent. For example, if $n=1$, then all estimators are asymptotically equivalent because the weighting matrices $\mOmega_{u.v}^{-1}$ and $\mOmega_{uu}^{-1}$ are now scalars. Also, under exogeneity, we have $\mOmega_{uu} = \mOmega_{u.v}$ and the FM-SUR an FM-GLS estimators share the same limiting distribution.} 

For completeness, we also detail how the FM-GLS estimator can be used to test linear hypotheses. A formal presentation of such a result is more involved because of the different convergence rates of the individual parameter estimators. That is, the parameters with the lowest convergence rate will dominate the asymptotic distribution and one should take care to avoid a degenerate limiting distribution. We will rule out such complications by considering hypothesis tests on individual parameters.\footnote{For general linear hypothesis, we refer the reader to \cite{simsstockwatson1990} where a reordering based on convergence rates is used to establish the limiting distribution of the Wald $F$ statistic for general linear hypothesis. The same approach is applicable in our setting but we will not explore this in greater detail.}  Therefore, let $\mR$ denote a $(k \times s)$ selection matrix in which every row contains a single 1 and zeros otherwise. The null hypothesis $\mR \vbeta= \vr$ can be tested using the standard chi-squared limiting distribution of the Wald statistic (Theorem \ref{thm:test}). These tests are practically relevant. For example, exclusion restrictions of the type $\mR \vbeta=\vzeros$ allow us to test whether the cointegrating relation is linear.

\begin{theorem}\label{thm:test}
	Consider the null hypothesis $H_0:\mR\vbeta=\vr$, which imposes $k$ linearly independent restrictions. Under the assumptions of Theorem \ref{thm:fullymodifiedOLSGLS}(c), the Wald test statistic
	\begin{equation}
	\calW=\Big(\mR\widehat{\vbeta}_{FGLS}^+-\vr\Big)'\widehat{\mPhi}^{-1}\Big(\mR\widehat{\vbeta}_{FGLS}^+-\vr\Big)\wto \chi_k^2,
	\end{equation}
	where $\widehat{\mPhi}=\mR\left[\mZ\tran \left( \mI_T \otimes \widehat{\mOmega}_{uu}^{-1} \right) \mZ \right]^{-1}\left[\mZ'\Big(\mI_T\otimes \widehat{\mOmega}_{uu}^{-1}\widehat{\mOmega}_{u.v}^{}\widehat{\mOmega}_{uu}^{-1}\Big)\mZ\right]\left[\mZ\tran \left( \mI_T \otimes \widehat{\mOmega}_{uu}^{-1} \right) \mZ \right]^{-1}\mR'$.
\end{theorem}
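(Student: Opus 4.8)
The strategy is to combine the limiting distribution of $\widehat{\vbeta}_{FGLS}^+$ from Theorem \ref{thm:fullymodifiedOLSGLS}(c) with a consistency argument for $\widehat{\mPhi}$, after first rescaling everything so that the degenerate-rate issue disappears for the restricted subvector. First I would observe that because $\mR$ is a selection matrix, each of the $k$ restrictions picks out a single coordinate of $\vbeta$, and each such coordinate has its own convergence rate encoded in the diagonal of $\mG_T$. Let $\mG_{R,T}$ denote the $(k\times k)$ diagonal submatrix of $\mG_T$ corresponding to the rows selected by $\mR$, so that $\mR\mG_T=\mG_{R,T}\mR$ (this identity holds precisely because $\mR$ is a $0/1$ selection matrix and $\mG_T$ is diagonal). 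Then
\begin{equation}
 \mG_{R,T}^{-1}\big(\mR\widehat{\vbeta}_{FGLS}^+-\vr\big)=\mR\,\mG_T^{-1}\big(\widehat{\vbeta}_{FGLS}^+-\vbeta\big)\quad\text{under }H_0,
\end{equation}
and by Theorem \ref{thm:fullymodifiedOLSGLS}(c) together with the continuous mapping theorem this converges weakly to $\mR\,\mathcal{D}$, where $\mathcal{D}$ is the mixed-Gaussian limit $\left(\myint \mJ(r)\mOmega_{uu}^{-1}\mJ(r)'dr\right)^{-1}\myint \mJ(r)\mOmega_{uu}^{-1}d\brown_{u.v}^{}(r)$. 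Conditionally on $\brown_v$ (equivalently on the sigma-field generated by $\mJ(\cdot)$), $\mathcal{D}$ is Gaussian with mean zero and conditional covariance $\mathcal{V}=\left(\myint \mJ\mOmega_{uu}^{-1}\mJ'dr\right)^{-1}\left(\myint \mJ\mOmega_{uu}^{-1}\mOmega_{u.v}\mOmega_{uu}^{-1}\mJ'dr\right)\left(\myint \mJ\mOmega_{uu}^{-1}\mJ'dr\right)^{-1}$, using that $\brown_{u.v}$ is independent of $\brown_v$ with covariance $\mOmega_{u.v}$.

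Next I would show that the normalized estimator of the covariance, $\mG_{R,T}^{-1}\widehat{\mPhi}\,\mG_{R,T}^{-1}$, converges in probability (jointly with the numerator, hence stably / conditionally on the limiting sigma-field) to $\mR\,\mathcal{V}\mR'$. The key inputs are the standard weak-convergence facts already developed for this model: $\mG_T^{-1}\mZ'\mZ\,\mG_T^{-1}\wto\myint\mJ(r)\mJ(r)'dr$ and, more generally, $\mG_T^{-1}\mZ'(\mI_T\otimes\mM)\mZ\,\mG_T^{-1}\wto\myint\mJ(r)\mM\mJ(r)'dr$ for any fixed $(n\times n)$ matrix $\mM$ (these follow from Assumption \ref{assumpt1:linearproc} and the FCLT for powers of integrated processes used in the proof of Theorem \ref{thm:infeasGLS}), combined with the consistency of the long-run covariance estimators $\widehat{\mOmega}_{uu}\pto\mOmega_{uu}$ and $\widehat{\mOmega}_{u.v}\pto\mOmega_{u.v}$ (Assumption \ref{assumpt:consistentlongrun}, or the BIAM counterpart under Assumptions \ref{assump3:residuals}--\ref{assump4:bandingparameter} and \ref{assumpt:rT}; here one uses $\widehat{\mOmega}_{u.v}=\widehat{\mOmega}_{uu}-\widehat{\mOmega}_{uv}\widehat{\mOmega}_{vv}^{-1}\widehat{\mOmega}_{vu}$). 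Plugging these in and using the continuous mapping theorem for matrix inversion (the limits are a.s. nonsingular because $\mOmega\succ0$ and $\myint\mJ\mJ'dr\succ0$ a.s.) yields
\begin{equation}
 \mG_{R,T}^{-1}\widehat{\mPhi}\,\mG_{R,T}^{-1}\wto \mR\Big(\myint\mJ\mOmega_{uu}^{-1}\mJ'dr\Big)^{-1}\Big(\myint\mJ\mOmega_{uu}^{-1}\mOmega_{u.v}\mOmega_{uu}^{-1}\mJ'dr\Big)\Big(\myint\mJ\mOmega_{uu}^{-1}\mJ'dr\Big)^{-1}\mR'=\mR\,\mathcal{V}\mR'.
\end{equation}

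Finally I would assemble the pieces. Writing $\calW=\big[\mG_{R,T}^{-1}(\mR\widehat{\vbeta}_{FGLS}^+-\vr)\big]'\big[\mG_{R,T}^{-1}\widehat{\mPhi}\,\mG_{R,T}^{-1}\big]^{-1}\big[\mG_{R,T}^{-1}(\mR\widehat{\vbeta}_{FGLS}^+-\vr)\big]$ — note the $\mG_{R,T}$ factors cancel algebraically, so this rewriting is exact and introduces no approximation — the joint weak convergence of numerator and (inverse) middle matrix, via the continuous mapping theorem, gives $\calW\wto (\mR\mathcal{D})'(\mR\mathcal{V}\mR')^{-1}(\mR\mathcal{D})$. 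Conditioning on the limiting sigma-field generated by $\brown_v$, the vector $\mR\mathcal{D}$ is $N(\vzeros,\mR\mathcal{V}\mR')$, so the quadratic form is exactly $\chi_k^2$ conditionally; since this conditional law does not depend on the conditioning variable, it is also the unconditional limit, i.e. $\calW\wto\chi_k^2$. The one point requiring care is the \emph{joint} convergence of $\big(\mG_{R,T}^{-1}(\mR\widehat{\vbeta}_{FGLS}^+-\vr),\,\mG_{R,T}^{-1}\widehat{\mPhi}\mG_{R,T}^{-1}\big)$ to $(\mR\mathcal{D},\mR\mathcal{V}\mR')$ with the correct dependence structure — in particular that the numerator's limit is mixed-Gaussian with conditional covariance exactly the limit of the denominator. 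This is handled by noting that all the stochastic-integral and sample-moment terms entering both objects are continuous functionals of the \emph{same} underlying partial-sum processes that converge jointly by the FCLT (as already established in the proofs of Theorems \ref{thm:infeasGLS} and \ref{thm:fullymodifiedOLSGLS}), so a single application of the continuous mapping theorem delivers the joint limit; the main obstacle is bookkeeping the rate normalization so that $\mG_{R,T}$ cancels cleanly and no component of $\mR\mathcal{D}$ degenerates, which is exactly why the statement is restricted to selection matrices $\mR$.
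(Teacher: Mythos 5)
Your proposal is correct and follows essentially the same route as the paper's proof: you rescale by the selection-restricted scaling matrix (your $\mG_{R,T}^{-1}$ is exactly the paper's $\mR\mG_T^{-1}\mR\tran$, exploiting that $\mG_T^{-1}$ and $\mR\tran\mR$ commute with $\mR\mR\tran=\mI_k$), invoke the mixed-Gaussian limit of Theorem \ref{thm:fullymodifiedOLSGLS}(c) conditionally on $\calF_v=\sigma(\brown_v)$, show the normalized $\widehat{\mPhi}$ converges to the conditional covariance, and conclude that the conditional $\chi_k^2$ law is also the unconditional one. Your added care about joint/stable convergence of numerator and denominator is a point the paper's proof leaves implicit, but it is a refinement of the same argument rather than a different approach.
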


\subsection{Testing the Null of Cointegration} \label{sec:cointtest}
Stationarity tests are used to avoid spurious regressions and to verify the correct specification of the cointegrating relation. To test for stationarity of the seemingly unrelated cointegrating polynomial regressions (SUCPR) errors, we combine the test statistic from \cite{nyblomharvey2000} with the sub-sampling approach found in \cite{choisaikkonen2010} and \cite{wagnerhong2016}. We consider three test statistics. To treat all test statistics in a unified framework, we define
\begin{equation}
 \vvarphi_{j,b}(\{\vx\}) = \left[ \vx_j\tran ,  \sum_{s=j}^{j+1} \vx_s\tran,\ldots, \sum_{s=j}^{j+b-1} \vx_s\tran \right]\tran,
\end{equation}
that is, a vector of length $nb$ stacking the cumulative sums of $\{\vx_j,\ldots, \vx_{j+b-1} \}$. If the true innovations $\vu_1,\ldots,\vu_T$ were observed, then we could use the full-sample KPSS-type of test statistic $\frac{1}{T^2} \vvarphi_{1,T}(\{\vu\})\tran(\mI_T\otimes \widehat{\mOmega}_{uu}^{-1})\vvarphi_{1,T}(\{\vu\})=\tr\left[  \widehat{\mOmega}_{uu}^{-1} \frac{1}{T^2} \sum_{t=1}^T \left( \sum_{s=1}^t \vu_s \right)\left( \sum_{s=1}^t \vu_s \right) \tran \right]$ to test for stationarity of the innovations. Under the null of stationarity, this test statistic would converge weakly to $\myint \| \bm W(r) \|^2 dr$ with $\bm W(r)$ denoting an $n$-dimensional standard Brownian motion. This limiting distribution is free of nuisance parameters and the cumulative distribution function is available as a series expansion (see the Supplement).

The innovations $\vu_1,\ldots,\vu_T$ are only available when cointegrating relations are pre-specified. If these coefficients are estimated, then this additional parameter uncertainty will contaminate the limiting distribution with nuisance parameters.\footnote{There are exceptions. \cite{shin1994} reports a nuisance parameter free limiting distribution for a single-equation linear cointegrating relation. This remains true if only a single integrated variable enters the cointegrating regression with a higher power, see Proposition 5 in \cite{wagnerhong2016}.} The idea behind the subsampling approach is to construct a test statistic incorporating $b = b_T$ residuals while computing parameter estimators from all $T$ observations. If $b_T$ increases slowly with sample size, then the parameter estimation error will be negligible relative to the randomness in the errors and the asymptotic distribution remains $\myint \| \bm W(r) \|^2 dr$.
 
The three KPSS-type of test are based on the following residuals: $\hat{\vu}_{t,SOLS}^+= \vy_t^+-  \mZ_t \widehat{\vbeta}_{SOLS}^+$, $\hat{\vu}_{t,SUR}^+= \vy_t^+ -  \mZ_t \widehat{\vbeta}_{SUR}^+$, and $\hat{\vu}_{t,FGLS}=\vy_t-  \mZ_t \widehat{\vbeta}_{FGLS}^+$. The test statistic are:
\begin{equation}
 K_{j,b_T}^i = \frac{1}{b_T^2}  \vvarphi_{j,b_T}(\{\hat{\vu}_i^+\})\tran \left(\mI_{b_T} \otimes \widehat{\mOmega}_{u.v}^{-1} \right)\vvarphi_{j,b_T}(\{\hat{\vu}_i^+\}),\qquad\qquad \text{for }i\in\{SOLS, SUR\},
\label{eq:KPSSresidualtype}
\end{equation}
and
\begin{equation}
 K_{j,b_T}^{BIAM} = \frac{1}{b_T^2}  \vvarphi_{j,b_T}(\{\hat{\vu}_{FGLS}\})\tran \widehat{\mSigma_{\vu}^{-1}}(q_T,b_T) \vvarphi_{j,b_T}(\{\hat{\vu}_{FGLS}\}),
 \label{eq:KPSSbiam}
\end{equation}
where $\widehat{\mSigma_{\vu}^{-1}}(q_T,b_T)$ is the $(n b_T\times n b_T)$ submatrix of $\widehat{\mSigma_{\vu}^{-1}}(q_T)$ obtained by selecting the rows and columns related to all time indices in the set $\{ n(T-b_T)+1, n(T-b_T)+2 ,\ldots,nT\}$. The test statistic in \eqref{eq:KPSSbiam} fits naturally into the FM-GLS estimation framework.

\begin{theorem}\label{thm:kpss_subtest}
 Let the assumptions from Theorem \ref{thm:fullymodifiedOLSGLS} hold.
 \begin{enumerate}[(a)]
  \item  If $\frac{1}{b_T}+\frac{b_T}{T}\to 0$ as $T\to \infty$, then
  $$
   K_{j,b_T}^i \wto \myint \| \wiener(r) \|^2 dr, \qquad 1 \leq j \leq T-b_T+1, \qquad \text{for }i\in\{SOLS, SUR \}.
  $$
  \item If $\frac{q_T}{b_T}+\frac{b_T}{T}\to 0$ as $T\to \infty$, then $ K_{j,b_T}^{BIAM}\wto \myint \| \wiener(r) \|^2 dr$ for any $1 \leq j \leq T-b_T+1$.
 \end{enumerate}
\end{theorem}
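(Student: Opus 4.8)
The plan is to reduce each of the three statistics to an ``oracle'' version — true errors in place of residuals, and the population (long‑run) weighting in place of the estimated one — and then to read off the limit from an invariance principle for the partial sums of the relevant stationary linear process via the continuous mapping theorem. For the first reduction, write $\vu_t^+:=\vu_t-\widehat{\mOmega}_{uv}\widehat{\mOmega}_{vv}^{-1}\vv_t$, so that $\hat{\vu}_{t,i}^+=\vu_t^+-\mZ_t(\widehat{\vbeta}_i^+-\vbeta)$ for $i\in\{SOLS,SUR\}$ and $\hat{\vu}_{t,FGLS}=\vu_t-\mZ_t(\widehat{\vbeta}_{FGLS}^+-\vbeta)$. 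For a window $\{j,\dots,j+b_T-1\}\subseteq\{1,\dots,T\}$ I would bound the estimation‑error part of the $m$‑th partial sum using $\widehat{\vbeta}_\bullet^+-\vbeta=O_p(\mG_T)$ componentwise (Theorem~\ref{thm:fullymodifiedOLSGLS}) together with the entrywise orders $O(b_T T^{p})$ and $O_p(b_T T^{p/2})$ of $\sum_{s=j}^{j+m-1}\mZ_s$ at the positions attached to $t^p$ and $x_{it}^p$ respectively (using $\sup_{t\le T}|x_{it}|=O_p(T^{1/2})$); block by block this yields $\sup_{m\le b_T}\bigl\|\sum_{s=j}^{j+m-1}\mZ_s(\widehat{\vbeta}_\bullet^+-\vbeta)\bigr\|=O_p(b_T T^{-1/2})=o_p(b_T^{1/2})$ since $b_T/T\to0$, which — with $\bigl\|\sum_{s=j}^{j+m-1}\vu_s^+\bigr\|=O_p(b_T^{1/2})$ uniformly in $m$ and the Cauchy--Schwarz inequality — makes the residual‑versus‑error discrepancy $o_p(1)$ after the $b_T^{-2}$ normalisation (for $K^{BIAM}$, once the discrepancy is also passed through the banded operators of the next step, whose norms are $O(1)$). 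It therefore suffices to analyse the oracle statistics, with $\vu_t^+$ (resp.\ $\vu_t$) in place of the residuals.

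For part (a) the weighting $\mI_{b_T}\otimes\widehat{\mOmega}_{u.v}^{-1}$ is block diagonal, so the oracle statistic equals $\frac{1}{b_T}\sum_{m=1}^{b_T}\bigl(b_T^{-1/2}\sum_{s=j}^{j+m-1}\vu_s^+\bigr)\tran\widehat{\mOmega}_{u.v}^{-1}\bigl(b_T^{-1/2}\sum_{s=j}^{j+m-1}\vu_s^+\bigr)$. Using $\widehat{\mOmega}_{uv}\pto\mOmega_{uv}$ and $\widehat{\mOmega}_{vv}\pto\mOmega_{vv}$ (Assumption~\ref{assumpt:consistentlongrun}, or the BIAM estimators), the functional CLT~\eqref{eq:brownianequivalences} for $\vxi_t=[\vu_t\tran,\vv_t\tran]\tran$, Slutsky's theorem, and the strict stationarity of $\{\vu_t\}$ (so that the limit does not depend on the possibly $T$‑dependent $j$), one obtains $b_T^{-1/2}\sum_{s=j}^{j+[rb_T]}\vu_s^+\wto\brown_u(r)-\mOmega_{uv}\mOmega_{vv}^{-1}\brown_v(r)=\brown_{u.v}(r)\distrequal\mOmega_{u.v}^{1/2}\wiener(r)$ in $D[0,1]$. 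Since $\widehat{\mOmega}_{u.v}^{-1}\pto\mOmega_{u.v}^{-1}$, the continuous mapping theorem gives $K_{j,b_T}^i\wto\myint\brown_{u.v}(r)\tran\mOmega_{u.v}^{-1}\brown_{u.v}(r)\,dr=\myint\|\wiener(r)\|^2\,dr$.

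Part (b) is the substantive step. I would first replace $\widehat{\mSigma_{\vu}^{-1}}(q_T,b_T)$ by the population banded inverse $\mSigma_{\vu}^{-1}(q_T,b_T)$: a principal submatrix cannot increase the spectral norm, so Theorem~\ref{thm:consistentDECOMP} gives $\bigl\|\widehat{\mSigma_{\vu}^{-1}}(q_T,b_T)-\mSigma_{\vu}^{-1}(q_T,b_T)\bigr\|=O_p(\sqrt{q_T^3/T})$, and since $\bigl\|\vvarphi_{j,b_T}(\{\vu\})\bigr\|^2=\sum_{m=1}^{b_T}\bigl\|\sum_{s=j}^{j+m-1}\vu_s\bigr\|^2=O_p(b_T^2)$, the induced change in $K_{j,b_T}^{BIAM}$ is $O_p(\sqrt{q_T^3/T})=o_p(1)$ by Assumption~\ref{assump4:bandingparameter}. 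Next, since the retained block sits at the bottom of the full $(nT\times nT)$ array, substituting the modified Cholesky block decomposition $\mSigma_{\vu}^{-1}(q_T)=\MChol_{\vu}\tran(q_T)\SChol_{\vu}^{-1}(q_T)\MChol_{\vu}(q_T)$ and selecting rows and columns shows that, for $T$ large, $\frac{1}{b_T^2}\vvarphi_{j,b_T}(\{\vu\})\tran\mSigma_{\vu}^{-1}(q_T,b_T)\vvarphi_{j,b_T}(\{\vu\})=\frac{1}{b_T^2}\sum_{m=1}^{b_T}\bm{w}_m\tran\mS(q_T)^{-1}\bm{w}_m$ with $\bm{w}_m:=\sum_{s=j}^{j+m-1}\vu_s-\sum_{\ell=1}^{\min(q_T,m-1)}\mA_\ell(q_T)\sum_{s=j}^{j+m-\ell-1}\vu_s$. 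The crucial observation is that for $m>q_T$ one can write $\bm{w}_m=\bigl(\mI_n-\sum_{\ell=1}^{q_T}\mA_\ell(q_T)\bigr)\sum_{s=j}^{j+m-1}\vu_s+\sum_{\ell=1}^{q_T}\mA_\ell(q_T)\sum_{s=j+m-\ell}^{j+m-1}\vu_s$, where the last term is bounded in mean square uniformly in $m$ (hence negligible, in aggregate, against the $O_p(b_T^{1/2})$ partial sums), while by the population‑level VAR($\infty$) approximation results underlying Theorem~\ref{thm:consistentDECOMP} (under Assumption~\ref{assumpt1:linearproc}(b)) $\sum_{\ell=1}^{q_T}\mA_\ell(q_T)\to\mI_n-\lagpol{A}(1)$ and $\mS(q_T)\to\mSigma_{\eta\eta}$; the $q_T$ ``edge'' terms with $m\le q_T$ contribute $O_p(q_Tb_T)=o_p(b_T^2)$ because $q_T/b_T\to0$. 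Combined with the identity $\mOmega_{uu}^{-1}=\lagpol{A}(1)\tran\mSigma_{\eta\eta}^{-1}\lagpol{A}(1)$ implied by~\eqref{eq:covariancetransform}, the quadratic form reduces to $\frac{1}{b_T}\sum_{m=1}^{b_T}\bigl(b_T^{-1/2}\sum_{s=j}^{j+m-1}\vu_s\bigr)\tran\mOmega_{uu}^{-1}\bigl(b_T^{-1/2}\sum_{s=j}^{j+m-1}\vu_s\bigr)+o_p(1)$, which by $b_T^{-1/2}\sum_{s=j}^{j+[rb_T]}\vu_s\wto\brown_u(r)\distrequal\mOmega_{uu}^{1/2}\wiener(r)$ and continuous mapping converges to $\myint\brown_u(r)\tran\mOmega_{uu}^{-1}\brown_u(r)\,dr=\myint\|\wiener(r)\|^2\,dr$.

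I expect the main obstacle to be the reduction in part (b): showing that the banded inverse autocovariance matrix of $\{\vu_t\}$, sandwiched between partial‑sum vectors and normalised by $b_T^{-2}$, behaves asymptotically like $\mI_{b_T}\otimes\mOmega_{uu}^{-1}$. This hinges on the slow variation of the partial sums (consecutive ones differ only by the $O_p(1)$ increments $\vu_s$ while themselves being $O_p(b_T^{1/2})$), on the convergence of the finite‑order VAR quantities $\sum_{\ell\le q_T}\mA_\ell(q_T)$ and $\mS(q_T)$ to their VAR($\infty$) limits $\mI_n-\lagpol{A}(1)$ and $\mSigma_{\eta\eta}$, and on a careful accounting of the $O(q_T)$ boundary rows of $\MChol_{\vu}(q_T)$, negligible precisely because $q_T/b_T\to0$. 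The remaining ingredients — part (a), the residual‑for‑error replacement, and the final weak‑convergence steps — are routine by comparison.
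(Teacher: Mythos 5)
Your proposal is correct and follows essentially the same route as the paper: replace the estimated BIAM by its population counterpart via the principal-submatrix (Poincar\'e) bound and Theorem \ref{thm:consistentDECOMP}, show the residual-versus-error discrepancy is negligible using the FM convergence rates and $b_T/T\to 0$, and then unwind the modified Cholesky structure of $\mSigma_{\vu}^{-1}(q_T,b_T)$ on the partial sums via the Beveridge--Nelson decomposition (your explicit split of $\bm{w}_m$ for $m>q_T$ is exactly $\lagpol{A}_{q_T}(1)S_m+\widetilde{\lagpol{A}}_{q_T}(L)\vu_m$), with the $O(q_T)$ boundary rows absorbed by $q_T/b_T\to 0$. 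The only cosmetic difference is that the paper reduces further to the i.i.d.\ partial sums of $\veta_t$ through $\lagpol{C}(L)=[\lagpol{A}(L)]^{-1}$ before invoking the FCLT, whereas you apply the linear-process FCLT to $\sum\vu_s$ directly and use the identity $\mOmega_{uu}^{-1}=\lagpol{A}(1)\tran\mSigma_{\eta\eta}^{-1}\lagpol{A}(1)$; the two are equivalent.
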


A sample of size $T$ allows for up to $M=\lfloor T/b_T\rfloor$ series of nonoverlapping blocks of residuals of length $b_T$. Similarly to \cite{choisaikkonen2010}, we apply the Bonferroni procedure to use all these series and thereby increase power. The approach is applicable to any of the three test statistics in Theorem \ref{thm:kpss_subtest}. As such, we keep the notation general and use a generic $K_j$ to denote a test statistic based on the $j^{th}$ subseries, $j=1,2,\ldots,M$. In the Bonferrroni procedure we compute $K_{max}=\left\{K_1,K_2,\ldots,K_M\right\}$ and do not reject the null hypothesis whenever $K_{max}\leq c_{\alpha/M}$ with $c_{\alpha/M}$ defined by $\p\left(\myint \| \wiener(r) \|^2 dr\geq c_{\alpha/M}\right)=\alpha/M$. The Bonferroni inequality implies $\lim_{T\to \infty} \p\left(K_{max} \leq c_{\alpha/M} \right) \geq 1-  \lim_{T\to\infty} \sum_{j=1}^M\p\left(K_j>c_{\alpha/M} \right)=1- \alpha$ and we see that the probability of a type-I error does not exceed the significance level $\alpha$.

\begin{remark}
We suggest to follow \cite{choisaikkonen2010} in terms of the implementation of the subsampling approach. That is, the block size $b_T$ is selected using the minimum volatility rule by \cite{romanowolf2001}. For this particular block size we subsequently select subsamples by taking non-overlapping blocks from alternatively the start and the end of the sample.
\end{remark}

\begin{remark}
The limiting results in Theorem \ref{thm:kpss_subtest} guarantee a correct asymptotic size. Our simulations show (1) that these tests have power against various alternative hypotheses and (2) that power increases with sample size. A theoretical investigation of the power properties is outside of the scope of this paper.
\end{remark}

\section{Simulations} \label{sec:simulations}

We now study the finite sample performance of the estimators and stationarity tests. First, we compare the FM-GLS estimator with the FM-SOLS and FM-SUR estimators from \cite{wagnergrabarczykhong2019}. All long-run covariance matrices are computed using a Bartlett kernel and the automatic bandwidth selection approach due to \cite{andrews1991}. For FM-GLS, the banding parameter $q_T$ is selected using the subsampling and risk-minimization approach explained in section 5 from \cite{bickellevina2008}.\footnote{More details concerning the implementation can be found in the Supplement.} Infeasible counterparts of the estimator are constructed assuming the knowledge of the true serial correlation and/or cross-sectional dependence pattern. These estimators are denoted by infSOLS, infSUR, and infGLS. Second, we look at the cointegration tests. We consider three test statistics: $K^{SOLS}$ and $K^{SUR}$ use the residuals as in \eqref{eq:KPSSresidualtype}, whereas $K^{BIAM}$ employs the pre-filtered residuals from \eqref{eq:KPSSbiam}. All tests are implemented with minimum volatility block size selection and Bonferroni correction.

We consider $T\in\{100,200,500\}$ and $n\in\{3,5\}$. All tests are performed at a nominal significant level of $5\%$. For stationary processes, a presample of 200 observations is used to remove the influence of the starting values. All results are based on $2.5\times 10^4$ Monte Carlo replicates.

\subsection{Monte Carlo Designs}

We generate data according to a quadratic seemingly unrelated CPR. That is, we adopt the DGP in \eqref{eq:basemodel} with $\vz_{it}=\big[1,t,x_{it},x_{it}^2\big]\tran$. The integrated variables satisfy $\vx_0=\vzeros$ and $\diff \vx_t = \vv_t$. We explore two error processes.

\bigskip
\noindent\textbf{Setting A (Errors as in \cite{wagnergrabarczykhong2019})}: As a benchmark, we revisit the simulation setting in \cite{wagnergrabarczykhong2019} and generate innovations according to
\begin{equation} 
\vu_t=\rho_1\vu_{t-1}+\vepsi_{t}+\rho_2\ve_t,\qquad \vv_t=\ve_t+0.5\ve_{t-1},
\end{equation}
where $\vepsi_t\stackrel{i.i.d.}{\sim}\rN\big(\vzeros,\mSigma(\rho_3)\big)$, $\ve_t\stackrel{i.i.d.}{\sim}\rN\big(\vzeros,\mSigma(\rho_4)\big)$ and
\begin{equation}\label{eq:toeplitz_structure}
\mSigma(\rho)=\begin{bmatrix}
1      & \rho   & \cdots & \rho\\
\rho   &   1    & \cdots & \rho\\
\vdots & \vdots & \ddots & \vdots\\ 
\rho   & \rho   & \cdots & 1\\
\end{bmatrix}
\end{equation}
is a symmetric Toeplitz matrix. The parameter $\rho_1$ controls the level of serial correlation and $\rho_2$ measures the degree of endogeneity. The parameters $\rho_3$ and $\rho_4$ indicate the extent of correlation across equations induced through $\vepsi_t$ and $\ve_t$, respectively. For simplicity, we assume identical values $\rho_1=\rho_2=\rho_3=\rho_4=\rho\in\{0,0.3,0.6,0.8\}$. The true coefficient vector is $\vbeta=\big[\vbeta_1',\dots,\vbeta_n'\big]'$, where $\vbeta_i=[1,1,5,\beta_{i,4}]'$ with $\beta_{i,4}=-0.3$, $i=1,\dots,n$. 

\bigskip
\noindent\textbf{Setting B (VARMA Errors)}: To further investigate the importance of serial correlation, we consider a second specification of the innovation process:
\begin{equation}
\vu_t=\mLambda_1\vu_{t-1}+\veta_t+\mLambda_2\veta_{t-1},\qquad \vv_t=\mLambda_3\vv_{t-1}+\vepsi_{t},
\end{equation}
where $\veta_t$ and $\vepsi_{t}$ are generated as $\left[\begin{smallmatrix}
\veta_t\\
\vepsi_t
\end{smallmatrix}\right]\stackrel{i.i.d.}{\sim}
\rN\big(\vzeros,\mSigma(\theta)\big)$ and $\mSigma(\theta)\in \SR^{2n\times 2n}$ as in \eqref{eq:toeplitz_structure} but with parameter $\theta$. The matrices $\mLambda_i$ ($i=1,2,3$) are generated independently and similarly to \cite{chang2004}. That is, we take the following three steps:
\begin{enumerate}[(a)]
	\item Generate an $n\times n$ random matrix $\mU_i$ from $\text{U}[0,1]$ and construct the orthogonal matrix $\mH_i=\mU_i^{}\Big(\mU_i'\mU_i^{} \Big)^{-1/2}$.
	\item Generate $n$ eigenvalues $\lambda_{i1},\dots,\lambda_{in}\stackrel{i.i.d.}{\sim}\text{U}\big[\underline{\lambda},\bar{\lambda}\big]$.
	\item Let $\mL_i=\diag\left(\lambda_{i1},\dots,\lambda_{in}\right)$ and compute $\mLambda_i=\mH_i^{}\mL_i^{}\mH_i'$. 
\end{enumerate}
The parameter $\theta\in\{0.3,0.5\}$ governs regressor-error correlation and cross-equation correlation. The amount of serial correlation is specified through $\underline{\lambda}$ and $\bar{\lambda}$. The three scenarios $\big(\underline{\lambda},\bar{\lambda}\big)\in \big\{\left(0.1,0.5\right),\left(0.5,0.8\right),\left(0.8,0.95\right)\big\}$ steadily increase the autocorrelation in the generated data.

\bigskip
\noindent\textbf{Setting C (Cointegration Tests)}: We continue to construct innovations according to Setting B. Moreover, we fix $\left[\begin{smallmatrix}
\veta_t\\
\vepsi_t
\end{smallmatrix}\right]\stackrel{i.i.d.}{\sim}
\rN\big(\vzeros,\mSigma(\theta)\big)$ with $\theta=0.3$, and we construct the matrices $\mLambda_2$ and $\mLambda_3$ using $\big(\underline{\lambda},\bar{\lambda}\big)=(0.1,0.5)$. The eigenvalues of $\mLambda_1$ are varied to explore both size and power properties. We always estimate a \emph{quadratic} seemingly unrelated CPR.

\begin{description}
	\item[Size DGP.] We generate the eigenvalues of $\mLambda_1$ as before. That is, take $\lambda_{11},\dots,\lambda_{1n}\stackrel{i.i.d.}{\sim}\text{U}\big[\underline{\lambda},\bar{\lambda}\big]$, where $\big(\underline{\lambda},\bar{\lambda}\big)\in \left\{\left(0.1,0.5\right),\left(0.5,0.8\right),\left(0.8,0.95\right)\right\}$.
	\item[Power DGP1.] We set $\lambda_{1j}=1$ for $1\leq j\leq J_1$ and generate $\lambda_{1j}\stackrel{i.i.d.}{\sim}\text{U}\big[0.1,0.5\big]$ for $J_1+1\leq j\leq n$. The integer $J_1\in\{1,2,n\}$ represents the number of unit roots in $\{\vu_t\}$.
	\item[Power DGP2.] The eigenvalues of $\mLambda_1$ are sampled as $\lambda_{11},\dots,\lambda_{1n}\stackrel{i.i.d.}{\sim}\text{U}\big[0.1,0.5\big]$, and the first $J_2\in\{1,2,n\}$ series follow a cubic SUCPR specification:
	\begin{equation*}
	y_{it}=\begin{cases}
	1+t+5x_{it}-0.3x_{it}^2+0.01x_{it}^3+u_{it},&\quad 1\leq i\leq J_2,\\
	1+t+5x_{it}-0.3x_{it}^2+u_{it},& \quad J_2+1\leq i\leq n.
	\end{cases}
	\end{equation*}
	\item[Power DGP3.] We again take $\lambda_{11},\dots,\lambda_{1n}\stackrel{i.i.d.}{\sim}\text{U}\big[0.1,0.5\big]$ and construct
	\begin{equation*}
	y_{it}=\begin{cases}
	\sum_{s=1}^{t} u_{is},&\quad 1\leq i\leq J_3,\\
	1+t+5x_{it}-0.3x_{it}^2+u_{it},& \quad J_3+1\leq i\leq n,
	\end{cases}
	\end{equation*}
	where $J_3\in\{1,2,n\}$ represents for the number of equations that specify a spurious relation.
\end{description}
Overall, the Power DGPs 1-3 consider: missing $I(1)$ regressors, omitted higher order powers of the $I(1)$ regressor $x_{it}$, and spurious regressions, respectively.

\subsection{Discussion of the Simulation Results}
Tables \ref{table:efficiencyQSUCPR} and \ref{tab:efficiency_varma} report the empirical mean squared error (MSE) for both feasible and infeasible estimators. As results are qualitatively similar across equations, we only report on the estimators for $\beta_{1,4}$ (the coefficient in front of $x_{1t}^2$). The column with FGLS contains the numerical value of the MSE and the MSEs of all other estimators are expressed relative to this benchmark. Values above 1 indicate a better performance of FM-GLS. We make the following observations:
\begin{enumerate}[(a)]
 \item The FM-GLS estimator generally has the lowest MSE among all feasible estimators. These efficiency gains are small at low levels of endogeneity and serial correlation, but become sizeable at higher levels. Moreover, the Monte Carlo outcomes for the infeasible estimators indicate that these gains remain when the estimators are informed about the true endogeneity and serial correlation properties. It is thus the GLS weighting of the data that improves estimation accuracy.
 \item There is one particular instance in Table \ref{tab:efficiency_varma} in which the performance of the FM-GLS estimator has a high MSE, namely the case of high persistency $\big(\underline{\lambda},\bar{\lambda}\big)=(0.8, 0.95)$, high endogeneity $\theta=0.5$, and small sample size $T=100$. This is caused by an inaccurate BIAM estimator resulting from the combination of a small sample size, high endogeneity, and high persistency. The problem disappears when $T$ increases.
 \end{enumerate}

The subsequent set of simulations evolves around hypothesis testing, see Table \ref{table:sizeQSUCPR} and Figures \ref{fig:size_sc}-\ref{fig:power_jointtest_n5}. The errors are simulated using Setting A and we use the following Wald-type test statistics: the Wald-SOLS and Wald-SUR tests as developed in Proposition 2 in \cite{wagnergrabarczykhong2019}, and the Wald-FGLS test from Theorem \ref{thm:test}. We consider: (\textit{i}) the single equation test $H_0:\beta_{1,4}=-0.3$ against the two-sided alternative $H_1: \beta_{1,4}\neq -0.3$, and (\textit{ii}) the joint test $H_0: \beta_{1,4}=\beta_{2,4}=\ldots= \beta_{n,4}=-0.3$ against the alternative which rejects when at least one coefficient is unequal to $-0.3$. Some general remarks regarding size and size-corrected power are as follows.
\begin{enumerate}[(a)]
\setcounter{enumi}{2}
 \item The Wald tests are typically oversized but the three tests react differently to changes in $\rho$. Increases in $\rho$ result in an increasing size for the SOLS and SUR version of the Wald test, whereas increases in $\rho$ lead to size decreases for the GLS type of Wald test. Overall, the GLS test provides better size control.
 \item In Figures \ref{fig:size_sc} and \ref{fig:size_endo}, we vary the serial correlation parameter $\rho_1$ and the endogeneity parameter $\rho_2$ separately. Overall, variation in $\rho_1$ has a larger influence on size with the SUR test being most sensitive and the GLS test being least sensitive.
 \item For all three Wald-type of tests, the size of the tests improves with sample size $T$.
 \item The ordering in terms of size-corrected power is the same throughout Figures \ref{fig:power_singletest_n3}-\ref{fig:power_jointtest_n5}. That is, size-corrected power is lowest for Wald-SOLS, increases for Wald-SUR, and is highest for the Wald-FGLS test.
 \end{enumerate}
 
 The simulation results for the KPSS-type of cointegration tests can be found in Table \ref{tab:ct_tests}. The general conclusions are as follows.
 \begin{enumerate}[(a)]
\setcounter{enumi}{6}
 \item The empirical sizes of the $K^{SOLS}$ and $K^{SUR}$ tests are similar. We see: very conservative results for low serial correlation, decent size for medium serial correlation, and strongly oversized tests at high levels of serial correlation. These findings are completely in line with the simulation results that are reported in table 3 of \cite{choisaikkonen2010}. The same behaviour is observed for the $K^{BIAM}$ test but the deviations from the 5\% level are less extreme.
 \item The power of the $K^{FOLS}$, $K^{SUR}$, and $K^{BIAM}$ tests behaves as expected: (1) power always increases with sample size, and (2) power increases when more unit roots, more misspecified equations, or more spurious relationships are incorporated in the DGP. The $K^{BIAM}$ test has the lowest power among the three tests. This is caused by the fact that the filter can nearly difference the data and hence make it appear more stationary.
\end{enumerate}
 
 \section{Empirical Application}\label{sec:empappl}
The Environmental Kuznets Curve (EKC) conjectures an inverted U-shaped relation between environmental degradation and income per capita. That is, there is an initial decline in environmental quality with increasing economic activity, but beyond a certain turning point (caused by e.g. industrial transformation and increasing environmental awareness), economic growth goes hand in hand with environmental improvement. A more detailed description and historical overview of the EKC can be found in \cite{stern2004} and \cite{stern2017}, respectively. The implications of further economic growth on pollution, e.g. the emission of greenhouse-gases, are also key in understanding the future of global warming (\cite{nordhaus2013}).

We builds upon and compare with \cite{wagnergrabarczykhong2019}. That is, we look at carbon dioxide $(\text{CO}_2)$ emissions and GDP as proxies for environmental pollution and economic development (both per capita and in logarithms), respectively. The data is collected from the Maddison Project Database (MPD) and the homepage of the Carbon Dioxide Information Analysis Center (CDIAC).\footnote{The Maddison Project Database, \cite{madison2018}, contains the data on population size and real GDP. The data on $\text{CO}_2$ originates from \cite{cdiac2017}. We follow the official guidelines and multiply by $3.667$ and $10^6$ to convert the reported fossil-fuel emissions into total carbon dioxide emissions.} As in \cite{wagnergrabarczykhong2019}, we consider Austria (AT), Belgium (BE), Finland (FI), the Netherlands (NL), Switzerland (CH) and the United Kingdom (UK). Our yearly data spans the period from 1870 to 2014. We refer to the latter paper for a discussion of the stationarity properties of all series as well as the motivation for this particular set of countries. Overall, the dataset consist of $n=6$ countries with $T=145$ time series observations each. Such a panel with small $n$ and large $T$ is ideally suited for our FMGLS approach since the multivariate banded inverse autocovariance matrix remains computable.

We estimate the quadratic model specification:
\begin{equation} 
 e_{it}^{}=\beta_{i,1}^{}+\beta_{i,2}^{} t +\beta_{i,3}^{}g_{it}^{}+\beta_{i,4}^{} g_{it}^2+u_{it}^{},\quad i=1,2,\ldots,6,\quad t=1,2,\ldots,145,
\label{eq:ekc_quadraticmodel}
\end{equation}
where $e_{it}$ and $g_{it}$ are $\text{CO}_2$ emissions and GDP, respectively. As the first step in our analysis we employ the multivariate stationarity tests of Section \ref{sec:cointtest} to check this model specification (Table \ref{tab:multivariateKPSS}). All three tests reject the null of cointegration at a 5\% level signalling inappropriateness of the quadratic formulation. Figure \ref{fig:KPSSresiduals} shows the residuals on which these tests are based. What stands out in these graphs is the erratic behaviour of the series around the two world wars. Based on this fact, and to be able to compare to \cite{wagnergrabarczykhong2019}, we will continue the analysis using model \eqref{eq:ekc_quadraticmodel} and the given collection of countries. Before doing so, it will be worthwhile to discuss the time series properties of these residuals.

We consider the series $\{\hat \vu_{t,FGLS}\}$ in the remainder of this section but the other residuals series will provide qualitatively similar outcomes. When fitting the VAR($p$) models with $1\leq p \leq 8$ to these residuals, the BIC information criterion selects a lag order of $p=1$. The absolute eigenvalues of the estimated coefficient matrix are $(0.55,0.55,0.51,0.31,0.31,0.11)$, and the estimate for the error correlation matrix is
$$
 \begin{blockarray}{c c c c c c c}
	& AT & BE & FI & NL & CH & UK  \\
\begin{block}{c [c c c c c c]}
  AT	&  1		& 0.16	& 0.10	& 0.16	& 0.18	& 0.22\\
  BE	& \bullet 	& 1 		& 0.51	& 0.09	& 0.23	& 0.27\\
  FI 	& \bullet	& \bullet	& 1 		& 0.13	& 0.26	& 0.22\\
  NL	& \bullet	& \bullet	& \bullet	& 1		& 0.22	& 0.10 \\
  CH	& \bullet	& \bullet	& \bullet	& \bullet	& 1 		& 0.18\\
  UK	& \bullet	& \bullet	& \bullet	& \bullet	& \bullet	& 1 \\
\end{block}
\end{blockarray}.
$$
There is thus serial and cross-sectional correlation to be exploited by the FM-GLS estimator.

The FM-SOLS, FM-SUR and FM-GLS estimation results of Model \eqref{eq:ekc_quadraticmodel} are reported in Table \ref{tab:estimationresults}. An inspection of the coefficient estimates and their confidence intervals reveals that: (1) $\beta_{i,3}$ is positive for each country, (2) $\beta_{i,4}$ is negative for each country, and (3) all coefficients are significant at the 5\% level. All these three facts are in line with the EKC hypothesis.\footnote{This is non-surprising because \cite{wagnergrabarczykhong2019} have selected the current set of countries because they display the EKC behaviour. Also, our estimation results are slightly different from those in \cite{wagnergrabarczykhong2019} due to the additional data for 2014, possible data updates, and/or differences in the bandwidth selection of the long-run covariance matrices.} Accordingly, there exists a turning point after which further per capita economic growth reduces per capita carbon dioxide emissions. The numerical values for the turning points are heterogeneous between countries.

The widths of the confidence intervals for $\beta_{i,3}$ and $\beta_{i,4}$ display a similar pattern. From shortest to longest, the ordering is always FM-SUR, FM-SOLS, and FM-GLS and we also see how widths vary substantially between methods. To uncover the origin of these findings we conduct one final simulation study with a parameter specification that closely mimics the properties of the dataset.\footnote{The details of this simulation DGP are provided in Section \ref{empiricalillustration} of the Supplement. A visualisation of the data and the model fit are also provided there.} The average empirical coverage probabilities of asymptotic 95\% confidence intervals are 78.0\%, 66.5\% and 89.0\% for FM-SOLS, FM-SUR, and FM-GLS, respectively. In other words, the calculated confidence intervals are generally too short. By reverse engineering it turns out that the confidence intervals should be scaled by factors of 1.67, 2.15 and 1.24 to bring them back to the desired nominal level. Overall, the applied researcher should be careful when using the confidence intervals as indications for parameter uncertainty. 

\section{Conclusion} \label{sec:conclusion}
We proposed a  framework to conduct inference on cointegrating polynomial regressions. Parameters are obtained using a Fully Modified GLS estimator and we studied a cointegration test that is based on filtered residuals. Monte Carlo simulations revealed the advantages and disadvantages of these methods. The empirical researcher should realize that all estimation approaches have a tendency to underestimate parameter uncertainty and thus provide confidence intervals that are too small. The FM-GLS estimator suffers the least from this problem. Several interesting questions are left for future research. From a theoretical viewpoint, it is interesting to study the behaviour of the modified Cholesky decomposition (and BIAM) when the series under consideration is nonstationary. This would give insights into the behaviour of: (1) the FM-GLS estimator while estimating spurious regressions, and (2) the power properties of the cointegration tests. From a practical viewpoint, there seems a need to obtain more acurate standard errors of the parameter estimators.

\section*{Acknowledgements}
This paper has been presented at the 2018 CFE meeting in Pisa, the NESG 2019 conference in Amsterdam, and the $6^{th}$ RCEA Time Series Econometrics Workshop in Larnaca. We would like to thank conference participants, especially Peter Pedroni and Peter Phillips, for useful comments and suggestions. We extend our thanks to Eric Beutner, Dick van Dijk, Richard Paap, Franz Palm, and Stephan Smeekes for their valuable feedback on earlier versions of this manuscript. All remaining errors are our own.

\clearpage
\bibliographystyle{chicagoa}
\newpage

\clearpage
\begin{appendices}
\section{Proofs of Main Theorems} \label{section:appendixproofs}

\begin{lemma} \label{lemma:BN}
Let $\lagpol{A}_q(L)=\mI_n - \sum_{j=1}^q \mA_j(q) L^j$ denote the lag polynomial implied by the coefficient matrices in \eqref{eq:populationMCD}. By the Beveridge-Nelson (BN) decomposition, we also have $\lagpol{A}_q(L) = \lagpol{A}_q(1) + (1-L) \widetilde{\lagpol{A}}_q(L)$ where $\widetilde{\lagpol{A}}_q(L)= \sum_{j=1}^q \widetilde{\mA}_j(q) L^{j-1}$ with $\widetilde{\mA}_j(q) = \sum_{i=j}^q \mA_i(q)$. If Assumption \ref{assumpt1:linearproc} holds, then
\begin{enumerate}[(a)]
 \item $\lagpol{A}_q(1)=\lagpol{A}(1)+O\left(\sum_{j=q+1}^{\infty}j^{1/2}\big\|\mA_j\big\|_{\calF}\right)$
 \item There exists a $q^*>0$ such that $\sum_{j=1}^{q}\big\|\widetilde{\mA}_j(q)\big\|_{\calF}<\infty$ for all $q>q^*$
 \end{enumerate}
\end{lemma}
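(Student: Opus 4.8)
The plan is to deduce both parts from a single tool: a weighted version of the multivariate Baxter inequality, which controls the discrepancy between the finite-order predictor coefficients $\mA_j(q)$ of \eqref{eq:populationMCD} and the true coefficients $\mA_j$. Assumption \ref{assumpt1:linearproc}(b) furnishes exactly the hypotheses this inequality requires: since $\det(\lagpol{A}(z))\neq 0$ on $|z|\leq 1$ and $\sum_{j\geq 1}\|\mA_j\|_{\calF}<\infty$, the spectral density of $\{\vu_t\}$ is bounded and bounded away from zero, and in addition $\sum_{j\geq 1} j\,\|\mA_j\|_{\calF}<\infty$. I would then appeal to the multivariate Baxter inequality of \cite{chengpourahmadi1993} in its weighted form: for a submultiplicative weight $w(\cdot)\geq 1$ with $\sum_{j\geq1}w(j)\|\mA_j\|_{\calF}<\infty$ there exist a finite constant $M$ and an integer $q^\ast$ such that
\begin{equation}
 \sum_{j=1}^{q} w(j)\,\big\|\mA_j(q)-\mA_j\big\|_{\calF}\;\leq\;M\sum_{j=q+1}^{\infty} w(j)\,\|\mA_j\|_{\calF},\qquad q>q^\ast .
 \label{eq:baxterBN}
\end{equation}
The two weights used below, $w(j)=(1+j)^{1/2}$ and $w(j)=1+j$, are admissible because $1+j+k\leq(1+j)(1+k)$, and the corresponding summability requirement $\sum_{j\geq1}(1+j)\|\mA_j\|_{\calF}<\infty$ is again Assumption \ref{assumpt1:linearproc}(b). (This is also the mechanism behind the finite-VAR approximation theory of \cite{lewisreinsel1985}.)

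For part (a), I would write $\lagpol{A}_q(1)-\lagpol{A}(1)=-\sum_{j=1}^{q}\big(\mA_j(q)-\mA_j\big)+\sum_{j=q+1}^{\infty}\mA_j$ and take Frobenius norms,
\begin{equation}
 \big\|\lagpol{A}_q(1)-\lagpol{A}(1)\big\|_{\calF}\;\leq\;\sum_{j=1}^{q}\big\|\mA_j(q)-\mA_j\big\|_{\calF}+\sum_{j=q+1}^{\infty}\|\mA_j\|_{\calF}.
\end{equation}
Apply \eqref{eq:baxterBN} with $w(j)=(1+j)^{1/2}$ to the first sum, use $1\leq(1+j)^{1/2}\leq\sqrt{2}\,j^{1/2}$ for $j\geq1$, and bound $\sum_{j>q}\|\mA_j\|_{\calF}\leq\sum_{j>q}j^{1/2}\|\mA_j\|_{\calF}$ in the second; this yields $\big\|\lagpol{A}_q(1)-\lagpol{A}(1)\big\|_{\calF}\leq(\sqrt{2}\,M+1)\sum_{j=q+1}^{\infty}j^{1/2}\|\mA_j\|_{\calF}$, i.e.\ the stated order.

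For part (b), starting from $\widetilde{\mA}_j(q)=\sum_{i=j}^{q}\mA_i(q)$ I would use the triangle inequality and reverse the order of summation to obtain
\begin{equation}
 \sum_{j=1}^{q}\big\|\widetilde{\mA}_j(q)\big\|_{\calF}\;\leq\;\sum_{j=1}^{q}\sum_{i=j}^{q}\|\mA_i(q)\|_{\calF}\;=\;\sum_{i=1}^{q} i\,\|\mA_i(q)\|_{\calF}\;\leq\;\sum_{i=1}^{\infty} i\,\|\mA_i\|_{\calF}+\sum_{i=1}^{q}(1+i)\,\big\|\mA_i(q)-\mA_i\big\|_{\calF}.
\end{equation}
The first term is finite by Assumption \ref{assumpt1:linearproc}(b); the second is at most $M\sum_{i>q}(1+i)\|\mA_i\|_{\calF}$ by \eqref{eq:baxterBN} with $w(i)=1+i$, hence bounded (indeed $o(1)$) for $q>q^\ast$. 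Thus $\sup_{q>q^\ast}\sum_{j=1}^{q}\|\widetilde{\mA}_j(q)\|_{\calF}<\infty$, which is the assertion of (b).

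The real work is confined to the first paragraph: pinning down the precise weighted multivariate Baxter inequality \eqref{eq:baxterBN} — checking the spectral-density conditions on $\{\vu_t\}$, the admissibility of the weights $(1+j)^{1/2}$ and $1+j$, and the matching summability of $\{\mA_j\}$ — and citing it in the form needed from \cite{chengpourahmadi1993}. Once \eqref{eq:baxterBN} is available, parts (a) and (b) are just triangle-inequality and change-of-summation-order bookkeeping.
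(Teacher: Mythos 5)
Your proof is correct, and for part (b) it is essentially the paper's argument: both reduce $\sum_{j=1}^{q}\|\widetilde{\mA}_j(q)\|_{\calF}$ to $\sum_{j=1}^{q}j\,\|\mA_j(q)\|_{\calF}$ and then control the predictor-coefficient error via the linearly weighted multivariate Baxter inequality (the paper cites Theorem 6.6.12 and Remark 3 of Hannan and Deistler for exactly this step). For part (a) you take a mildly different route: you invoke the weighted $\ell_1$ Baxter inequality directly with the submultiplicative weight $w(j)=(1+j)^{1/2}$, whereas the paper first applies Cauchy--Schwarz, $\sum_{j=1}^{q}\|\mA_j(q)-\mA_j\|_{\calF}\leq\big(q\sum_{j=1}^{q}\|\mA_j(q)-\mA_j\|_{\calF}^2\big)^{1/2}$, and then appeals to an $\ell_2$-type Baxter bound proved in its supplement; the factor $q^{1/2}$ is then absorbed using $q^{1/2}\leq j^{1/2}$ for $j>q$. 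The two routes deliver the same $O\big(\sum_{j>q}j^{1/2}\|\mA_j\|_{\calF}\big)$ rate; yours has the advantage of needing only one form of Baxter's inequality for both parts (with two admissible weights), while the paper's $\ell_2$ route is the one that generalizes to settings where only an $\ell_2$ prediction-error bound is available. Your verification of the hypotheses — submultiplicativity of $(1+j)^{1/2}$ and $1+j$, the matching summability from Assumption 1(b), and the boundedness of the spectral density above and away from zero from $\det(\lagpol{A}(z))\neq 0$ on $|z|\leq 1$ — is exactly what is needed, so there is no gap.
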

\begin{proof}
 \textbf{(a)} By Cauchy-Schwartz, we have $\big\|\lagpol{A}_q(1)-\lagpol{A}(1)\big\|_{\calF}\leq \sum_{j=1}^{q}\big\|\mA_j(q)-\mA_j\big\|_{\calF}+\sum_{j=q+1}^{\infty}\big\|\mA_j\big\|_{\calF}\leq \Big(q\sum_{j=1}^{q}\big\|\mA_j(q)-\mA_j\big\|_{\calF}^2 \Big)^{1/2}+\sum_{j=q+1}^{\infty}\big\|\mA_j\big\|_{\calF}$ and subsequently use Lemma \ref{L2Baxter} (see Supplement). \textbf{(b)} Using Baxter's inequality in Theorem 6.6.12 in \cite{hannandeistler2012} (also see their Remark 3) for the final inequality, we derive $\sum_{j=1}^{q}\big\|\widetilde{\mA}_j(q)\big\|_{\calF}\leq \sum_{j=1}^{q}j\big\|\mA_j(q)\big\|_{\calF}\leq \sum_{j=1}^{q}j\big\|\mA_j(q)-\mA_j\big\|_{\calF}+\sum_{j=1}^{q}j\big\|\mA_j\big\|_{\calF}\leq C\sum_{j=1}^{q}j\big\|\mA_j\big\|_{\calF}\leq C$.
\end{proof}

\begin{proof}[\textbf{Proof of Theorem \ref{thm:infeasGLS}}]
 The premultiplication by $\MChol_{\vu}^{}(q)$ applies a linear filter whereas $\SChol_{\vu}^{-1}(q)$ implies weighting. Since the behaviour of the first $q\ll T$ elements does not affect the asymptotic results, we take $\mZ_{t}=\vu_t=\mZeros$ for $t\leq 0$ and for all $t=1,2\dots$, we apply the transformations implied by $\lagpol{A}_q(L)$ and $\mS^{-1}(q)$.\footnote{The same argumentation is used in \cite{phillipspark1988}. The modification to obtain a rigorous proof is straightforward.} Consequently, we have
 \begin{equation}
\begin{aligned}
	\mG_T^{-1}\left(\widehat{\vbeta}_{GLS}-\vbeta\right) &=\left[\sum_{t=1}^{T}\left(\lagpol{A}_q(L)\mZ_{t}'\mG_T^{}\right)'\mS^{-1}(q)\left(\lagpol{A}_q(L)\mZ_{t}'\mG_T^{}\right)\right]^{-1}\\
	&\qquad\qquad\qquad\times \sum_{t=1}^{T}\left(\lagpol{A}_q(L)\mZ_{t}'\mG_T^{}\right)'\mS^{-1}(q)\left(\lagpol{A}_q(L)\vu_t\right)+o_p(1).
\label{eq:infGLSparameter_decomp}
\end{aligned}
\end{equation}
Using the BN decomposition, Lemma \ref{lemma:BN}, we first show
\begin{equation}
\begin{aligned}
 \lagpol{A}_q(L)\mZ_{t}\tran\mG_T^{}&=\lagpol{A}_q(1)\mZ_{t}\tran \mG_T^{}+\sum_{j=1}^{q}\widetilde{\mA}_j^{}(q)\Delta\mZ_{t-j+1}\tran\mG_T^{} =\lagpol{A}_q(1)\mZ_{t}\tran\mG_T^{}+O_p\big(T^{-1}\big) \\
 	&=\lagpol{A}(1)\mZ_{t}\tran\mG_T^{}+O_p\left((qT)^{-1/2}\right).
\label{eq:BN_mZ}
\end{aligned}
\end{equation}
Note that $\left\| \sum_{j=1}^{q}\widetilde{\mA}_j^{}(q)\Delta\mZ_{t-j+1}\tran\mG_T^{}  \right\| \leq \sum_{j=1}^{q}\big\|\widetilde{\mA}_j(q)\big\|~\big\|\Delta \mZ_{t-j+1}'\mG_T^{}\big\|$ and that for any $t$, 
	\begin{equation}\label{eq:DeltaZtG_bound}
	\big\|\Delta \mZ_{t}'\mG_T^{}\big\|=\max_{1\leq i\leq n}\left\|\mG_{i,T}\Delta \vz_{it}\right\|=\max_{1\leq i\leq n}\left(\left\|\mG_{\vd_i,T}\Delta \vd_{it}\right\|^2+\left\|\mG_{\vs_i,T}\Delta \vs_{it}\right\|^2\right)^{1/2}.
	\end{equation}
The vector $\mG_{\vd_i,T}\Delta \vd_{it}$ typically contains elements $T^{-(k+\frac{1}{2})}\big[t^k-(t-1)^k\big]$ where $k=0,1,\cdots,q_i$. By the inequality $(a+b)^n\leq a^n+nb(a+b)^{n-1}$, for $a,b\geq 0$, $n\in\SN$, we obtain $0\leq t^k-(t-1)^k\leq k t^{k-1}$, and thus $T^{-(k+\frac{1}{2})}\big[t^k-(t-1)^k\big]\leq q_i~T^{-3/2}\leq CT^{-3/2}$. As a result, $\left\|\mG_{\vd_i,T}\Delta \vd_{it}\right\|^2\leq CT^{-3}$. The vector $\mG_{\vs_i,T}\Delta \vs_{it}$ typically contains elements $T^{-(k+1)/2}\big(x_{it}^k-x_{it-1}^k\big)$, where $k=1,\dots,p_i$. The binomial expansion implies $x_{it}^k-x_{it-1}^k=\sum_{m=0}^{k-1}{k\choose m}x_{it-1}^mv_{it}^{k-m}=O_p\big(T^{(k-1)/2}\big)$, and thus $T^{-(k+1)/2}\big(x_{it}^k-x_{it-1}^k\big)=O_p\big(T^{-1}\big)$. It further implies that $\left\|\mG_{\vs_i,T}\Delta \vs_{it}\right\|^2=O_p\big(T^{-2}\big)$. Combining $\big\|\Delta \mZ_{t}'\mG_T^{}\big\|=O_p\big(T^{-1}\big)$ with Lemma \ref{lemma:BN}(b) we establish the second equality in \eqref{eq:BN_mZ}. Now \eqref{eq:BN_mZ} follows from Lemma \ref{lemma:BN}(a) and $\sum_{j=q+1}^{\infty}j^{1/2}\big\|\mA_j\big\|_\calF=o\big(q^{-1/2}\big)$. Using \eqref{eq:BN_mZ} and $\big\|\mS(q)-\mSigma_{\eta\eta}\big\|\rightarrow 0$ (see \eqref{eq:convergence_mSq} in the supplementary material), we obtain
\begin{equation}
\begin{aligned}
	\sum_{t=1}^{T} &\left(\lagpol{A}_q(L)\mZ_{t}'\mG_T^{}\right)'\mS^{-1}(q)\left(\lagpol{A}_q(L)\mZ_{t}'\mG_T^{}\right)
	=\sum_{t=1}^{T}\mG_T^{}\mZ_{t}^{}\mOmega_{uu}^{-1}\mZ_{t}'\mG_T^{}+O_p\big(q^{-1/2}\big) \\
	&\wto\int_{0}^{1}\mJ(r)\mOmega_{uu}^{-1}\mJ(r)'dr.
\label{eq:infGLS_parameter_inverse}
\end{aligned}
\end{equation}
We rewrite the second part in \eqref{eq:infGLSparameter_decomp} as
\begin{equation}
\begin{aligned}
	\sum_{t=1}^{T}\left(\lagpol{A}_q(L)\mZ_{t}'\mG_T^{}\right)'\mS^{-1}(q)&\left(\lagpol{A}_q(L)\vu_t\right) =\sum_{t=1}^{T}\left(\lagpol{A}_q(L)\mZ_{t}'\mG_T^{}\right)'\mS^{-1}(q)\veta_t\\
	&+\sum_{t=1}^{T}\left(\lagpol{A}_q(L)\mZ_{t}'\mG_T^{}\right)'\mS^{-1}(q)\left(\lagpol{A}_q(L)\vu_t-\veta_t\right)=:\Rmnum{1}+\Rmnum{2},
\label{eq:2ndpart_mainterms}
\end{aligned}
\end{equation}
and we will repeatedly use the identity
\begin{equation}
	\sum_t\mJ_t\mD\ve_t=\sum_t\left(\ve_t'\otimes\mJ_t\right)\vec(\mD)=\left[\sum_t\mJ_te_{1t},\sum_t\mJ_t e_{2t},\cdots,\sum_t\mJ_te_{nt}\right]\vec(\mD)
\label{eq:vec_identity}
\end{equation}
	for any matrices $\mJ_t\in\SR^{d\times n}$, $\mD\in\SR^{n\times n}$ and $\ve_t=[e_{1t},\dots,e_{nt}]'\in \SR^{n\times1}$. Using this identity, we have
\begin{equation}
	\Rmnum{1}=\left[\sum_{t=1}^{T}\left(\lagpol{A}_q(L)\mZ_{t}'\mG_T^{}\right)'\eta_{1t},\cdots,\sum_{t=1}^{T}\left(\lagpol{A}_q(L)\mZ_{t}'\mG_T^{}\right)'\eta_{nt}\right]\vec\left(\mSigma_{\eta\eta}^{-1}+o(1)\right),
\label{eq:decomp_termI}
\end{equation}
where $\eta_{it}$ is the $i^{th}$ entry of $\veta_t$. For $1\leq i\leq n$, we subsequently use the BN decomposition to obtain:
\begin{equation}
\begin{aligned}
	\sum_{t=1}^{T}&\left(\lagpol{A}_q(L)\mZ_{t}'\mG_T^{}\right)'\eta_{it}\\
	&=\sum_{t=1}^{T}\mG_T\mZ_{t}\eta_{it}\lagpol{A}_q(1)'+\sum_{t=1}^{T}\mG_T\Delta\mZ_{t}\eta_{it}\widetilde{\mA}_1(q)'+\sum_{j=2}^{q}\sum_{t=1}^{T}\mG_T\Delta\mZ_{t-j+1}\eta_{it}\widetilde{\mA}_j(q)'.
\label{eq:decomp_termI_maincomponent}
\end{aligned}
\end{equation}
By definition, we have  $\sum_{t=1}^{T}\mG_T\mZ_{t}\eta_{it}=\diag\left[\sum_{t=1}^{T}\mG_{1,T}\vz_{1t},\dots,\sum_{t=1}^{T}\mG_{n,T}\vz_{nt}\right]\eta_{it}$, where the limiting distribution of each block follows from Proposition 1 of \cite{wagnerhong2016}. More specifically, the $k^{th}$ block will converge to a stochastic integral and a second order bias term which is proportional to $\mSigma_{\epsilon_k \eta_i}:=\E\left(\epsilon_{kt}\eta_{it}\right)$ (the $(k,i)^{th}$ element of $\mSigma_{\epsilon\eta}$), $1\leq k,i\leq n$, and thus
\begin{equation}
	\sum_{t=1}^{T}\mG_T\mZ_{t}\eta_{it}\lagpol{A}_q(1)'\wto\left(\int_{0}^{1}\mJ(r)d\brown_{\eta_i}(r)+\mB_i\right)\lagpol{A}(1)',
\label{eq:decomp_termI_maincomponent1}
\end{equation}
	where $\mB_i:=\diag\left[\mSigma_{\epsilon_1\eta_i}~\vb_1,\cdots,\mSigma_{\epsilon_n \eta_i}~\vb_n\right]$. 
\end{proof}
As $\sum_{t=1}^{T}\mG_T\Delta\mZ_{t}\eta_{it}\widetilde{\mA}_1(q)'=\diag\left[\sum_{t=1}^{T}\mG_{1,T}\Delta\vz_{1t},\dots,\sum_{t=1}^{T}\mG_{n,T}\Delta\vz_{nt}\right]\eta_{it}\left(\sum_{j=1}^{\infty}\mA_j+o(1)\right)'$, we again consider the limiting distributions block-wise. Every element in the blocks will rely on one of the following three results. (1) As derived below \eqref{eq:DeltaZtG_bound}, we have $\big|T^{-(j+\frac{1}{2})}\sum_{t=1}^{T}\big[t^j-(t-1)^j\big]\eta_{it}\big|\leq CT^{-3/2}\sum_{t=1}^{T}|\eta_{it}|=o_p(1)$. (2) By Assumption \ref{assumpt1:linearproc}, for any $1\leq k,i\leq n$, $v_{kt}$ and $\eta_{it}$ are Near Epoch Dependent in $L_4$-norm on $\left\{[\veta_t',\vepsi_t']'\right\}_{t\in\SZ}$ of size $-1$ and arbitrary size, respectively. A small variation on Theorem 17.9 from \cite{davidson1994} shows that $\{v_{it}\eta_{it}\}$ are $L_2$-NED of size $-1$. The i.i.d. assumption on $\left\{[\veta_t',\vepsi_t']'\right\}$ allows for a LLN for the sequence $\{v_{it}\eta_{it}\}$, see e.g. Theorem 20.21 of \cite{davidson1994}, implying  $T^{-1}\sum_{t=1}^{T}\Delta x_{kt}\eta_{it}=T^{-1}\sum_{t=1}^{T}v_{kt}\eta_{it} \pto \mSigma_{\epsilon_k \eta_i}$. (3) $T^{-(j+1)/2}\sum_{t=1}^{T}\Delta x_{kt}^j\eta_{it}^{}\wto j\mSigma_{\epsilon_k \eta_i}\int_{0}^{1}\brown_{v_k}^{j-1}(r)dr$, where $j\geq 2$. The specific reason is as follows. By the binomial expansion (below \eqref{eq:DeltaZtG_bound}), we have
	\begin{align*}
	T^{-(j+1)/2}\sum_{t=1}^{T}\Delta x_{kt}^j\eta_{it}&=T^{-(j+1)/2}\sum_{m=0}^{j-1}{j\choose m}\sum_{t=1}^{T}x_{kt-1}^mv_{kt}^{j-m}\eta_{it}^{}=jT^{-(j+1)/2}\sum_{t=1}^{T}x_{kt-1}^{j-1}v_{kt}^{}\eta_{it}^{}+o_p(1)\\
	&=j\mSigma_{\epsilon_k \eta_i}\frac{1}{T}\sum_{t=1}^{T-1}\left(\frac{x_{kt}}{\sqrt{T}}\right)^{j-1}+\frac{j}{\sqrt{T}}\frac{1}{\sqrt{T}}\sum_{t=1}^{T-1}\left(\frac{x_{kt}}{\sqrt{T}}\right)^{j-1}\left(v_{kt}\eta_{it}-\mSigma_{\epsilon_k \eta_i}\right)+o_p(1)\\
	&\wto j\mSigma_{\epsilon_k \eta_i}\int_{0}^{1}\brown_{v_k}^{j-1}(r)dr,
	\end{align*}
	where $\frac{1}{\sqrt{T}}\sum_{t=1}^{T-1}\left(\frac{x_{kt}}{\sqrt{T}}\right)^{j-1}\left(v_{kt}\eta_{it}-\mSigma_{\epsilon_k \eta_i}\right)=O_p(1)$. To see this, we refer to \cite{dejong2002}. The moment and NED conditions in his Assumption 1 are satisfied. Moreover, since $F(x)=x^{j-1}$ is homogeneous of degree $j-1$, his Assumption 2 holds as well. The desired result now follows from Theorem 1 in \cite{dejong2002}. Combining these results, we obtain
	\begin{equation}\label{eq:decomp_termI_maincomponent2}
	\sum_{t=1}^{T}\mG_T\Delta\mZ_{t}\eta_{it}\widetilde{\mA}_1(q)'\wto \mB_i\left(\sum_{j=1}^{\infty}\mA_j\right)'.
	\end{equation}
Finally, the last term in \eqref{eq:decomp_termI_maincomponent} is bounded by $\sum_{j=2}^{q}\big\|\mA_j(q)\big\|~\big\|\sum_{t=1}^{T}\mG_T\Delta\mZ_{t-j+1}\eta_{it}\big\|$. Using similar arguments above, we conclude that $\sum_{t=1}^{T}\mG_T\Delta\mZ_{t-j+1}\eta_{it}=o_p(1)$ (lags of $\Delta\mZ_t$ will lead to $\E\big(v_{kt-j}\eta_{it}\big)=0$ for any $j>0$). Hence, $\sum_{j=2}^{q}\sum_{t=1}^{T}\mG_T\Delta\mZ_{t-j+1}\eta_{it}\widetilde{\mA}_j(q)'=o_p(1)$. Combining \eqref{eq:decomp_termI_maincomponent}, \eqref{eq:decomp_termI_maincomponent1} and \eqref{eq:decomp_termI_maincomponent2}, we have
	\begin{equation}\label{eq:decomp_termI_result}
	\sum_{t=1}^{T}\left(\lagpol{A}_q(L)\mZ_{t}'\mG_T^{}\right)'\eta_{it}\wto\int_{0}^{1}\mJ(r)d\brown_{\eta_i}(r)\lagpol{A}(1)'+\mB_i.
	\end{equation}
Note that $\vec\Big(\mSigma_{\eta\eta}^{-1}\Big)=\left[\begin{smallmatrix}
	\col_1(\mSigma_{\eta\eta}^{-1})\\
	\vdots\\
	\col_n(\mSigma_{\eta\eta}^{-1})
	\end{smallmatrix}\right]$. Inserting \eqref{eq:decomp_termI_result} into \eqref{eq:decomp_termI}, we eventually have
	\begin{align}
	\Rmnum{1}&\wto\int_{0}^{1}\left(d\brown_{\eta}(r)'\otimes \mJ(r)\mA(1)'\right)\vec\left(\mSigma_{\eta\eta}^{-1}\right)+\left[\mB_1,\cdots,\mB_n\right]\vec\left(\mSigma_{\eta\eta}^{-1}\right)\nonumber\\
	&=\int_{0}^{1}\mJ(r)\mA(1)'\mSigma_{\eta\eta}^{-1}d\brown_{\eta}^{}(r)+\sum_{i=1}^{n}\mB_i\col_i\left(\mSigma_{\eta\eta}^{-1}\right)=\int_{0}^{1}\mJ(r)\mOmega_{uu}^{-1}d\brown_{u}^{}(r)+\vbias_{\epsilon\eta}\label{eq:1st_termfinal}
	\end{align}
where the symmetry property of $\mSigma_{\eta\eta}^{-1}$ is used in the final step.

	Now we consider the term $\Rmnum{2}$ in \eqref{eq:2ndpart_mainterms}. If we define $\vu_t^*=\big[u_{1t}^*,\dots,u_{nt}^*\big]:=\lagpol{A}_q(L)\vu_t-\veta_t=-\sum_{j=1}^{\infty}\big(\mA_j(q)-\mA_j\big)\vu_{t-j}$, where $\mA_j(q)=\mZeros$ for $j>q$, and then apply \eqref{eq:vec_identity}, we have $\Rmnum{2}=\Big[\sum_{t=1}^{T}\big(\lagpol{A}_q(L)\mZ_{t}'\mG_T^{}\big)'u_{1t}^*,\dots,\sum_{t=1}^{T}\big(\lagpol{A}_q(L)\mZ_{t}'\mG_T^{}\big)'u_{nt}^*\Big]\vec\Big(\mSigma_{\eta\eta}^{-1}+o(1)\Big)$. For any block $i=1,\dots,n$, by the BN decomposition \eqref{eq:BN_mZ}, we have	
	\begin{align*}
	\left\|\sum_{t=1}^{T}\left(\lagpol{A}_q(L)\mZ_{t}'\mG_T^{}\right)'u_{it}^*\right\|&=\left\|\sum_{t=1}^{T}\mG_T^{}\mZ_{t}^{}u_{it}^*\lagpol{A}_q(1)'+O_p\big(T^{-1}\big)\sum_{t=1}^{T}u_{it}^*\right\|\\
	&\hspace{-0.8cm}\leq C\max_{1\leq k\leq n}\left\|\sum_{t=1}^{T}\mG_{k,T}^{}\vz_{kt}^{}u_{it}^*\right\|+O_p\big(T^{-1}\big)\sum_{j=1}^{\infty}\left\|\row_i\left(\mA_j(q)-\mA_j\right)\right\|\left\|\sum_{t=1}^{T}\vu_{t-j}\right\|\\
	&\hspace{-0.8cm}=O_p\left(\sum_{j=q+1}^{\infty}j^{1/2}\big\|\mA_j\big\|_{\calF}\right)=o_p(1).
	\end{align*}
	It implies $\Rmnum{2}=o_p(1)$. The theorem now follows from \eqref{eq:infGLSparameter_decomp}, \eqref{eq:infGLS_parameter_inverse}, and \eqref{eq:1st_termfinal}.	

\begin{proof}[\textbf{Proof of Theorem \ref{thm:consistentDECOMP}}]
We start wih the estimation error $\widehat{\mSigma_{\vu}^{-1}}(q)-\mSigma_{\vu}^{-1}(q)$. Repeated addition and subtraction yields
\begin{equation}
\begin{aligned}
 	\Big\|\widehat{\mSigma_{\vu}^{-1}}&(q)-\mSigma_{\vu}^{-1}(q)\Big\|
\leq \left\| \widehat{\MChol}_{\vu}(q)- \MChol_{\vu}(q)\right\|\,\left\|\widehat{\SChol}_{\vu}^{-1}(q)\right\|\,\left\|\widehat{\MChol}_{\vu}(q)\right\|\\
 &+\left\|\MChol_{\vu}(q)\right\|\,\left\|\widehat{\SChol}_{\vu}^{-1}(q)-\SChol_{\vu}^{-1}(q)\right\|\,\left\|\widehat{\MChol}_{\vu}(q)\right\|+\left\|\MChol_{\vu}(q)\right\|\,\left\|\SChol_{\vu}^{-1}(q)\right\|\,\left\|\widehat{\MChol}_{\vu}(q)-\MChol_{\vu}(q)\right\|.
\end{aligned}
\label{eq:decom_partI}
\end{equation}
We will only consider the terms $\big\|\widehat{\MChol}_{\vu}(q)-\MChol_{\vu}(q)\big\|$ and $\big\|\widehat{\SChol}_{\vu}^{-1}(q)-\SChol_{\vu}^{-1}(q)\big\|$. It is not hard to derive that the remaining terms are bounded in probability. Define $\bm{\calG}=\widehat{\MChol}_{\vu}(q)-\MChol_{\vu}(q)$ and denote its ($n\times n$) subblocks by $\bm{\calG}_{ij}$, $1\leq i,j \leq T$. This matrix $\bm{\calG}$ is banded in such a way that there are at most $2q-1$ nonzero block in the block-columns of $\bm{\calG}\bm{\calG}\tran$. Using this observation and various norm properties, we find
\begin{equation*}
\begin{aligned}
	\|\bm{\calG}\|^2
	&\leq \|\bm{\calG}\bm{\calG}'\|_1^{}
	\leq \max_{1\leq j\leq T}\sum_{i=1}^{T}\left\|\big(\bm{\calG}\bm{\calG}'\big)_{ij}\right\|_1
	\leq n(2q-1)\max_{1\leq j\leq T}\sum_{t=1}^{T}\|\bm{\calG}_{jt}\|_1^2 \\
	&= n(2q-1)\max_{1\leq\ell\leq q}\sum_{s=1}^{\ell}\big\|\widehat{\mA}_s(\ell)-\mA_s(\ell)\big\|_1^2
	\leq n^3(2q-1)\max_{1\leq\ell\leq q}\big\|\widehat{\mA}(\ell)-\mA(\ell)\big\|^2 =  O_p\left(\frac{q^3}{T}\right),
\end{aligned}
\end{equation*}
where the final step follows from Lemma \ref{lem:max_bound}.\footnote{More specifically, for any matrix $\mQ$ we have $\| \mQ \|^2\leq \|\mQ\mQ\tran\|_1$. Moreover, if $\mQ$ is an $(n\times n)$ matrix, then also $\|\mQ\|_1\leq \sqrt{n} \|\mQ\|_\calF\leq n \|\mQ\|$.} We conclude that $\big\|\widehat{\MChol}_{\vu}(q)-\MChol_{\vu}(q)\big\|=O_p\left( \sqrt{q^3/T} \right)$.
The difference $\widehat{\SChol}_{\vu}^{-1}(q)-\SChol_{\vu}^{-1}(q)$ forms a symmetric and block diagonal matrix, hence $\big\|\widehat{\SChol}_{\vu}(q)-\SChol_{\vu}(q)\big\|=\max\left\{\big\|\widehat{\mS}(0)-\mS(0)\big\|,\max_{1\leq\ell\leq q}\big\|\widehat{\mS}(\ell)-\mS(\ell)\big\|\right\}$. By Assumption \ref{assump3:residuals},
\begin{equation*}
\begin{aligned}
	\Big\|\widehat{\mS}(0)-\mS(0) \Big\| &=
	\left\| \frac{1}{T} \sum_{t=1}^T \widehat{\vu}_t^{} \widehat{\vu}_t'-\E(\vu_t^{}\vu_t') \right\|\leq \frac{1}{T}\big\|\widehat{\vu}-\vu \big\|^2+2 \left\|\frac{1}{T} \sum_{t=1}^T (\widehat{\vu}_t^{}-\vu_t^{}) \vu_t' \right\|\\
	&+ \left\| \frac{1}{T}\sum_{t=1}^T \vu_t^{}\vu_t' -\E(\vu_t^{}\vu_t') \right\|=O_p\big(T^{-1/2}\big).
\end{aligned}
\end{equation*}
Applying Lemma \ref{lem:max_bound}, we see $\big\|\widehat{\SChol}_{\vu}^{-1}(q)-\SChol_{\vu}^{-1}(q)\big\|\leq \big\|\widehat{\SChol}_{\vu}(q)-\SChol_{\vu}(q)\big\|~\big\|\widehat{\SChol}_{\vu}^{-1}(q)\big\|~\big\|\SChol_{\vu}^{-1}(q)\big\|= O_p\left(q/\sqrt{T}\right)$. Overall, recalling \eqref{eq:decom_partI}, a bound on the estimation eror is $\left\|\widehat{\mSigma_{\vu}^{-1}}(q)-\mSigma_{\vu}^{-1}(q)\right\|=O_p\left( \sqrt{q^3/T} \right)$. 

The bound on the truncation error, $\left\|\mSigma_{\vu}^{-1}(q)-\mSigma_{\vu}^{-1}\right\|\leq C\frac{1}{\sqrt{q}}\sum_{s=q+1}^{\infty}s\left\|\mA_{s}\right\|_{\calF}$, follows from a straightforward generalization of the results in Lemma 2 of \cite{chengingyu2015} and Propositions 2.1-2.2 of \cite{ingchiouguo2016}. See Lemma \ref{lem:biam_consistency_part2} in the supplementary material for details.
\end{proof}

\begin{proof}[\textbf{Proof of Theorem \ref{thm:fullymodifiedOLSGLS}}]
\textbf{(a)}-\textbf{(b)} See the proof of Proposition 1 in  \cite{wagnergrabarczykhong2019}. \textbf{(c)} Since the residuals $\{\widehat{\vu}_t\}$ are obtained by first stage OLS, we get $\|\widehat{\vu}-\vu\|^{2}\leq \|\mG_T^{-1}\big(\widehat{\vbeta}_{OLS}-\vbeta\big)\|^2~\|\mG_T\mZ'\mZ\mG_T\|=O_p(1)$. Assumption \ref{assump3:residuals} is thus satisfied and we can rely on the results in Theorem \ref{thm:consistentDECOMP}. From \eqref{eq:FMGLSestimator}, the definition of the FM-GLS estimator, we have
\begin{equation*}
	\mG_T^{-1}\left(\widehat{\vbeta}_{FGLS}^+-\vbeta\right)=\left(\mG_T\mZ'\widehat{\mSigma_{\vu}^{-1}}(q)\mZ\mG_T\right)^{-1}\left[\mG_T\mZ'\widehat{\mSigma_{\vu}^{-1}}(q)\vu-\mG_T\mZ'\Big(\mI_T\otimes \widehat{\mOmega}_{uu}^{-1}\widehat{\mOmega}_{uv}^{}\widehat{\mOmega}_{vv}^{-1}\Big)\vv-\mG_T\widehat{\vbias}^{+}\right].
\end{equation*}
Given Theorem \ref{thm:consistentDECOMP}, we have $\mG_T\mZ'\widehat{\mSigma_{\vu}^{-1}}(q)\mZ\mG_T=\mG_T\mZ'\mSigma_{\vu}^{-1}(q)\mZ\mG_T+o_p(1)$ and it converges weakly to the expression in \eqref{eq:infGLS_parameter_inverse}.

To continue, we define $\widehat{\lagpol{A}}_q(L)=\mI_n-\sum_{j=1}^{q}\widehat{\mA}_j(q)L^j$ and its BN decomposition $\widehat{\lagpol{A}}_q(L)=\widehat{\lagpol{A}}_q(1)+(1-L) \lagpol{A}_q^*(L)$ through $\lagpol{A}_q^*(L)=\sum_{j=1}^{q}\mA_j^*(q)L^{j-1}$ with $\mA_j^*(q)=\sum_{i=j}^q\widehat{\mA}_i(q)$. $\widehat{\lagpol{A}}_q(1)=\lagpol{A}_q(1)+o_p(1)$ and $\sum_{j=1}^{q}\big\|\mA_j^*(q)\big\|_{\calF}\leq \sum_{j=1}^{q}\big\|\widetilde{\mA}_j(q)\big\|_{\calF}+o_p(q)$ are obtained from the following two results: (1) $\big\|\widehat{\lagpol{A}}_q(1)-\lagpol{A}_q(1)\big\|_\calF\leq  \sum_{j=1}^q\big\|\widehat{\mA}_j(q)-\mA_j(q)\big\|_{\calF}\leq C\sqrt{q}\big\|\widehat{\mA}(q)-\mA(q)\big\|=O_p\Big(\frac{q^{3/2}}{T^{1/2}}\Big)=o_p(1)$, where the last step follows from Lemma \ref{lem:max_bound}, and (2) $\sum_{j=1}^{q}\big\|\mA_j^*(q)-\widetilde{\mA}_j(q)\big\|_{\calF}\leq q\sum_{j=1}^{q}\big\|\widehat{\mA}_j(q)-\mA_j(q)\big\|_{\calF}=o_p(q)$. Using the BN decomposition of $\widehat{\lagpol{A}}_q(L)$ and similar steps as those below \eqref{eq:2ndpart_mainterms}, we have
	\begin{equation*}
	\mG_T\mZ'\widehat{\mSigma_{\vu}^{-1}}(q)\vu=\sum_{t=1}^{T}\left(\widehat{\lagpol{A}}_q(L)\mZ_{t}'\mG_T^{}\right)'\widehat{\mS}^{-1}(q)\left(\widehat{\lagpol{A}}_q(L)\vu_t\right)+o_p(1)\wto\int_{0}^{1}\mJ(r)\mOmega_{uu}^{-1}d\brown_{u}^{}(r)+\vbias_{\epsilon\eta},
	\end{equation*}
	where $\widehat{\mS}(q)=\mS(q)+o_p(1)$ given in Lemma \ref{lem:max_bound}. Using the identity \eqref{eq:vec_identity}, it is not hard to deduce
	\begin{equation*}
	\mG_T\mZ'\Big(\mI_T\otimes \widehat{\mOmega}_{uu}^{-1}\widehat{\mOmega}_{uv}^{}\widehat{\mOmega}_{vv}^{-1}\Big)\vv\wto \int_{0}^{1}\mJ(r)\mOmega_{uu}^{-1}\mOmega_{uv}^{}\mOmega_{vv}^{-1}d\brown_{v}^{}(r)+\underbrace{\left[\begin{smallmatrix}
	\row_1\big(\mDelta_{vv}\big)\col_1\big(\mOmega_{vv}^{-1}\mOmega_{vu}^{}\mOmega_{uu}^{-1}\big)\vb_1\\
	\vdots\\
	\row_n\big(\mDelta_{vv}\big)\col_n\big(\mOmega_{vv}^{-1}\mOmega_{vu}^{}\mOmega_{uu}^{-1}\big)\vb_n
	\end{smallmatrix}\right]}_{\vbias_{vu}}.
	\end{equation*}
	Combining the results above leads to:
	\begin{equation*}
	\mG_T\mZ'\widehat{\mSigma_{\vu}^{-1}}(q)\vu-\mG_T\mZ'\Big(\mI_T\otimes \widehat{\mOmega}_{uu}^{-1}\widehat{\mOmega}_{uv}^{}\widehat{\mOmega}_{vv}^{-1}\Big)\vv\wto \int_{0}^{1}\mJ(r)\mOmega_{uu}^{-1}d\brown_{u.v}^{}(r)+\vbias^+,
	\end{equation*}
	where $\vbias^+:=\vbias_{\epsilon\eta}-\vbias_{vu}$. By construction, we have $\mG_T\widehat{\vbias}^{+}\wto \vbias^+$. Altogether this implies the limiting distribution in the theorem.	
\end{proof}

\begin{proof}[\textbf{Proof of Theorem \ref{thm:test}}] 
	We first introduce the appropriate scaling into the test statistic, that is
	\begin{equation*}
	\calW=\Big(\mR\widehat{\vbeta}_{FGLS}^+-\vr\Big)'\Big(\mR\mG_T^{-1}\mR'\Big)\left[\Big(\mR\mG_T^{-1}\mR'\Big)~\widehat{\mPhi}~\Big(\mR\mG_T^{-1}\mR'\Big)\right]^{-1}\Big(\mR\mG_T^{-1}\mR'\Big)\Big(\mR\widehat{\vbeta}_{FGLS}^+-\vr\Big).
	\end{equation*}
	Since the matrices $\mG_T^{-1}$ and $\mR'\mR$ commute and $\mR\mR'=\mI_k$, we have $(\mR\mG_T^{-1}\mR\tran)(\mR\widehat{\vbeta}_{FGLS}^+-\vr)=\mR\mG_T^{-1}(\widehat{\vbeta}_{FGLS}^+-\vbeta)$ under the null hypothesis. Conditional on $\calF_v=\sigma\big(\brown_v(r),0 \leq r \leq 1 \big)$, this quantity is asymptotically normally distributed by Theorem \ref{thm:fullymodifiedOLSGLS}(c) with asymptotic covariance matrix
	\begin{equation*}
	\mR\left(\int_{0}^{1}\mJ(r)\mOmega_{uu}^{-1}\mJ(r)'dr\right)^{-1}\left(\int_{0}^{1}\mJ(r)\mOmega_{uu}^{-1}\mOmega_{u.v}^{}\mOmega_{uu}^{-1}\mJ(r)'dr\right)\left(\int_{0}^{1}\mJ(r)\mOmega_{uu}^{-1}\mJ(r)'dr\right)^{-1}\mR'.
	\end{equation*}
The consistent estimation of all the quantities involved ensures that $(\mR\mG_T^{-1}\mR' )~\widehat{\mPhi}~(\mR\mG_T^{-1}\mR')$ has the same limit. Therefore, the Wald statistics is conditionally chi-square distributed with $k$ degrees of freedom. Since this distribution does not depend on $\calF_v$, we conclude that the unconditional distribution of $\calW$ is also $\chi_k^2$.
\end{proof}

\begin{proof}[\textbf{Proof of Theorem \ref{thm:kpss_subtest}}]
The results for $K_{j,b_T}^{SOLS}$ and $K_{j,b_T}^{SUR}$ follow from a straightforward multivariate extensions of the proof of Proposition 6 in \cite{wagnerhong2016}. For $K_{j,b_T}^{BIAM}$, we first define the population counterparts of $\vvarphi_{j,b_T}(\{\hat{\vu}_{FGLS}\})$ and $\widehat{\mSigma_{\vu}^{-1}}(q_T,b_T)$. That is, let $\vvarphi_{j,b_T}:=\big[ \vu_{j}',\sum_{s=j}^{j+1} \vu_{s}',\dots,\sum_{s=j}^{j+b_T-1}\vu_{s}'\big]'$ and let $\mSigma_{\vu}^{-1}(q_T,b_T)$ denote the subblock matrix of $\mSigma_{\vu}^{-1}(q_T)$ formed by taking the elements with row and column indices belonging to the set $\{ n(T-b_T)+1, n(T-b_T)+2 ,\ldots,nT\}$. By rearrangement, we have
	\begin{equation}\label{eq:kpss_biam_decomp}
	\begin{aligned}
	K_{j,b_T}^{BIAM} &=b_T^{-2}\vvarphi_{j,b_T}'~\mSigma_{\vu}^{-1}(q_T,b_T)~\vvarphi_{j,b_T}^{}
	+b_T^{-2}\vvarphi_{j,b_T}'\left(\widehat{\mSigma_{\vu}^{-1}}(q_T,b_T)-\mSigma_{\vu}^{-1}(q_T,b_T)\right)\vvarphi_{j,b_T}^{} \\
	& \qquad+R(q_T,j,b_T),
	\end{aligned}
	\end{equation}
	where the remainder term is bounded as $ |R(q_T,j,b_T)|\leq \big\|b_T^{-1}\big( \vvarphi_{j,b_T}(\{\hat{\vu}_{FGLS}\}) -\vvarphi_{j,b_T}\big)\big\|^2~\big\|\widehat{\mSigma_{\vu}^{-1}}(q_T,b_T)\big\|+2\big\| b_T^{-1}\big( \vvarphi_{j,b_T}(\{\hat{\vu}_{FGLS}\}) -\vvarphi_{j,b_T}\big)\big\|~\big\|\widehat{\mSigma_{\vu}^{-1}}(q_T,b_T)\big\|~\big\|b_T^{-1}\vvarphi_{j,b_T}\big\|$. Poincar\'{e}'s separation theorem (e.g. page 347-348 of \cite{abadirmagnus2005}) implies $\lambda_{min}(\mA)\leq \lambda_{min}(\mB)\leq\lambda_{max}\left(\mB\right)\leq \lambda_{max}\left(\mA\right)$ when $\mB$ is a principal submatrix of $\mA$. By this inequality and Theorem \ref{thm:consistentDECOMP}, we conclude that $\big\|\widehat{\mSigma_{\vu}^{-1}}(q_T,b_T)-\mSigma_{\vu}^{-1}(q_T,b_T)\big\|\leq \big\|\widehat{\mSigma_{\vu}^{-1}}(q_T)-\mSigma_{\vu}^{-1}(q_T)\big\|=o_p(1)$ and $\big\|\widehat{\mSigma_{\vu}^{-1}}(q_T,b_T)\big\|=O_p(1)$. Moreover, 
	\begin{equation*}
	\left\|b_T^{-1}\big(\widehat{\vvarphi}_{j,b_T}-\vvarphi_{j,b_T}\big)\right\|=\left(b_T^{-1}\sum_{t=j}^{j+b_T-1}\left\|b_T^{-1/2}\sum_{s=j}^{t}\left(\widehat{\vu}_{s,FGLS}-\vu_s\right)\right\|^2\right)^{1/2}=o_p(1)
	\end{equation*}
	where $\left\|b_T^{-1/2}\sum_{s=j}^{t}\left(\widehat{\vu}_{s,FGLS}-\vu_s\right)\right\|\leq \left\|b_T^{-1/2}\sum_{s=j}^{t}\mZ_s'\mG_T^{}\right\|~\big\|\mG_T^{-1}\big(\widehat{\vbeta}_{FGLS}^+-\vbeta\big)\big\|=o_p(1)$ by Theorem \ref{thm:fullymodifiedOLSGLS} and the assumption $b_T/T\rightarrow 0$ as $T\rightarrow\infty$. Standard weak convergence arguments imply $\big\|b_T^{-1}\vvarphi_{j,b_T}\big\|=O_p(1)$. Combining these results, we have $b_T^{-2}\vvarphi_{j,b_T}'\Big(\widehat{\mSigma_{\vu}^{-1}}(q_T,b_T)-\mSigma_{\vu}^{-1}(q_T,b_T)\Big)\vvarphi_{j,b_T}^{}=o_p(1)$ and $R(q_T,j,b_T)=o_p(1)$. By \eqref{eq:kpss_biam_decomp}, it remains to consider $b_T^{-2}\vvarphi_{j,b_T}'~\mSigma_{\vu}^{-1}(q_T,b_T)~\vvarphi_{j,b_T}^{}$.
Construct the $nT\times nb_T$ selection matrix $\mR_{j,b_T}$ such that
	\begin{equation*}
	\mR_{j,b_T}\vvarphi_{j,b_T}=\left[\vzeros',\cdots,\vzeros',\vu_{j}',\sum_{s=j}^{j+1} \vu_{s}',\dots,\sum_{s=j}^{j+b_T-1}\vu_{s}',\vzeros',\cdots,\vzeros'\right]'.
	\end{equation*}
	Then, by the MCD \eqref{eq:bimam}, we have
\begin{equation}
\begin{aligned}
	b_T^{-2}\vvarphi_{j,b_T}'~  \mSigma_{\vu}^{-1}(q_T,b_T)~\vvarphi_{j,b_T}^{} &=b_T^{-2}\vvarphi_{j,b_T}'\mR_{j,b_T}'~\mSigma_{\vu}^{-1}(q_T)~\mR_{j,b_T}^{}\vvarphi_{j,b_T}^{}\\
	&=b_T^{-2}\left(\MChol_{\vu}(q_T)\mR_{j,b_T}\vvarphi_{j,b_T}\right)'\SChol_{\vu}^{-1}(q_T)\left(\MChol_{\vu}^{}(q_T) \mR_{j,b_T} \vvarphi_{j,b_T}\right).
\end{aligned}
\end{equation}

As argued in the proof of Theorem \ref{thm:infeasGLS}, by the assumption $\frac{q_T}{b_T} \rightarrow 0$ as $T\rightarrow\infty$, we can treat the premultiplication of $\mR_{j,b_T}\vvarphi_{j,b_T}$ by $\MChol_{\vu}(q_T)$ as applying the filter $\lagpol{A}_{q_T}(L)$ block-wise. Under the same condition, $\SChol_{\vu}^{-1}(q_T)$ implies a scaling $\mS^{-1}(q_T)$. By the BN decomposition in Lemma \ref{lemma:BN}, and similarly $\lagpol{C}(L)=\lagpol{C}(1)+(1-L)\widetilde{\lagpol{C}}(L)$ with $\lagpol{C}(L)=[\lagpol{A}(L)]^{-1}$,
	\begin{equation*}
	\lagpol{A}_q(L)\sum_{s=j}^{t}\vu_{s}=\lagpol{A}_{q_T}(1)\sum_{s=j}^{t}\vu_{s}+\widetilde{\lagpol{A}}_{q_T}(L)\vu_t=\lagpol{A}_{q_T}(1)\lagpol{C}(1)\sum_{s=j}^{t}\veta_s+\widetilde{\lagpol{A}}_{q_T}(1)\widetilde{\lagpol{C}}(L)\big(\veta_t-\veta_{j-1}\big)+\widetilde{\lagpol{A}}_{q_T}(L)\vu_t.
	\end{equation*}
	For $t=j+[rb_T]-1$, a FCLT for i.i.d. sequences gives
	\begin{equation*}
	b_T^{-1/2}\mS^{-1/2}(q_T)\lagpol{A}_{q_T}(L)\sum_{s=j}^{t}\vu_{s}=\mS^{-1/2}(q_T)\lagpol{A}_{q_T}(1)\lagpol{C}(1)~b_T^{-1/2}\sum_{s=j}^{t}\veta_s+O_p(b_T^{-1/2})\wto \mW(r).
	\end{equation*}
The partial sum process $\sum_{s=j}^{t}\veta_s$ thus dominates the asymptotic distribution:
	\begin{equation*}
	b_T^{-2}\vvarphi_{j,b_T}'~\mSigma_{\vu}^{-1}(q_T,b_T)~\vvarphi_{j,b_T}^{}=b_T^{-1}\sum_{t=j}^{j+b_T-1}\left\|b_T^{-1/2}\mS^{-1/2}(q_T)\lagpol{A}_{q_T}(L)\sum_{s=j}^{t}\vu_{s}\right\|^2+o_p(1)\wto \int_{0}^{1} \left\|\mW(r)\right\|^2dr.
	\end{equation*}
An application of the continuous mapping theorem completes the proof.
\end{proof}

\newpage
\begin{table}[H]
	\centering
	\caption{Empirical MSE for the coefficient $\beta_{i,4}$ of $x_{it}^2$ with $i=1$ under error Setting A. The column labeled FGLS contains the numerical value of the MSE of feasible FM-GLS. Other MSEs are expressed relative to this benchmark. Values above 1 indicate a better performance of feasible FM-GLS.}
	\label{table:efficiencyQSUCPR}
	\resizebox{\textwidth}{!}{%
		\begin{tabular}{c r r r r r r r r r r r r}
			\toprule
			\multicolumn{1}{l}{} & \multicolumn{6}{c}{$n=3$}                             & \multicolumn{6}{c}{$n=5$}                             \\
			\midrule
			\multicolumn{1}{l}{$\rho$} &
			\multicolumn{1}{c}{SOLS} &
			\multicolumn{1}{c}{SUR} &
			\multicolumn{1}{c}{FGLS} &
			\multicolumn{1}{c}{infSOLS} &
			\multicolumn{1}{c}{infSUR} &
			\multicolumn{1}{c}{infGLS} &
			\multicolumn{1}{c}{SOLS} &
			\multicolumn{1}{c}{SUR} &
			\multicolumn{1}{c}{FGLS} &
			\multicolumn{1}{c}{infSOLS} &
			\multicolumn{1}{c}{infSUR} &
			\multicolumn{1}{c}{infGLS} \\                                                                                   
			\midrule
			\multicolumn{13}{l}{$T=100$}                                                                                                         \\
			\midrule
			0                    & 0.999  & 1.048  & 3.56E-05 & 0.937  & 0.937  & 0.937  & 0.988  & 1.047  & 3.81E-05 & 0.894  & 0.894  & 0.894  \\
			0.3                  & 1.170  & 1.077  & 5.50E-05 & 1.098  & 0.959  & 0.907  & 1.186  & 1.064  & 5.43E-05 & 1.090  & 0.899  & 0.851  \\
			0.6                  & 2.166  & 1.474  & 7.09E-05 & 2.017  & 1.180  & 0.864  & 2.260  & 1.426  & 6.85E-05 & 2.025  & 1.046  & 0.785  \\
			0.8                  & 5.247  & 2.607  & 7.51E-05 & 6.547  & 2.474  & 0.878  & 5.720  & 2.589  & 7.40E-05 & 6.591  & 1.984  & 0.676  \\
			\midrule
			\multicolumn{13}{l}{$T=200$}                                                                                                         \\
			\midrule
			0                    & 1.012  & 1.042  & 4.09E-06 & 0.961  & 0.961  & 0.961  & 1.019  & 1.069  & 4.25E-06 & 0.939  & 0.939  & 0.939  \\
			0.3                  & 1.146  & 1.043  & 6.64E-06 & 1.101  & 0.960  & 0.942  & 1.216  & 1.069  & 6.62E-06 & 1.124  & 0.937  & 0.907  \\
			0.6                  & 2.045  & 1.295  & 1.01E-05 & 1.920  & 1.101  & 0.906  & 2.222  & 1.345  & 9.37E-06 & 2.032  & 1.056  & 0.851  \\
			0.8                  & 5.206  & 2.434  & 1.29E-05 & 5.502  & 1.971  & 0.916  & 6.197  & 2.701  & 1.11E-05 & 6.106  & 1.778  & 0.813  \\
			\midrule
			\multicolumn{13}{l}{$T=500$}                                                                                                         \\
			\midrule
			0                    & 1.013  & 1.028  & 2.41E-07 & 0.988  & 0.988  & 0.988  & 1.016  & 1.039  & 2.55E-07 & 0.973  & 0.973  & 0.973  \\
			0.3                  & 1.195  & 1.054  & 4.03E-07 & 1.162  & 0.998  & 0.975  & 1.224  & 1.052  & 4.02E-07 & 1.176  & 0.978  & 0.961  \\
			0.6                  & 1.877  & 1.154  & 7.45E-07 & 1.789  & 1.054  & 0.952  & 2.142  & 1.217  & 6.49E-07 & 1.993  & 1.023  & 0.914  \\
			0.8                  & 4.158  & 1.771  & 1.26E-06 & 4.011  & 1.456  & 0.935  & 5.341  & 2.031  & 1.03E-06 & 5.038  & 1.394  & 0.880  \\
			\bottomrule
		\end{tabular}%
	}
\end{table}

\begin{table}[H]
	\centering
	\caption{Empirical MSE for the coefficient $\beta_{i,4}$ of $x_{it}^2$ with $i=1$. under error Setting B. The column labeled FGLS contains the numerical value of the MSE of feasible FM-GLS. Other MSEs are expressed relative to this benchmark. Values above 1 indicate a better performance of feasible FM-GLS.}
	\label{tab:efficiency_varma}
	\resizebox{\textwidth}{!}{%
		\begin{tabular}{cccccccccccccc}
			\toprule
			\multicolumn{1}{l}{}           & \multicolumn{1}{l}{} & \multicolumn{6}{c}{$n=3$}                        & \multicolumn{6}{c}{$n=5$}                        \\
			\midrule
			$\big(\underline{\lambda},\bar{\lambda}\big)$ & $T$ & SOLS & SUR & FGLS & infSOLS & infSUR & infGLS & SOLS & SUR & FGLS & infSOLS & infSUR & infGLS      \\
			\midrule
			\multicolumn{14}{l}{Panel A: Low endogeneity $\theta=0.3$}                                                                                                  \\
			\midrule
			\multirow{3}{*}{$(0.1, 0.5)$}  & 100                  & 1.290 & 1.270 & 9.46E-05 & 1.270 & 1.131 & 0.865 & 1.256 & 1.275 & 9.51E-05 & 1.215 & 1.052 & 0.822 \\
			& 200                  & 1.216 & 1.163 & 1.27E-05 & 1.191 & 1.077 & 0.916 & 1.220 & 1.164 & 1.22E-05 & 1.201 & 1.038 & 0.896 \\
			& 500                  & 1.137 & 1.088 & 8.58E-07 & 1.133 & 1.033 & 0.961 & 1.159 & 1.106 & 7.65E-07 & 1.151 & 0.997 & 0.936 \\
			\midrule
			\multirow{3}{*}{$(0.5, 0.8)$}  & 100                  & 1.790 & 1.825 & 3.99E-05 & 1.792 & 1.522 & 0.638 & 1.812 & 1.896 & 3.81E-05 & 1.802 & 1.431 & 0.594 \\
			& 200                  & 1.677 & 1.681 & 4.91E-06 & 1.597 & 1.408 & 0.763 & 1.682 & 1.645 & 4.73E-06 & 1.581 & 1.271 & 0.733 \\
			& 500                  & 1.467 & 1.404 & 3.34E-07 & 1.392 & 1.248 & 0.897 & 1.522 & 1.415 & 3.01E-07 & 1.432 & 1.185 & 0.864 \\
			\midrule
			\multirow{3}{*}{$(0.8, 0.95)$} & 100                  & 0.123 & 0.136 & 1.48E-04 & 0.212 & 0.145 & 0.022 & 0.001 & 0.001 & 2.67E-02 & 0.002 & 0.001 & 0.000 \\
			& 200                  & 2.257 & 2.322 & 8.59E-07 & 2.537 & 1.986 & 0.517 & 1.622 & 1.577 & 1.04E-06 & 2.218 & 1.343 & 0.366 \\
			& 500                  & 2.098 & 2.050 & 5.47E-08 & 2.021 & 1.686 & 0.703 & 2.031 & 2.014 & 4.90E-08 & 1.964 & 1.394 & 0.632 \\
			\midrule
			\multicolumn{14}{l}{Panel B: High endogeneity $\theta=0.5$}                                                                                                 \\
			\midrule
			\multirow{3}{*}{$(0.1, 0.5)$}  & 100                  & 1.327 & 1.269 & 7.48E-05 & 1.579 & 1.230 & 0.899 & 1.335 & 1.299 & 7.26E-05 & 1.825 & 1.211 & 0.817 \\
			& 200                  & 1.232 & 1.161 & 9.91E-06 & 1.394 & 1.131 & 0.949 & 1.269 & 1.225 & 9.26E-06 & 1.653 & 1.166 & 0.886 \\
			& 500                  & 1.179 & 1.091 & 6.66E-07 & 1.297 & 1.079 & 0.980 & 1.162 & 1.093 & 5.99E-07 & 1.464 & 1.071 & 0.942 \\
			\midrule
			\multirow{3}{*}{$(0.5, 0.8)$}  & 100                  & 1.880 & 1.892 & 3.08E-05 & 2.732 & 1.861 & 0.661 & 1.850 & 1.833 & 3.05E-05 & 3.170 & 1.718 & 0.575 \\
			& 200                  & 1.793 & 1.727 & 3.72E-06 & 2.132 & 1.610 & 0.798 & 1.713 & 1.639 & 3.76E-06 & 2.143 & 1.381 & 0.697 \\
			& 500                  & 1.480 & 1.370 & 2.62E-07 & 1.678 & 1.299 & 0.941 & 1.577 & 1.461 & 2.44E-07 & 1.739 & 1.256 & 0.863 \\
			\midrule
			\multirow{3}{*}{$(0.8, 0.95)$} & 100                  & 0.030 & 0.031 & 5.14E-04 & 0.172 & 0.077 & 0.005 & 0.000 & 0.000 & 3.27E+03 & 0.000 & 0.000 & 0.000 \\
			& 200                  & 2.381 & 2.514 & 6.85E-07 & 5.002 & 2.825 & 0.541 & 1.880 & 1.771 & 7.83E-07 & 5.026 & 2.030 & 0.389 \\
			& 500                  & 2.186 & 2.090 & 4.23E-08 & 2.756 & 1.850 & 0.712 & 2.184 & 2.044 & 4.03E-08 & 2.786 & 1.560 & 0.619 \\
			\bottomrule
		\end{tabular}%
	}
\end{table}

\newpage
\clearpage
\begin{table}[p]
	\centering
	\caption{Empirical size ($\%$) of the single-equation Wald tests for $H_0:\beta_{1,4}=-0.3$ and the joint Wald tests for $H_0:\beta_{1,4}=\beta_{2,4}=\cdots=\beta_{n,4}=-0.3$, where $\beta_{i,4}$ denote the coefficients in front of $x_{it}^2$. The columns labeled by Wald-SOLS and Wald-SUR display the results of the Wald-type test statistics given in Proposition 2 by \cite{wagnergrabarczykhong2019}. The Wald-FGLS test is constructed as in Theorem \ref{thm:test}.}
	\label{table:sizeQSUCPR}
	\resizebox{0.9\textwidth}{!}{%
		\begin{tabular}{crrrrrrrr}
			\toprule
			\multicolumn{1}{l}{} & \multicolumn{3}{c}{$n=3$}        & \multicolumn{3}{c}{$n=5$}        \\
			\midrule
			$\rho$               & Wald-SOLS & Wald-SUR & Wald-FGLS & Wald-SOLS & Wald-SUR & Wald-FGLS \\
			\midrule
			\multicolumn{7}{l}{Panel A:   Single-equation test}                                        \\
			\midrule
			\multicolumn{7}{l}{$T=100$}                                                                \\
			\midrule
			0                    & 11.75     & 13.44    & 9.89      & 14.30     & 17.84    & 13.44     \\
			0.3                  & 13.25     & 15.09    & 9.92      & 15.32     & 19.59    & 13.55     \\
			0.6                  & 16.13     & 19.03    & 7.54      & 18.80     & 27.65    & 11.86     \\
			0.8                  & 20.56     & 26.60    & 4.70      & 26.72     & 43.11    & 10.13     \\
			\midrule
			\multicolumn{7}{l}{$T=200$}                                                                \\
			\midrule
			0                    & 9.02      & 10.00    & 7.48      & 9.90      & 12.00    & 8.47      \\
			0.3                  & 9.96      & 10.93    & 7.24      & 11.32     & 13.62    & 8.81      \\
			0.6                  & 12.68     & 14.10    & 5.93      & 14.62     & 19.40    & 7.71      \\
			0.8                  & 15.72     & 19.50    & 2.95      & 19.58     & 30.95    & 4.90      \\
			\midrule
			\multicolumn{7}{l}{$T=500$}                                                                \\
			\midrule
			0                    & 7.02      & 7.44     & 5.89      & 7.32      & 8.47     & 6.19      \\
			0.3                  & 8.07      & 8.41     & 5.76      & 8.54      & 9.50     & 6.04      \\
			0.6                  & 9.41      & 9.68     & 4.78      & 10.34     & 12.54    & 5.94      \\
			0.8                  & 10.92     & 12.48    & 3.09      & 13.47     & 18.96    & 3.73      \\
			\midrule
			\multicolumn{7}{l}{Panel B: Joint test}                                                    \\
			\midrule
			\multicolumn{7}{l}{$T=100$}                                                                \\
			\midrule
			0                    & 17.85     & 21.68    & 14.60     & 29.13     & 39.40    & 26.65     \\
			0.3                  & 20.67     & 24.62    & 14.44     & 32.60     & 44.91    & 28.14     \\
			0.6                  & 27.13     & 33.59    & 11.40     & 45.65     & 65.44    & 27.72     \\
			0.8                  & 36.99     & 48.49    & 8.30      & 63.29     & 86.38    & 26.57     \\
			\midrule
			\multicolumn{7}{l}{$T=200$}                                                                \\
			\midrule
			0                    & 11.66     & 13.74    & 9.00      & 17.49     & 22.90    & 13.26     \\
			0.3                  & 14.11     & 16.25    & 8.77      & 21.35     & 28.23    & 14.55     \\
			0.6                  & 19.08     & 22.71    & 7.10      & 30.58     & 44.07    & 12.88     \\
			0.8                  & 26.21     & 33.92    & 4.19      & 44.82     & 69.44    & 10.60     \\
			\midrule
			\multicolumn{7}{l}{$T=500$}                                                                \\
			\midrule
			0                    & 8.70      & 9.64     & 6.60      & 10.83     & 13.38    & 7.91      \\
			0.3                  & 9.84      & 10.74    & 6.23      & 13.29     & 16.19    & 8.19      \\
			0.6                  & 12.79     & 14.09    & 5.13      & 18.69     & 25.15    & 7.67      \\
			0.8                  & 16.02     & 19.56    & 3.35      & 26.86     & 41.72    & 5.28      \\
			\bottomrule
		\end{tabular}%
	}
\end{table}

\newpage
\clearpage	
\begin{figure}[!]
	\centering
	\includegraphics[width=0.6\textwidth]{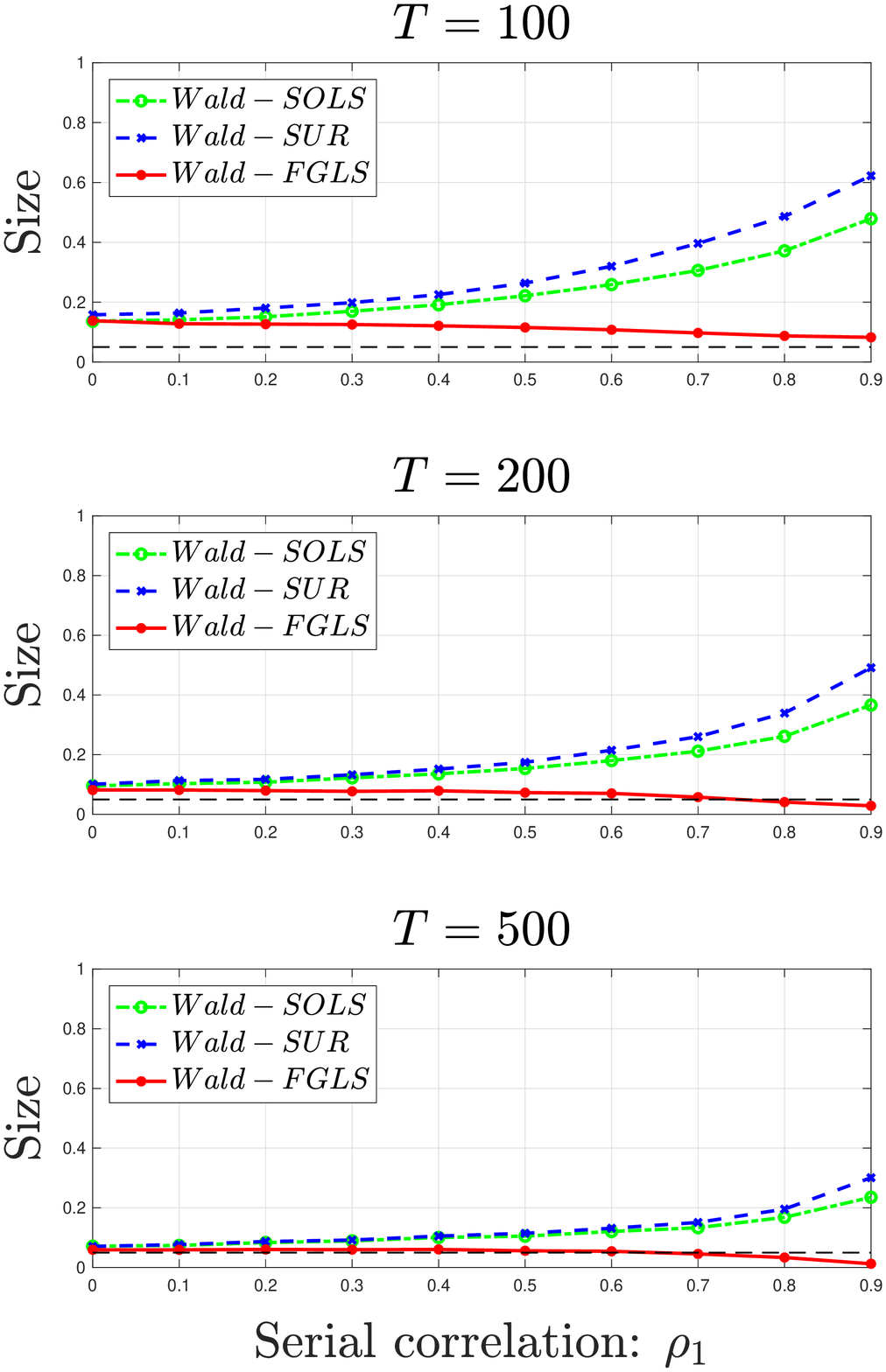}
	\caption{\footnotesize Empirical size of the joint Wald tests $H_0:\beta_{1,4}=\beta_{2,4}=\cdots=\beta_{n,4}=-0.3$, where $\beta_{i,4}$ denote the coefficients in front of $x_{it}^2$. The Wald-SOLS test (green) and Wald-SUR test (blue) are based on Proposition 2 by \cite{wagnergrabarczykhong2019}. The Wald-FGLS test (red) is found in Theorem \ref{thm:test}. We vary the serial correlation parameter $\rho_1$ from $0$ to $0.9$ while keeping $\rho_2=\rho_3=\rho_4=0.8$ fixed. The cross-sectional dimension is $n=3$.}
	\label{fig:size_sc}
\end{figure}	
	
\newpage
\clearpage	
\begin{figure}[!]
	\centering
	\includegraphics[width=0.6\textwidth]{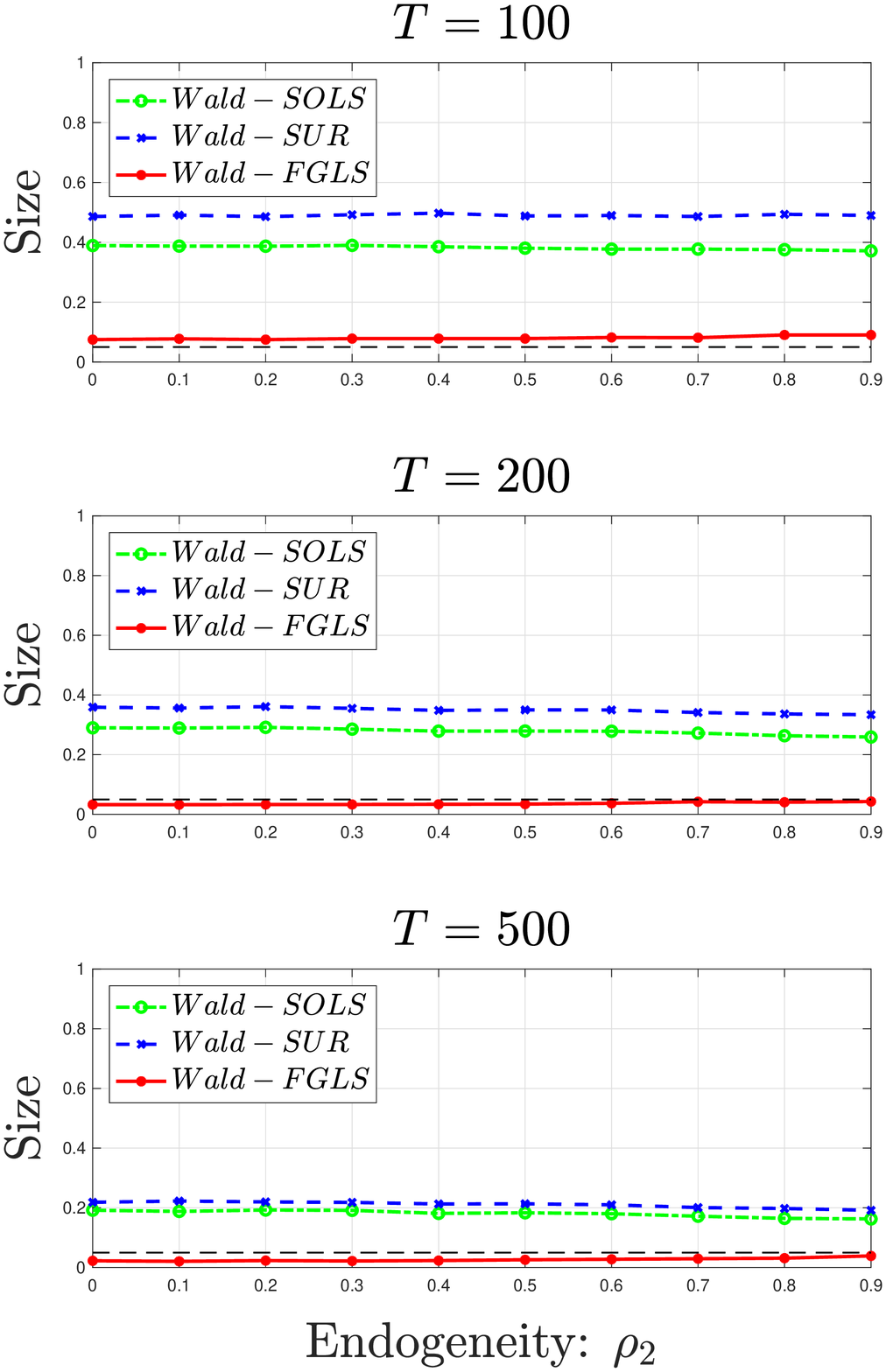}
	\caption{\footnotesize Empirical size of the joint Wald tests $H_0:\beta_{1,4}=\beta_{2,4}=\cdots=\beta_{n,4}=-0.3$, where $\beta_{i,4}$ denote the coefficients in front of $x_{it}^2$. The Wald-SOLS test (green) and Wald-SUR test (blue) are based on Proposition 2 by \cite{wagnergrabarczykhong2019}. The Wald-FGLS test (red) is found in Theorem \ref{thm:test}. We vary the endogeneity parameter $\rho_2$ from $0$ to $0.9$ while keeping $\rho_1=\rho_3=\rho_4=0.8$ fixed. The cross-sectional dimension is $n=3$.}
	\label{fig:size_endo}
\end{figure}	

\newpage
\clearpage	
\begin{figure}[!]
	\centering
	\includegraphics[width=\textwidth]{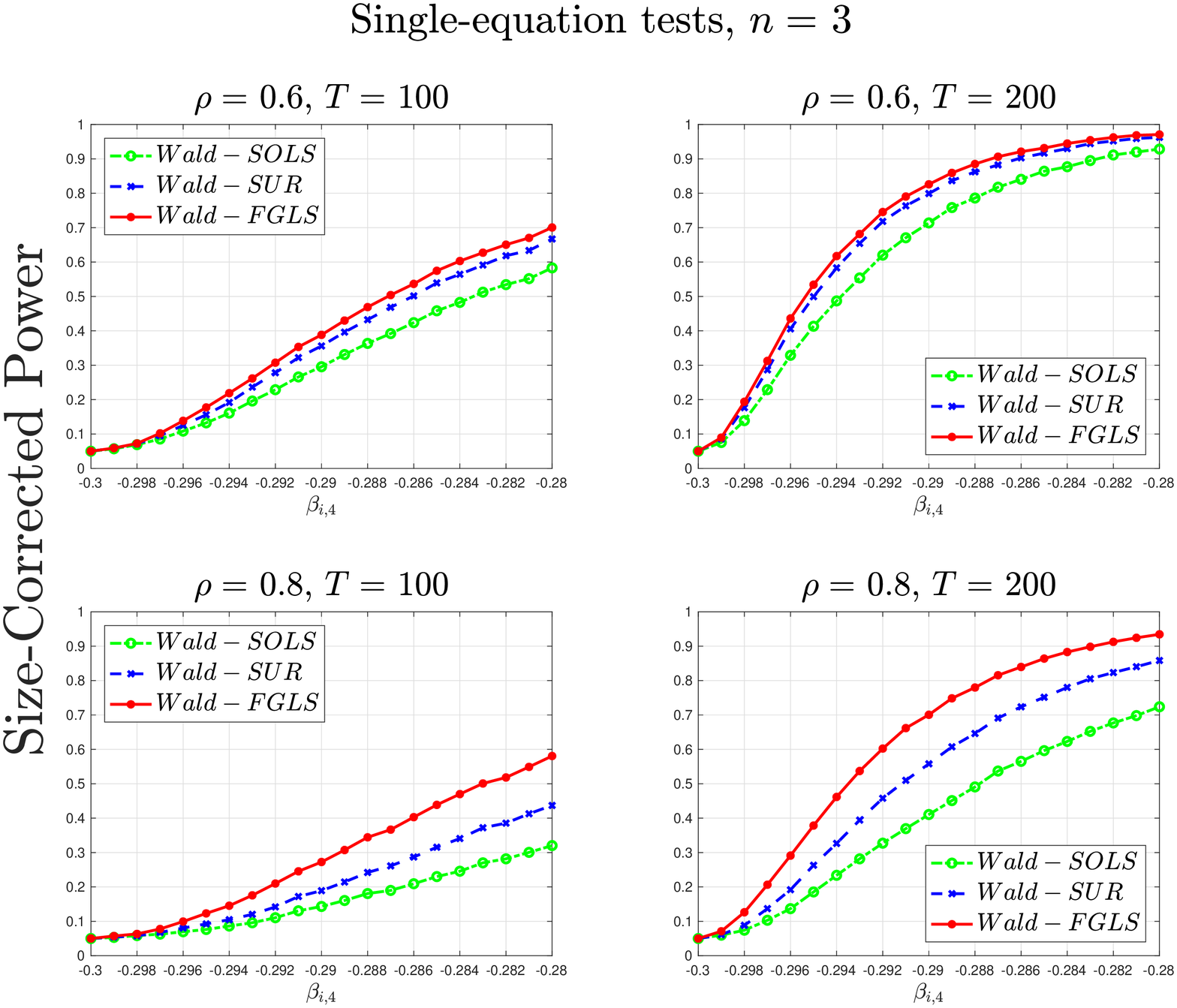}
	\caption{\footnotesize Empirical size-corrected power of the single-equation Wald tests $H_0:\beta_{1,4}=-0.3$ where $\beta_{1,4}$ is the coefficient in front of $x_{1t}^2$. We consider $n=3$, $T\in\{100,200\}$, and $\rho_1=\rho_2=\rho_3=\rho_4=\rho$ with $\rho\in\{0.6,0.8\}$. The Wald-SOLS test (green) and Wald-SUR test (blue) are based on Proposition 2 by \cite{wagnergrabarczykhong2019}. The Wald-FGLS test (red) is found in Theorem \ref{thm:test}.}
	\label{fig:power_singletest_n3}
\end{figure}

\newpage
\clearpage	
\begin{figure}[!]
	\centering
	\includegraphics[width=\textwidth]{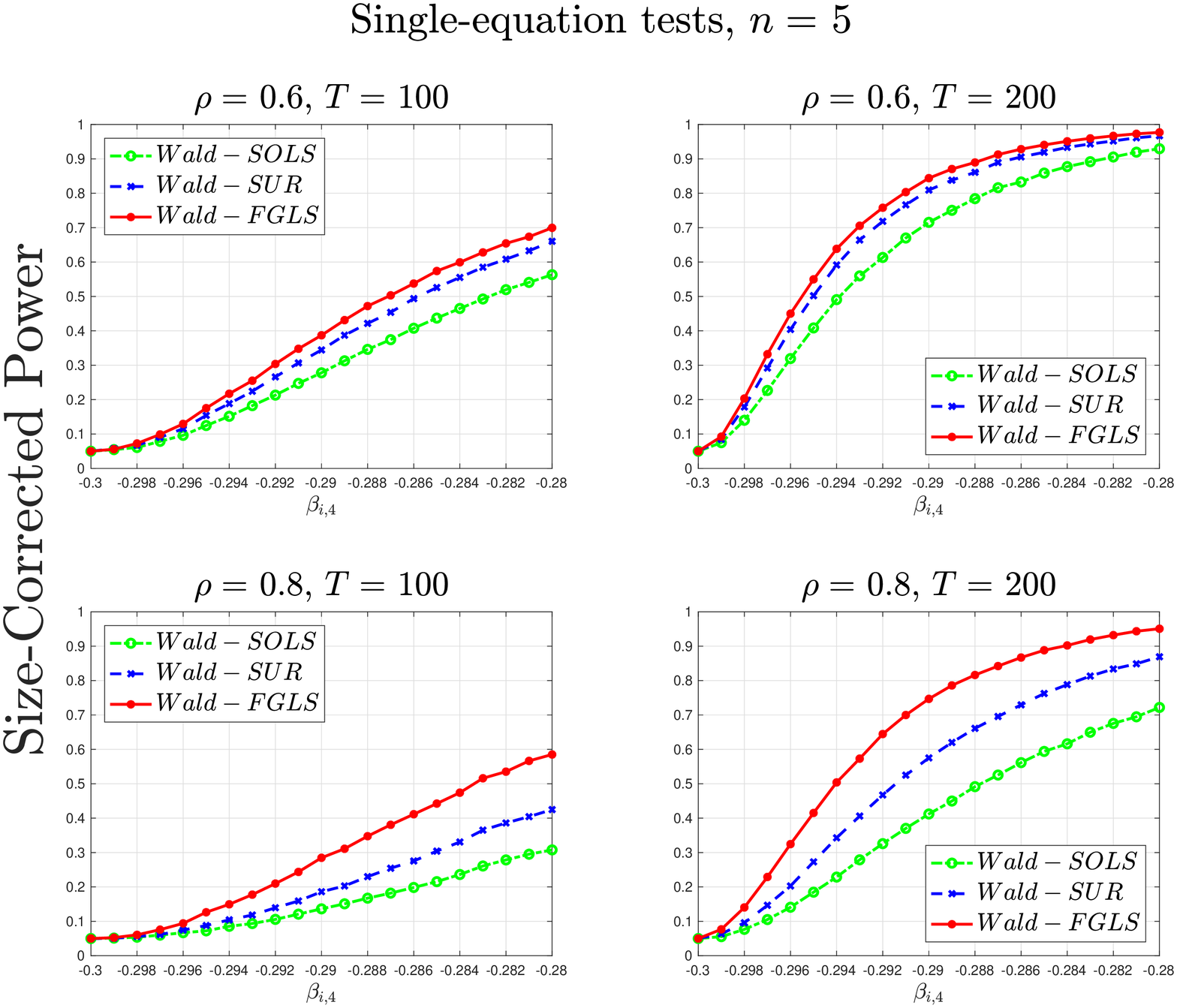}
	\caption{\footnotesize Empirical size-corrected power of the single-equation Wald tests $H_0:\beta_{1,4}=-0.3$ where $\beta_{1,4}$ is the coefficient in front of $x_{1t}^2$. We consider $n=5$, $T\in\{100,200\}$, and $\rho_1=\rho_2=\rho_3=\rho_4=\rho$ with $\rho\in\{0.6,0.8\}$. The Wald-SOLS test (green) and Wald-SUR test (blue) are based on Proposition 2 by \cite{wagnergrabarczykhong2019}. The Wald-FGLS test (red) is found in Theorem \ref{thm:test}.}
	\label{fig:power_singletest_n5}
\end{figure}	

\newpage
\clearpage	
\begin{figure}[!]
	\centering
	\includegraphics[width=\textwidth]{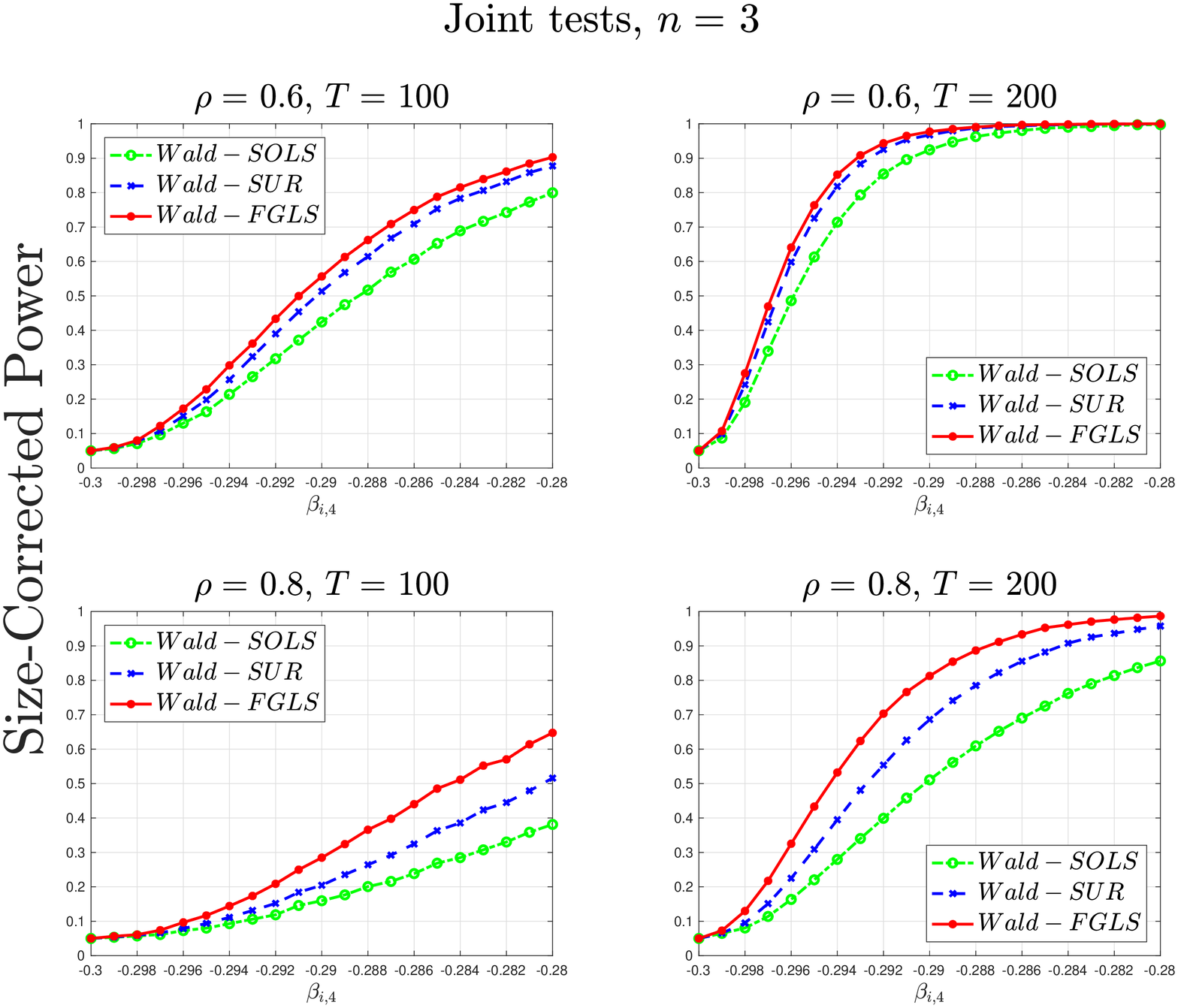}
	\caption{\footnotesize Empirical size-corrected power of the joint Wald tests $H_0:\beta_{1,4}=\beta_{2,4}=\cdots=\beta_{n,4}=-0.3$ where $\beta_{i,4}$ are the coefficients in front of $x_{it}^2$. We consider $n=3$, $T\in\{100,200\}$, and $\rho_1=\rho_2=\rho_3=\rho_4=\rho$ with $\rho\in\{0.6,0.8\}$. The Wald-SOLS test (green) and Wald-SUR test (blue) are based on Proposition 2 by \cite{wagnergrabarczykhong2019}. The Wald-FGLS test (red) is found in Theorem \ref{thm:test}.}
	\label{fig:power_jointtest_n3}
\end{figure}

\newpage
\clearpage	
\begin{figure}[!]
	\centering
	\includegraphics[width=\textwidth]{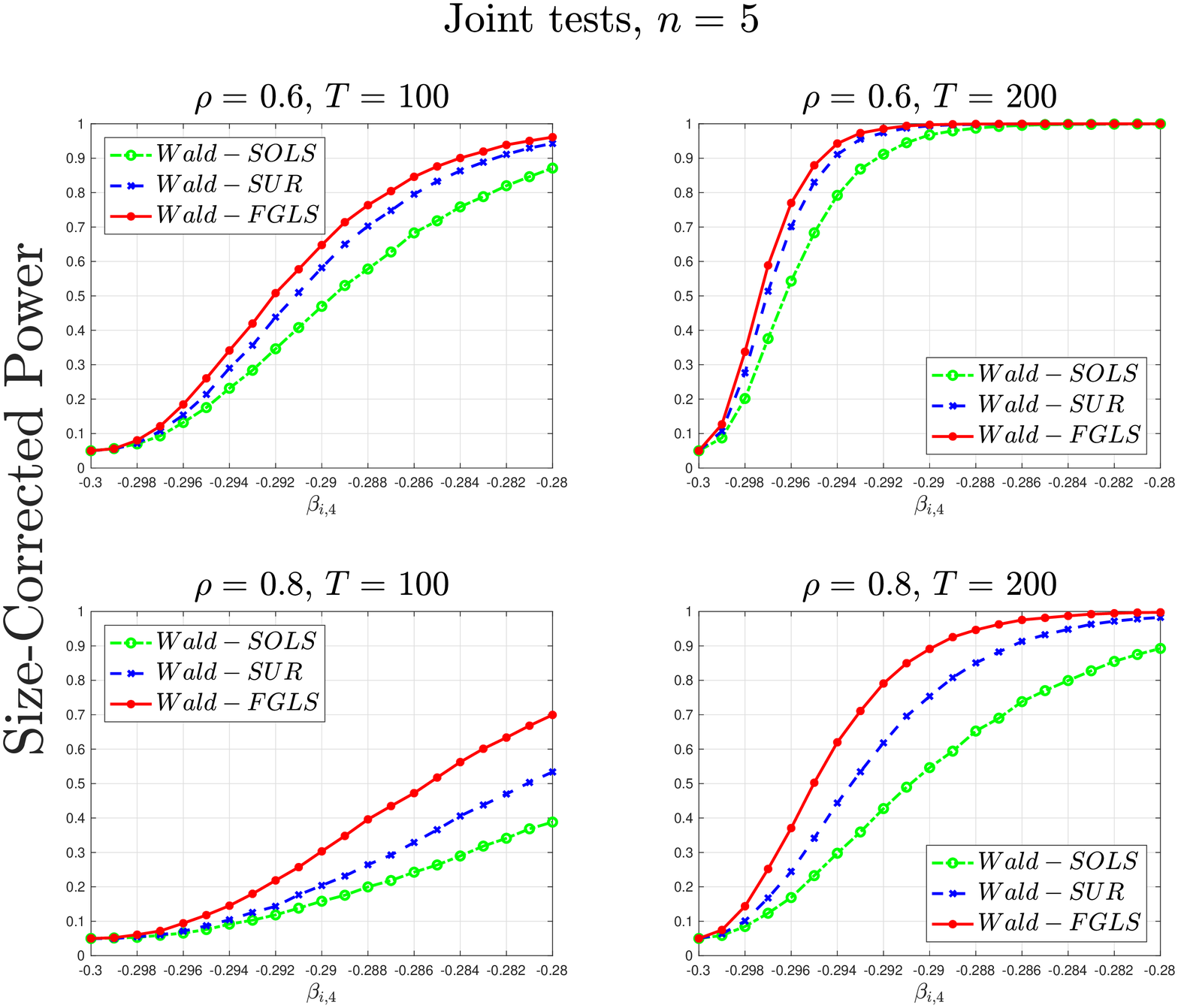}
	\caption{\footnotesize Empirical size-corrected power of the joint Wald tests $H_0:\beta_{1,4}=\beta_{2,4}=\cdots=\beta_{n,4}=-0.3$ where $\beta_{i,4}$ are the coefficients in front of $x_{it}^2$. We consider $n=5$, $T\in\{100,200\}$, and $\rho_1=\rho_2=\rho_3=\rho_4=\rho$ with $\rho\in\{0.6,0.8\}$. The Wald-SOLS test (green) and Wald-SUR test (blue) are based on Proposition 2 by \cite{wagnergrabarczykhong2019}. The Wald-FGLS test (red) is found in Theorem \ref{thm:test}.}
	\label{fig:power_jointtest_n5}
\end{figure}	

\newpage
\begin{table}[]
	\centering
	\caption{Empirical size ($\%$) and power ($\%$) of Bonferroni-type (multivariate) subsampling KPSS tests. The integer $J_1$ in Power DGP1 indicates the number of unit roots contained in errors $\{\vu_t\}$, $J_2$ is related to the number of equations that exclude cubic power terms $x_{it}^3$, $J_3$ specifies the number of spurious relations.}
	\label{tab:ct_tests}
	\resizebox{0.9\textwidth}{!}{%
		\begin{threeparttable}
		\begin{tabular}{lccrrrrrr}
			\toprule
			\multicolumn{1}{l}{} & \multicolumn{1}{l}{} & \multicolumn{1}{l}{} & \multicolumn{3}{c}{$n=3$} & \multicolumn{3}{c}{$n=5$} \\
			\midrule
			\multicolumn{1}{l}{} & \multicolumn{1}{l}{} & $T$ & \multicolumn{1}{c}{$K^{SOLS}$} & \multicolumn{1}{c}{$K^{SUR}$} & \multicolumn{1}{c}{$K^{BIAM}$} & \multicolumn{1}{c}{$K^{SOLS}$} & \multicolumn{1}{c}{$K^{SUR}$} & \multicolumn{1}{c}{$K^{BIAM}$} \\
			\midrule
			\multicolumn{9}{l}{Panel A: Size} \\
			\midrule
			\multirow{9}{*}{$\big(\underline{\lambda}, \bar{\lambda}\big)$, serial correlation} & \multirow{3}{*}{$(0.1, 0.5)$} & 100 & 0.37 & 0.34 & 0.47 & 0.33 & 0.25 & 0.26 \\
			&  & 200 & 0.42 & 0.37 & 0.53 & 0.42 & 0.28 & 0.18 \\
			&  & 500 & 0.71 & 0.60 & 1.24 & 0.64 & 0.58 & 0.80 \\
			\cmidrule(lr){2-9}
			& \multirow{3}{*}{$(0.5, 0.8)$} & 100 & 8.10 & 7.28 & 1.86 & 22.35 & 19.65 & 1.29 \\
			&  & 200 & 3.54 & 3.22 & 2.91 & 8.70 & 7.57 & 1.09 \\
			&  & 500 & 1.92 & 1.69 & 4.54 & 4.02 & 3.40 & 4.78 \\
			\cmidrule(lr){2-9}
			& \multirow{3}{*}{$(0.8, 0.95)$} & 100 & 50.63 & 48.67 & 16.17 & 90.99 & 89.69 & 18.41 \\
			&  & 200 & 40.48 & 38.56 & 14.26 & 83.71 & 81.78 & 12.16 \\
			&  & 500 & 21.87 & 20.36 & 16.25 & 60.09 & 56.87 & 15.49 \\
			\midrule
			\multicolumn{9}{l}{Panel B: Power DGP1} \\
			\midrule
			\multirow{9}{*}{$J_1$, $\#\left\{\text{unit roots}\right\}$} & \multirow{3}{*}{1} & 100 & 44.03 & 39.92 & 21.14 & 57.65 & 52.05 & 14.00 \\
			&  & 200 & 57.97 & 50.90 & 33.85 & 74.09 & 66.64 & 27.24 \\
			&  & 500 & 67.23 & 57.37 & 56.63 & 88.41 & 80.23 & 49.93 \\
			\cmidrule(lr){2-9}
			& \multirow{3}{*}{2} & 100 & 66.64 & 63.83 & 38.07 & 84.77 & 80.36 & 29.12 \\
			&  & 200 & 77.77 & 73.00 & 52.88 & 94.38 & 91.16 & 49.87 \\
			&  & 500 & 84.19 & 78.83 & 78.40 & 98.57 & 96.89 & 72.02 \\
			\cmidrule(lr){2-9}
			& \multirow{3}{*}{$n$} & 100 & 78.89 & 77.41 & 54.26 & 99.39 & 99.24 & 64.89 \\
			&  & 200 & 88.40 & 86.53 & 70.07 & 99.97 & 99.95 & 88.75 \\
			&  & 500 & 91.28 & 90.19 & 89.55 & 100.00 & 100.00 & 96.53 \\
			\midrule
			\multicolumn{9}{l}{Panel C: Power DGP2} \\
			\midrule
			\multirow{9}{*}{$J_2$, $\#\left\{\text{misspecified equations}\right\}$} & \multirow{3}{*}{1} & 100 & 10.92 & 10.51 & 3.47 & 18.67 & 18.40 & 1.91 \\
			&  & 200 & 26.41 & 25.58 & 7.72 & 41.90 & 41.41 & 4.95 \\
			&  & 500 & 55.72 & 55.28 & 24.10 & 77.76 & 77.22 & 18.36 \\
			\cmidrule(lr){2-9}
			& \multirow{3}{*}{2} & 100 & 17.83 & 16.97 & 6.15 & 30.77 & 29.77 & 3.70 \\
			&  & 200 & 40.29 & 39.04 & 15.57 & 62.01 & 61.15 & 10.51 \\
			&  & 500 & 69.64 & 68.50 & 38.48 & 91.66 & 91.27 & 30.87 \\
			\cmidrule(lr){2-9}
			& \multirow{3}{*}{$n$} & 100 & 23.16 & 22.53 & 9.88 & 53.66 & 51.21 & 10.45 \\
			&  & 200 & 47.25 & 45.60 & 22.83 & 83.78 & 82.73 & 29.05 \\
			&  & 500 & 75.60 & 73.98 & 56.79 & 98.23 & 98.15 & 69.35 \\
			\midrule
			\multicolumn{9}{l}{Panel D: Power DGP3} \\
			\midrule
			\multirow{9}{*}{$J_3$, $\#\left\{\text{spurious relations}\right\}$} & \multirow{3}{*}{1} & 100 & 77.03 & 77.06 & 28.34 & 94.26 & 94.12 & 21.04 \\
			&  & 200 & 89.55 & 89.14 & 35.64 & 98.97 & 98.84 & 31.79 \\
			&  & 500 & 97.26 & 97.05 & 55.61 & 99.91 & 99.91 & 48.48 \\
			\cmidrule(lr){2-9}
			& \multirow{3}{*}{2} & 100 & 87.40 & 86.75 & 52.85 & 98.95 & 98.81 & 43.05 \\
			&  & 200 & 94.95 & 94.33 & 57.95 & 99.89 & 99.89 & 58.15 \\
			&  & 500 & 98.69 & 98.49 & 77.59 & 100.00 & 100.00 & 69.59 \\
			\cmidrule(lr){2-9}
			& \multirow{3}{*}{$n$} & 100 & 91.55 & 90.60 & 71.95 & 99.97 & 99.92 & 87.87 \\
			&  & 200 & 96.71 & 96.33 & 74.85 & 99.99 & 100.00 & 93.86 \\
			&  & 500 & 98.04 & 97.62 & 91.66 & 100.00 & 100.00 & 98.02\\
			\bottomrule
		\end{tabular}%
	    \begin{tablenotes}[para,flushleft]
	    	Note: To decrease the computational burden we reduced the number of Monte Carlo replications to $10^4$.
    	\end{tablenotes}
    \end{threeparttable}
	}
\end{table}
\end{appendices}

\begin{table}[h!]
	\centering
	\caption{Outcomes for the joint tests of cointegration in the empirical study. At a significance level of 5\%, the null hypothesis of cointegration is rejected when `'rejection rule' is less than 5\%.}
	\label{tab:multivariateKPSS}
	\begin{tabular}{lrrr}
		\toprule
		& \multicolumn{1}{c}{$K^{SOLS}$} & \multicolumn{1}{c}{$K^{SUR}$} & \multicolumn{1}{c}{$K^{BIAM}$} \\
		\midrule
		Statistic 				& 16.54	& 12.66	& 8.19 \\
		Rejection Rule (in \%)	& 0.00	& 0.03	& 2.73 \\
		Block Size				& 22 		& 20 		& 22 \\
		$\#\{\text{blocks}\}$		& 6 		& 7 		& 6\\
		\bottomrule
	\end{tabular}
\end{table}

\begin{figure}[h!]
	\centering
	\includegraphics[width=\textwidth]{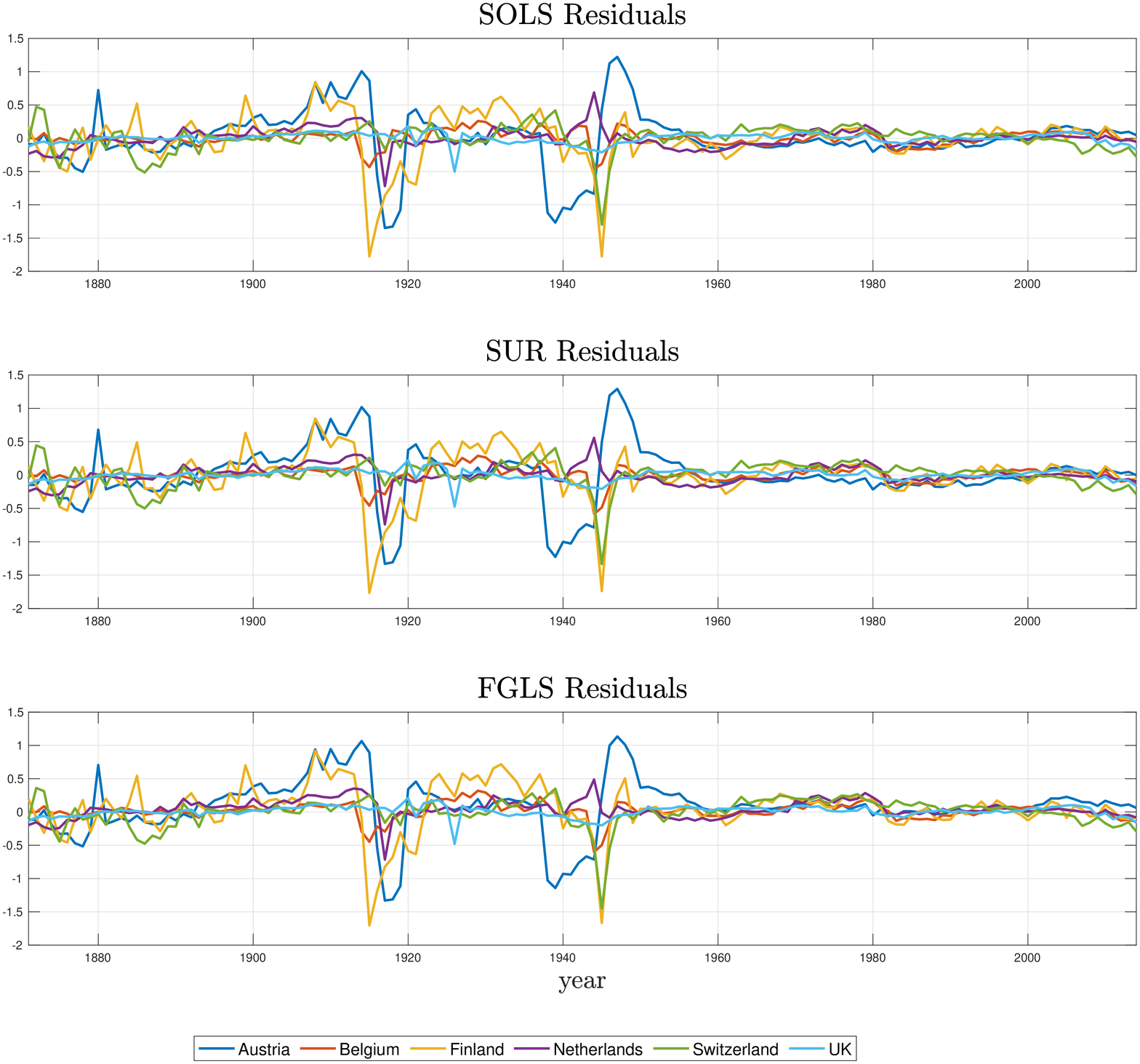}
	\caption{\footnotesize The plots of the residuals $\hat \vu_{t,SOLS}=\vy_t-  \mZ_t \widehat{\vbeta}_{SOLS}^+$ (top), $\hat \vu_{t,SUR}=\vy_t-  \mZ_t \widehat{\vbeta}_{SUR}^+$ (middle), and $\hat \vu_{t,FGLS}=\vy_t-  \mZ_t \widehat{\vbeta}_{FGLS}^+$ (bottom) for the empirical study.}
	\label{fig:KPSSresiduals}
\end{figure}

\begin{table}[h!]
	\centering
	\caption{Estimates for $\beta_{i,3}$ and $\beta_{i,4}$ from the quadratic EKC model in \eqref{eq:ekc_quadraticmodel}. The numbers between parentheses are 95\% asymptotic confidence intervals. Turning points are computed as $\exp\left(-\widehat\beta_{i,3}/2\,\widehat\beta_{i,4}\right)$.}
	\label{tab:estimationresults}
	\begin{tabular}{ll c c c }
		\toprule
									&  		& \multicolumn{1}{c}{$x_t$} & \multicolumn{1}{c}{$x_t^2$} & \multicolumn{1}{c}{Turning point} \\
		\midrule
									& SOLS 	& 9.173			& -0.411				& 68,900 \\
									&		& (6.797,11.548)	& (-0.535,-0.288)\\
		\multirow{2}{*}{Austria}			& SUR	& 8.494			& -0.370				& 76,211 \\
									&		& (6.764,10.223)	& (-0.464,-0.276)\\
									& FGLS	& 6.553			& -0.276				& 708,712 \\
									&		& (2.138,10.967)	& (-0.513,-0.040)\\
		\midrule	
									& SOLS 	& 12.927			& -0.645 				& 22,420 \\
									&		& (11.795,15.059)	& (-0.702,-0.589)\\
		\multirow{2}{*}{Belgium}			& SUR 	& 9.973			& -0.503				& 20,195\\
									&		& (9.158,10.787)	& (-0.545,-0.461)\\
									& FGLS	& 8.762			& -0.443				& 19,795\\
									&		& (7.236,10,287)	& (-0.521,-0.365)\\
		\midrule
									& SOLS	& 15.676			& -0.716				& 56,967\\
									&		& (14.162,17.289)	& (-0.788,-0.643)\\
		\multirow{2}{*}{Finland}			& SUR	& 15.136			& -0.684 				& 63,400\\
									&		& (14.030,16.242)	& (-0.742,-0.627)\\
									& FGLS	& 14.392			& -0.646				&68,775\\
									& 		& (12.075,16.708)	& (-0.769,-0.523)\\
		\midrule
									& SOLS 	& 11.382			& -0.540				&38,120 \\
									&		& (10.140,12.624)	& (-0.602,-0.477)\\
		\multirow{2}{*}{Netherlands}		& SUR 	& 10.063			& -0.475				&39,637 \\
									&		& (9.183,10.944)	& (-0.522,-0.429)\\
									& FGLS 	& 9.102			& -0.430				& 39,908\\
									&		& (7.606,10.597)	& (-0.506,-0.353)\\
		\midrule
									& SOLS	& 7.070			& -0.232				& $4.287\times 10^6$\\
									&		& (5.516,8.624)		& (-0.310,-0.153)\\
		\multirow{2}{*}{Switzerland}		& SUR	& 6.962			& -0.232				& $3.316\times 10^6$\\
									&		& (5.481,8.443)		& (-0.308,-0.156)\\
									& FGLS	& 7.052			& -0.254				& $1.074\times 10^6$\\
									&		& (5.173,8.932)		& (-0.350,-0.158)\\
		\midrule
									& SOLS 	& 8.450			& -0.429				& 20,242 \\
									&		& (6.976,10.020)	& (-0.502,-0.355)\\
		\multirow{2}{*}{United Kingdom}	& SUR	& 9.523			& -0.475				& 22,596\\
									&		& (8.486,10.560)	& (-0.527,-0.423)\\
									& FGLS	& 9.056			& -0.453				& 21,887\\
									&		& (7.244,10.868)	& (-0.542,-0.364)\\
		\bottomrule
	\end{tabular}
\end{table}

\clearpage
\newpage
\begin{center}
	Supplemental Appendix to:\\\bigskip
	\begin{Large}
		Efficient Estimation by Fully Modified GLS with an Application to the Environmental Kuznets Curve\\
		\bigskip
	\end{Large}
	\large{Yicong Lin, and Hanno Reuvers}
\end{center}

\appendix
\section{Additional Proofs}\label{sec:additional_proofs}

\begin{lemma}\label{L2Baxter}
 If $\{ \vu_t\}$ satisfies Assumption \ref{assumpt1:linearproc}, then for any $m\geq 1$, there exists a constant $C>0$ such that
 	\begin{equation}
	\sum_{j=1}^{m}\left\|\mA_j(m)-\mA_j\right\|_{\calF}^2\leq C\sum_{j=m+1}^{\infty}\left\|\mA_j\right\|_{\calF}^2.
	\label{eq:baxter}
	\end{equation}
\end{lemma}
\begin{proof}
	In view of page 257 of \cite{hannandeistler2012}, the summability condition of Assumption \ref{assumpt1:linearproc} implies that the spectral density matrix is bounded and bounded away from zero. The boundedness condition in \cite{chengpourahmadi1993} is thus satisfied and \eqref{eq:baxter} follows from their Theorem 2.2.
\end{proof}

\begin{lemma}[Implications of the First Moment Bound Theorem]\label{lem:fmbt}
Let Assumption \ref{assumpt1:linearproc} hold, and define
	\begin{equation}\label{eq:def_reverse_errors}
	\veta_{t+1,\ell}=\vu_{t+1}-\mA(\ell)\vu_t(\ell),\qquad \text{where}\quad \vu_t(\ell)=\left[\vu_t',\vu_{t-1}',\cdots,\vu_{t-\ell+1}'\right]'.
	\end{equation}
The following three inequalities are true:
	\begin{enumerate}[(a)]
		\item $\E\Big\|\frac{1}{T-q}\sum_{t=q}^{T-1}\vu_t(q)\vu_t(q)'-\E\left(\vu_t(q)\vu_t(q)'\right)\Big\|^2\leq C\frac{q^2}{T-q}$;
		\item $\E\left\|\frac{1}{T-\ell}\sum_{t=\ell}^{T-1}\big(\veta_{t+1,\ell}-\veta_{t+1}\big)\vu_t(\ell)'\right\|^r\leq C\left(\frac{\ell}{T-\ell}\right)^{r/2}\Big(\sum_{j=\ell+1}^{\infty}\left\|\mA_j\right\|_{\calF}^2\Big)^{r/2}$, for some $r\geq 2$ and any $1\leq\ell\leq q$;
		\item $\E\left\|\frac{1}{T-\ell}\sum_{t=\ell}^{T-1}\Big(\veta_{t+1,\ell}^{}\veta_{t+1,\ell}'\Big)-\E\Big(\veta_{t+1,\ell}^{}\veta_{t+1,\ell}'\Big)\right\|^r\leq C(T-\ell)^{-r/2}$, for some $r\geq 2$ and any $1\leq\ell\leq q$.
	\end{enumerate}
\end{lemma}
\begin{proof}
 \textbf{(a)} Since $\|\cdot\|^2\leq \|\cdot\|_{\calF}^2$, we obtain 
	\begin{multline}\label{eq:resulta}
	\E\Bigg\|\frac{1}{T-q}\sum_{t=q}^{T-1}\vu_t(q)\vu_t(q)'-\E\left(\vu_t(q)\vu_t(q)'\right)\Bigg\|^2\\
	\leq \frac{1}{(T-q)^2} \sum_{k,\ell=1}^{q} \sum_{i,j=1}^{n} \E \left[ \sum_{t=q}^{T-1}\left[u_{i,t-k+1} u_{j,t-\ell+1}-\E(u_{i,t-k+1}u_{j,t-\ell+1})\right] \right]^2,
	\end{multline}
	where $u_{i,t}$ denotes the $i^{th}$ element $\vu_t$. As remarked in the main text, the lag polynomial $\lagpol{A}(L)$ is invertible by Assumption \ref{assumpt1:linearproc}. Recall $\lagpol{C}(L)=[\lagpol{A}(L)]^{-1}=\sum_{j=0}^\infty \mC_j L^j$ with $\mC_0=\mI_n$, and $\sum_{j=0}^\infty j \,\| \mC_j \|_{\calF}<\infty$. We observe that $u_{i,t}=\sum_{j=0}^{\infty}\row_i\big(\mC_j\big)\veta_{t-j}$. By Proposition 10.2(b) of \cite{hamilton1994}, absolute summability of the coefficient matrices $\{\mC_j\}_{j=0}^\infty$ implies $\sum_{s=0}^\infty |\gamma_{u,k}(s)|<\infty$ where $\gamma_{u,k}(s)=\E(u_{k,t}u_{k,t-s})$. The conditions for the First Moment Bound Theorem (FMBT) in \cite{findleywei1993} are thus satisfied. Choosing $q(t,s)=1$ if $t=s\geq q$ (the banding parameter) and $q(t,s)=0$ otherwise,
	\begin{equation*}
	\begin{aligned}
	\E \Bigg[ \sum_{t=q}^{T-1} &\left[u_{i,t-k+1} u_{j,t-\ell+1}-\E(u_{i,t-k+1}u_{j,t-\ell+1})\right] \Bigg]^2=\E\Bigg[\sum_{t,s=1}^{T-1}q(t,s)\left[u_{i,t-k+1}u_{j,s-\ell+1}-\E(u_{i,t-k+1}u_{j,s-\ell+1})\right]\Bigg]^2 \\
	&\leq C\sum_{t,s,l,w=1}^{T-1}q(t,s)q(l,w)\gamma_{u,i}(t-l)\gamma_{u,j}(s-w)=  C\sum_{t,l=q}^{T-1} \gamma_{u,i}(t-l)\gamma_{u,j}(t-l)\\
	&\leq C\gamma_{u,j}(0)\sum_{t,l=q}^{T-1}|\gamma_{u,i}(t-l)|\leq C(T-q)\left[\gamma_{u,j}(0)\sum_{t=-\infty}^{\infty}|\gamma_{u,i}(t)|\right]\leq C(T-q),
	\end{aligned}
	\end{equation*}
	by the FMBT. This bound holds for general $k$, $\ell$, $i$ and $j$, and  \textbf{(a)} thereby follows from (\ref{eq:resulta}).
	
 \textbf{(b)} For $1\leq \ell\leq q$ and $r\geq 2$, we have
	\begin{equation}
	\E\left\|\frac{1}{T-\ell}\sum_{t=\ell}^{T-1}\left(\veta_{t+1,\ell}-\veta_{t+1}\right)\vu_t(\ell)'\right\|^r	\leq \ell^{r/2-1}(T-\ell)^{-r}\sum_{s=0}^{\ell-1}\E\left\|\sum_{t=\ell}^{T-1}\left(\veta_{t+1,\ell}-\veta_{t+1}\right)\vu_{t-s}'\right\|_{\calF}^r,
	\label{eq:partb1}
	\end{equation}
	by $\| \cdot \|^r\leq (\| \cdot \|_{\calF}^2)^{r/2}$ and the $c_r$-inequality. By assumption, $\veta_{t+1}$ is uncorrelated with $\big[\vu_{t-\ell+1}',\ldots,\vu_t'\big]'$ implying that $\E\big[\sum_{t=\ell}^{T-1} (\veta_{t+1,\ell}-\veta_{t+1})\vu_{t-s}'\big]=\mZeros$. The FMBT can thus be applied directly without having to express the quadratic form in deviations from the mean. However, some rewriting is needed to obtain expressions in scalar random sequences. To this end, use $A_{j,kl}$ and $A_{j,kl}(\ell)$ to denot the $(k,l)^{th}$ element of $\mA_j$ and $\mA_{j}(\ell)$, respectively. Setting $\mA_{j}(\ell)=\mZeros$ for $j>\ell$, we have $\veta_{t+1,\ell}-\veta_{t+1}=\sum_{j=1}^{\infty}\big[\mA_j-\mA_{j}(\ell)\big]\vu_{t+1-j}$,
	and hence
	\begin{equation}
	\begin{aligned}
	\E\Bigg\|\sum_{t=\ell}^{T-1}&\left(\veta_{t+1,\ell}-\veta_{t+1}\right)\vu_{t-s}'\Bigg\|_{\calF}^r=\E\left\{\sum_{k,m=1}^{n}\left[\sum_{t=\ell}^{T-1}u_{m,t-s}\sum_{j=1}^{\infty}\sum_{l=1}^{n}\left(A_{j,kl}-A_{j,kl}(\ell)\right)u_{l,t+1-j}\right]^2\right\}^{r/2}\\
	&\leq n^{r-2}\sum_{k,m=1}^{n}\E\left|\sum_{t=\ell}^{T-1}u_{m,t-s}\sum_{j=1}^{\infty}\sum_{l=1}^{n}\left(A_{j,kl}-A_{j,kl}(\ell)\right)u_{l,t+1-j}\right|^r=n^{r-2}\sum_{k,m=1}^{n} \E\left|\sum_{t=\ell}^{T-1}u_{m,t-s}^{}u_t^*\right|^r
	\end{aligned}
	\label{eq:univariateSeriespartb}
	\end{equation}
	with $u_t^*=\sum_{j=1}^{\infty}\sum_{l=1}^{n}\big(A_{j,kl}-A_{j,kl}(\ell)\big)u_{l,t+1-j}$, where we suppress the dependence on the index $k$ (also below) without confusion. To apply the FMBT, we define the autocovariances $\gamma_{u^*}(t-h)=\E\big(u_{t}^*u_{h}^*\big)$, the difference in lag polynomial coefficients $\va_{l}(\ell)=\left[A_{1,kl}-A_{1,kl}(\ell),A_{2,kl}-A_{2,kl}(\ell),\ldots\right]'$ and $\mSigma_{u_l,\infty}=\big[\gamma_{u,l}(i-j),1\leq i,j<\infty\big]$. By the Cauchy-Schwartz inequality, the $c_r$-inequality, and boundedness of the maximum eigenvalue of $\mSigma_{u_l,\infty}$, we obtain
	\begin{equation}
	\begin{aligned}
	\gamma_{u^*}(t-h) &\leq \gamma_{u^*}(0)=\E\left[\sum_{l=1}^{n}\sum_{j=1}^{\infty}\left(A_{j,kl}-A_{j,kl}(\ell)\right)u_{l,t+1-j}\right]^2 \leq n\sum_{l=1}^{n}\left[\va_{kl}(\ell)'\mSigma_{u_l,\infty}\va_{kl}(\ell)^{}\right] \\
	&\leq Cn\sum_{l=1}^{n}\|\va_{kl}(\ell)\|^2 \leq Cn\sum_{j=1}^{\infty}\left\|\mA_j-\mA_j(\ell)\right\|_{\calF}^2\leq Cn\sum_{j=\ell+1}^{\infty}\left\|\mA_j\right\|_{\calF}^2.
	\end{aligned}
	\label{eq:FMBTcovariance}
	\end{equation}
	Applying the FMBT, we have
	\begin{equation}
	\begin{aligned}
	\left(\E\left|\sum_{t=\ell}^{T-1}u_{m,t-s}^{}u_t^*\right|^r\right)^{1/r}&\leq C \left[\sum_{t,h=\ell}^{T-1}\gamma_{u,m}^{}(t-h)\gamma_{u^*}(t-h)\right]^{1/2}\\
	&\leq C\left[\gamma_{u^*}(0)(T-\ell)\sum_{t=-\infty}^{\infty}|\gamma_{u,m}(t)|\right]^{1/2}\leq C(T-\ell)^{1/2}\left(\sum_{j=\ell+1}^{\infty}\left\|\mA_j\right\|_{\calF}^2\right)^{1/2},
	\end{aligned}
	\label{eq:FMBTpartb}
	\end{equation}
	using (\ref{eq:FMBTcovariance}) and the absolute summability of $\{\gamma_{u,m}(t)\}$. Combining \eqref{eq:partb1}, \eqref{eq:univariateSeriespartb} and \eqref{eq:FMBTpartb} leads to the desired inequality.
	
 \textbf{(c)} The equality $\veta_{t+1,\ell}=\Big(\mI_n-\sum_{j=1}^\ell \mA_j(\ell) L^j \Big) \lagpol{C}(L)\veta_{t+1}$ shows that $\veta_{t+1,\ell}$ has a linear process representation in terms of $\veta_t$. Theorem 6.6.12 of \cite{hannandeistler2012} implies that $\sup_{1\leq \ell<\infty}\sum_{j=0}^{\ell}\|\mA_j(\ell)\|_{\calF}<\infty$. By Propositions 10.2(b) and 10.3 of \cite{hamilton1994}, both the coefficient matrices associated with $\Big(\mI_n-\sum_{j=1}^\ell \mA_j(\ell) L^j \Big) \lagpol{C}(L)$ and the autocovariances $\left\{\E\left(\eta_{k,t+1,\ell}\eta_{k,t+1-s,\ell}\right)\right\}_{s=0}^{\infty}$ are absolutely summable, where $\eta_{k,t+1,\ell}$ is the $k^{th}$ entry of $\veta_{t+1,\ell}$. The proof is completed using the  $c_r$-inequality and the FMBT, that is, for $r\geq 2$,
	\begin{equation*}
	\begin{aligned}
	&\E\left\|\frac{1}{T-\ell}\sum_{t=\ell}^{T-1}\Big(\veta_{t+1,\ell}^{}\veta_{t+1,\ell}'\Big)-\E\Big(\veta_{t+1,\ell}^{}\veta_{t+1,\ell}'\Big)\right\|^r\\
	&\leq n^{r-2}\frac{1}{(T-\ell)^r}\sum_{k,m=1}^{n}\E \Bigg| \sum_{t=\ell}^{T-1}\left[\eta_{k,t+1,\ell}\eta_{m,t+1,\ell}-\E\left(\eta_{k,t+1,\ell}\eta_{m,t+1,\ell}\right)\right]\Bigg|^r\\
	&\leq C n^{r-2}\frac{1}{(T-\ell)^r}\sum_{k,m=1}^{n}\Bigg[\sum_{t,l=\ell}^{T-1}\E\left(\eta_{k,t+1,\ell}\eta_{k,l+1,\ell}\right)\E\left(\eta_{m,t+1,\ell}\eta_{m,l+1,\ell}\right)\Bigg]^{r/2}\leq C\frac{1}{(T-\ell)^{r/2}}.
	\end{aligned}
	\end{equation*}
\end{proof}

\begin{lemma}\label{lem:max_bound}
If Assumptions \ref{assumpt1:linearproc}-\ref{assump4:bandingparameter} hold, then
\begin{equation}
 \max_{1\leq\ell\leq q}\big\|\widehat{\mA}(\ell)-\mA(\ell)\big\| = O_p\left(q/\sqrt{T}\, \right), \qquad
	\max_{1\leq\ell\leq q}\big\|\widehat{\mS}(\ell)-\mS(\ell)\big\| = O_p\left(q/\sqrt{T}\, \right).
\label{eq:max_diff_mAmS}
\end{equation}
\end{lemma}	
\begin{proof}
Recall the definition of $\veta_{t+1,\ell}$ and $\vu_t(\ell)$ in \eqref{eq:def_reverse_errors}. Similarly, define
	\begin{equation}\label{eq:def_reverse_residuals}
	\widehat{\veta}_{t+1,\ell}=\widehat{\vu}_{t+1}-\mA(\ell)\widehat{\vu}_t(\ell),\quad\widetilde{\veta}_{t+1,\ell}=\widehat{\vu}_{t+1}-\widehat{\mA}(\ell)\widehat{\vu}_t(\ell),\quad \widehat{\vu}_t(\ell)=\left[\widehat{\vu}_t',\widehat{\vu}_{t-1}',\cdots,\widehat{\vu}_{t-\ell+1}'\right]'.
	\end{equation}
We first prove $ \max_{1\leq\ell\leq q}\big\|\widehat{\mA}(\ell)-\mA(\ell)\big\| = O_p\left(q/\sqrt{T}\, \right)$. Since $\left(\widehat{\mA}(\ell)-\mA(\ell)\right)\widehat{\vu}_t(\ell) = \widehat{\veta}_{t+1,\ell} - \widetilde{\veta}_{t+1,\ell}$ and $\frac{1}{T-\ell}\sum_{t=\ell}^{T-1}\widetilde{\veta}_{t+1,\ell}\widehat{\vu}_t(\ell)'=\mZeros$ (the first-order condition from \eqref{eq:sampleMCD}), we have
	\begin{equation}\label{eq:diff_mA_decomposition}
	\left(\widehat{\mA}(\ell)-\mA(\ell)\right)\left(\frac{1}{T-\ell}\sum_{t=\ell}^{T-1}\widehat{\vu}_t(\ell)\widehat{\vu}_t(\ell)'\right)=\frac{1}{T-\ell}\sum_{t=\ell}^{T-1}\widehat{\veta}_{t+1,\ell}\widehat{\vu}_t(\ell)'.
	\end{equation}
If we can show that $\frac{1}{T-q}\sum_{t=q}^{T-1}\widehat{\vu}_t(q)\widehat{\vu}_t(q)'$ is asymptotically invertible, then $\frac{1}{T-\ell}\sum_{t=\ell}^{T-1}\widehat{\vu}_t(\ell)\widehat{\vu}_t(\ell)'$ must also be asymptotically invertible with probability 1, for any $1\leq\ell\leq q$.\footnote{If the matrix $\mQ$ is invertible, then each leading principle submatrix of $\mQ$ is invertible as well.\label{footnote:leadingprinciple}} By the triangular inequality,  $\left\|\frac{1}{T-q}\sum_{t=q}^{T-1}\widehat{\vu}_t(q)\widehat{\vu}_t(q)'-\E\big(\vu_t(q)\vu_t(q)'\big)\right\|\leq \Rmnum{1}_a+\Rmnum{1}_b$, where
	\begin{equation*}
	\Rmnum{1}_a=\left\|\frac{1}{T-q}\sum_{t=q}^{T-1}\vu_t(q)\vu_t(q)'-\E\big(\vu_t(q)\vu_t(q)'\big)\right\|=O_p\left(q/\sqrt{T} \, \right)
	\end{equation*}
	by Chebyshev's inequality and Lemma \ref{lem:fmbt} $(\rmnum{1})$, and 
	\begin{multline*}
	\Rmnum{1}_b=\left\|\frac{1}{T-q}\sum_{t=q}^{T-1}\widehat{\vu}_t(q)\widehat{\vu}_t(q)'-\frac{1}{T-q}\sum_{t=q}^{T-1}\vu_t(q)\vu_t(q)'\right\|\\
	\leq \frac{1}{T-q}\sum_{t=q}^{T-1}\big\| \widehat{\vu}_t(q)-\vu_t(q)\big\|^2+2\sqrt{\frac{1}{T-q}\sum_{t=q}^{T-1}\big\| \widehat{\vu}_t(q)-\vu_t(q)\big\|^2}\sqrt{\frac{1}{T-q}\sum_{t=q}^{T-1}\big\|\vu_t(q)\big\|^2},
	\end{multline*}
since $\left\|\sum_t\va_t^{}\va_t'-\sum_t\vb_t^{}\vb_t'\right\|\leq \sum_t\left\|\va_t-\vb_t\right\|^2+2\sqrt{\sum_t\left\|\va_t-\vb_t\right\|^2}\sqrt{\sum_t\left\|\vb_t\right\|^2}$. We have $\frac{1}{T-q}\sum_{t=q}^{T-1}\| \widehat{\vu}_t(q)-\vu_t(q)\|^2=\frac{1}{T-q}\sum_{t=q}^{T-1}\sum_{s=t-q+1}^{t}\|\widehat{\vu}_s-\vu_s\|^2\leq \frac{q}{T-q}\|\widehat{\vu}-\vu\|^{2}=\frac{q}{T} O_p\left(1\right)$ by Assumption \ref{assump3:residuals}. Because $\frac{1}{T-q}\sum_{t=q}^{T-1}\|\vu_t(q)\|=O_p(q)$ by Markov's inequality, we conclude $\Rmnum{1}_b=O_p\left(q/\sqrt{T}\right)$. Overall, this gives 
	\begin{equation}\label{eq:invertibility}
	\left\|\frac{1}{T-q}\sum_{t=q}^{T-1}\widehat{\vu}_t(q)\widehat{\vu}_t(q)'-\E\big(\vu_t(q)\vu_t(q)'\big)\right\|=O_p\left(\frac{q}{\sqrt{T}}\right) + O_p\left(\frac{q}{\sqrt{T}}\right)=o_p(1).
	\end{equation}
Now observe that $\E\big(\vu_t(q)\vu_t(q)'\big)$ is a leading principal submatrix of $\mSigma_{\vu}$ (thus invertible, see footnote \ref{footnote:leadingprinciple}). As a result, $\frac{1}{T-q}\sum_{t=q}^{T-1}\widehat{\vu}_t(q)\widehat{\vu}_t(q)'$ is asymptotically invertible. 
	
	We subsequently bound the RHS of \eqref{eq:diff_mA_decomposition} as follows: $\max_{1\leq\ell\leq q}\left\|\frac{1}{T-\ell}\sum_{t=\ell}^{T-1}\widehat{\veta}_{t+1,\ell}\widehat{\vu}_t(\ell)'\right\|\leq \Rmnum{2}_a+\ldots+\Rmnum{2}_e$, where $\Rmnum{2}_a=\max_{1\leq\ell\leq q}\left\|\frac{1}{T-\ell}\sum_{t=\ell}^{T-1}\veta_{t+1}\vu_t(\ell)'\right\|$, $\Rmnum{2}_b=\max_{1\leq\ell\leq q}\left\|\frac{1}{T-\ell}\sum_{t=\ell}^{T-1}\left(\veta_{t+1,\ell}-\veta_{t+1}\right)\vu_t(\ell)'\right\|$, and 
	\begin{align*}
	\Rmnum{2}_c&=\max_{1\leq\ell\leq q}\left\|\frac{1}{T-\ell}\sum_{t=\ell}^{T-1}\left(\widehat{\veta}_{t+1,\ell}-\veta_{t+1,\ell}\right)\vu_t(\ell)'\right\|,\\
	\Rmnum{2}_d&=\max_{1\leq\ell\leq q}\left\|\frac{1}{T-\ell}\sum_{t=\ell}^{T-1}\veta_{t+1,\ell}\left(\widehat{\vu}_t(\ell)-\vu_t(\ell)\right)'\right\|,\\
	\Rmnum{2}_e&=\max_{1\leq\ell\leq q}\left\|\frac{1}{T-\ell}\sum_{t=\ell}^{T-1}\left(\widehat{\veta}_{t+1,\ell}-\veta_{t+1,\ell}\right)\left(\widehat{\vu}_t(\ell)-\vu_t(\ell)\right)'\right\|.
	\end{align*}
	We consider these terms separately starting from $\Rmnum{2}_a$. Using the properties of Frobenius norm,
	\begin{equation*}
	\left\|\frac{1}{T-\ell}\sum_{t=\ell}^{T-1}\veta_{t+1}\vu_t(\ell)'\right\|^2\leq \sum_{s=0}^{\ell-1}\left\|\frac{1}{T-\ell}\sum_{t=\ell}^{T-1}\veta_{t+1}^{}\vu_{t-s}'\right\|_\calF^2=\frac{1}{(T-\ell)^{2}}\sum_{s=0}^{\ell-1}\sum_{i,j=1}^{n}\left|\sum_{t=\ell}^{T-1}\eta_{i,t+1}u_{j,t-s}\right|^2.
	\end{equation*}
	Assumption \ref{assumpt1:linearproc} justifies the use of Lemma 2 in \cite{wei1987} which gives $\E\left|\sum_{t=\ell}^{T-1}\eta_{i,t+1}u_{j,t-s}\right|^2\leq C\sum_{t=\ell}^{T-1}\E \left( u_{j,t-s}^2\right)\leq C(T-\ell)$. By Chebyshev's inequality, $\forall\varepsilon>0$, there exists $\alpha_\varepsilon>0$ such that
	\begin{equation*}
	\p\left(\Rmnum{2}_a\geq \alpha_\varepsilon\frac{q}{\sqrt{T}}\right)\leq \frac{1}{\alpha_\varepsilon^2}\frac{T}{q^2}\sum_{\ell=1}^{q}\E\left\|\frac{1}{T-\ell}\sum_{t=\ell}^{T-1}\veta_{t+1}\vu_t(\ell)'\right\|^2\leq \frac{C}{\alpha_\varepsilon^2}\leq  \varepsilon,
	\end{equation*} 
and thus $\Rmnum{2}_a=O_p\left(q/\sqrt{T}\,\right)$. Furthermore, we deduce that $\Rmnum{2}_b=O_p\left( q/ \sqrt{T}\,\right)$ by Lemma \ref{lem:fmbt}$(\rmnum{2})$ and Chebyshev's inequality. For $\Rmnum{2}_c$, if we write $\widehat{\veta}_{t+1,\ell}-\veta_{t+1,\ell}=\big[\mI_n,-\mA(\ell)\big]\big[\widehat{\vu}_{t+1}(\ell+1)-\vu_{t+1}(\ell+1)\big]$, then by Cauchy-Schwarz inequality and Baxter's inequality (leads to $\max_{1\leq\ell\leq q}\left\|\mA(\ell)\right\|^2\leq C$), 
	\begin{equation*}
	\Rmnum{2}_c\leq C\sqrt{\max_{1\leq\ell\leq q}\frac{1}{T-\ell}\sum_{t=\ell}^{T-1}\left\|\widehat{\vu}_{t+1}(\ell+1)-\vu_{t+1}(\ell+1)\right\|^2}\sqrt{\max_{1\leq\ell\leq q}\frac{1}{T-\ell}\sum_{t=\ell}^{T-1}\left\|\vu_t(\ell)\right\|^2}= O_p\left(\frac{q}{\sqrt{T}}\right),
	\end{equation*}
	where the last step follows from arguments similar to those preceding \eqref{eq:invertibility}. Similarly, $\Rmnum{2}_d=O_p\left( q/\sqrt{T}\,\right)$ and $\Rmnum{2}_e=O_p\left(q/T\,\right)$. Combining all results, we finally have 
	\begin{equation}\label{eq:diff_mA_decomposition_term2}
	\max_{1\leq\ell\leq q}\left\|\frac{1}{T-\ell}\sum_{t=\ell}^{T-1}\widehat{\veta}_{t+1,\ell}\widehat{\vu}_t(\ell)'\right\|=O_p\left(\frac{q}{\sqrt{T}}\right).
	\end{equation}
	By invertibility of $\frac{1}{T-\ell}\sum_{t=\ell}^{T-1}\widehat{\vu}_t(\ell)\widehat{\vu}_t(\ell)'$, \eqref{eq:diff_mA_decomposition} and \eqref{eq:diff_mA_decomposition_term2}, $\max_{1\leq\ell\leq q}\big\|\widehat{\mA}(\ell)-\mA(\ell)\big\| = O_p\left(q/\sqrt{T}\, \right)$ follows. 
	
We continue with $\max_{1\leq\ell\leq q}\big\|\widehat{\mS}(\ell)-\mS(\ell)\big\| = O_p\left(q/\sqrt{T}\, \right)$. Since $\widetilde{\veta}_{t+1,\ell}=\widehat{\veta}_{t+1,\ell}-\left(\widehat{\mA}(\ell)-\mA(\ell)\right)\widehat{\vu}_t(\ell)$ (see \eqref{eq:def_reverse_residuals}), we can use \eqref{eq:diff_mA_decomposition} and the invertibility of $\frac{1}{T-\ell}\sum_{t=\ell}^{T-1}\widehat{\vu}_t(\ell)\widehat{\vu}_t(\ell)'$ to write
	\begin{align}
	&\max_{1\leq\ell\leq q}\big\|\widehat{\mS}(\ell)-\mS(\ell)\big\|=\max_{1\leq\ell\leq q}\left\|\frac{1}{T-\ell}\sum_{t=\ell}^{T-1}\left(\widetilde{\veta}_{t+1,\ell}^{}\widetilde{\veta}_{t+1,\ell}'\right)-\E\left(\veta_{t+1,\ell}^{}\veta_{t+1,\ell}'\right)\right\|\nonumber\\
	&=\max_{1\leq\ell\leq q}\left\|\frac{1}{T-\ell}\sum_{t=\ell}^{T-1}\left(\widehat{\veta}_{t+1,\ell}^{}\widehat{\veta}_{t+1,\ell}'\right)-\E\left(\veta_{t+1,\ell}^{}\veta_{t+1,\ell}'\right)-\frac{1}{T-\ell}\sum_{t=\ell}^{T-1}\widehat{\veta}_{t+1,\ell}\widehat{\vu}_t(\ell)'\left(\widehat{\mA}(\ell)-\mA(\ell)\right)'\right\|\nonumber\\
	&\leq \max_{1\leq\ell\leq q}\left\|\frac{1}{T-\ell}\sum_{t=\ell}^{T-1}\left(\widehat{\veta}_{t+1,\ell}^{}\widehat{\veta}_{t+1,\ell}'\right)-\E\left(\veta_{t+1,\ell}^{}\veta_{t+1,\ell}'\right)\right\|+C\max_{1\leq\ell\leq q}\left\|\frac{1}{T-\ell}\sum_{t=\ell}^{T-1}\widehat{\veta}_{t+1,\ell}\widehat{\vu}_t(\ell)'\right\|^2\nonumber\\
	&\leq \max_{1\leq\ell\leq q}\left\|\frac{1}{T-\ell}\sum_{t=\ell}^{T-1}\left(\widehat{\veta}_{t+1,\ell}^{}\widehat{\veta}_{t+1,\ell}'-\veta_{t+1,\ell}^{}\veta_{t+1,\ell}'\right)\right\|+O_p\left(\sqrt{\frac{q}{T}}\right) + O_p\left(\frac{q^2}{T}\right),\label{eq:diff_mS_decomposition}
	\end{align}
	where the last step in \eqref{eq:diff_mS_decomposition} follows from $\max_{1\leq\ell\leq q}\left\|\frac{1}{T-\ell}\sum_{t=\ell}^{T-1}\left(\veta_{t+1,\ell}^{}\veta_{t+1,\ell}'\right)-\E\left(\veta_{t+1,\ell}^{}\veta_{t+1,\ell}'\right)\right\|=O_p\left(\sqrt{ q/T}\right)$ using Lemma \ref{lem:fmbt}$(\rmnum{3})$ and \eqref{eq:diff_mA_decomposition_term2}. By the inequality $\left\|\sum_t\va_t^{}\va_t'-\sum_t\vb_t^{}\vb_t'\right\|\leq \sum_t\left\|\va_t-\vb_t\right\|^2+2\sqrt{\sum_t\left\|\va_t-\vb_t\right\|^2}\sqrt{\sum_t\left\|\vb_t\right\|^2}$ and similar arguments as for $\Rmnum{2}_c$ and $\Rmnum{2}_d$ above, the first term in \eqref{eq:diff_mS_decomposition} is bounded by
	\begin{multline*}
	\max_{1\leq\ell\leq q}\frac{1}{T-\ell}\sum_{t=\ell}^{T-1}\left\|\widehat{\veta}_{t+1,\ell}-\veta_{t+1,\ell}\right\|^2\\
	+2\sqrt{\max_{1\leq\ell\leq q}\frac{1}{T-\ell}\sum_{t=\ell}^{T-1}\left\|\widehat{\veta}_{t+1,\ell}-\veta_{t+1,\ell}\right\|^2}\sqrt{\max_{1\leq\ell\leq q}\frac{1}{T-\ell}\sum_{t=\ell}^{T-1}\left\|\veta_{t+1,\ell}\right\|^2}= O_p\left(\frac{q}{\sqrt{T}}\right).
	\end{multline*}
	Overall, we obtain $\max_{1\leq\ell\leq q}\big\|\widehat{\mS}(\ell)-\mS(\ell)\big\| = O_p\left(q/\sqrt{T}\, \right)$ as well.	
\end{proof}

\begin{lemma}\label{lem:biam_consistency_part2}
	Under Assumptions \ref{assumpt1:linearproc} and \ref{assump4:bandingparameter}, we have
	\begin{equation}
	\left\|\mSigma_{\vu}^{-1}(q)-\mSigma_{\vu}^{-1}\right\|\leq C\frac{1}{\sqrt{q}}\sum_{s=q+1}^{\infty}s\left\|\mA_{s}\right\|_{\calF}.
	\end{equation}
\end{lemma}
\begin{proof}
Consider $\left\|\mSigma_{\vu}^{-1}(q)-\mSigma_{\vu}^{-1}\right\|$. A rewriting as in \eqref{eq:decom_partI} shows that $\big\|\MChol_{\vu}(q)-\MChol_{\vu}\big\|$ and $\big\|\SChol_{\vu}^{-1}(q)-\SChol_{\vu}^{-1}\big\|$ are the two important terms to bound. H\"{o}lder's inequality implies
	\begin{equation}
	\big\|\MChol_{\vu}(q)-\MChol_{\vu}\big\|\leq\sqrt{\big\|\MChol_{\vu}(q)-\MChol_{\vu}\big\|_1~\big\|\MChol_{\vu}(q)-\MChol_{\vu}\big\|_\infty} \,.
	\label{eq:holdereq}
	\end{equation}
	For the matrix $1$-norm we are concerned with the maximum absolute column sum. For an arbitrary $(nT\times nT)$ matrix $\mQ$ partitioned (block) column-wise, i.e. $\mQ=[\mQ_1,\mQ_2,\ldots,\mQ_T]$, we have the bound $\|\mQ\|_1=\max_{1\leq t\leq T}\|\mQ_t\|_1\leq \sqrt{n}\max_{1\leq t\leq T}\|\mQ_t\|_\calF$. This implies 
	$\big\|\MChol_{\vu}(q)-\MChol_{\vu}\big\|_1\leq \sqrt{n} \max\left\{\Rmnum{1}_a,\Rmnum{1}_b\right\}$ where
	\begin{equation*}
	\begin{aligned}
	\Rmnum{1}_a&=\max_{0\leq j\leq T-q-2}\left(\sum_{i=0}^{T-q-2-j}\left\|\mA_{q+1+i}(q+1+i+j)\right\|_{\calF}^2+\sum_{i=\max(1,q+1-j)}^{q}\left\|\mA_{i}(i+j)-\mA_{i}(q)\right\|_{\calF}^2\right)^{1/2},\\
	\Rmnum{1}_b&=\max_{T-q-1\leq j\leq T-2}\left(\sum_{i=1}^{T-1-j}\left\|\mA_{i}(i+j)-\mA_{i}(q)\right\|_{\calF}^2\right)^{1/2}.
	\end{aligned}
	\end{equation*}
	We will bound the three summations that are encountered in the expressions for $\Rmnum{1}_a$ and $\Rmnum{1}_b$. First, changing the summation index and using $c_r$-inequality,
	\begin{align*}
	\sum_{i=0}^{T-q-2-j} \big\| \mA_{q+1+i}(q+1+i+j)\big\|_{\calF}^2
	&=\sum_{s=q+1}^{T-1-j}\left\|\mA_{s}(s+j)\right\|_{\calF}^2\nonumber\\
	&\leq 2\sum_{s=q+1}^{T-1-j}\left\|\mA_{s}(s+j)-\mA_{s}\right\|_{\calF}^2+2\sum_{s=q+1}^{T-1-j}\left\|\mA_{s}\right\|_{\calF}^2\nonumber\\
	&\leq  2 \sum_{s=q+1}^{T-1-j} s^{-2} \left(s^2 \sum_{k=1}^{s+j} \left\|\mA_{k}(s+j)-\mA_{k}\right\|_{\calF}^2\right) + 2\sum_{s=q+1}^\infty \left\|\mA_{s}\right\|_{\calF}^2.
	\end{align*}
	For convenience, we define
	\begin{equation}
	\calK_q=\left(\sum_{s=q+1}^{\infty}s\left\|\mA_{s}\right\|_{\calF}\right)^2\left(\sum_{s=q+1}^{\infty}\frac{1}{s^2}\right).
	\end{equation}
	For any $j\geq 0$ and $s\geq q+1$, we have $s^2\sum_{k=1}^{s+j}\left\|\mA_{k}(s+j)-\mA_{k}\right\|_{\calF}^2\leq Cs^2\sum_{k=s+j+1}^{\infty}\left\|\mA_{k}\right\|_{\calF}^2\leq C\left(\sum_{k=s+j+1}^{\infty}k\left\|\mA_{k}\right\|_{\calF}\right)^2$ by the $L^2$-Baxter's inequality. The first term in the RHS above is thus bounded by $C\calK_q$. Moreover, by Cauchy-Schwartz inequality, the second term can be bounded as $\sum_{s=q+1}^{\infty}\left\|\mA_{s}\right\|_{\calF}^2 = \sum_{s=q+1}^{\infty}  \big(s^2 \left\|\mA_{s}\right\|_{\calF}^2 \big) s^{-2}\leq \left(\sum_{s=q+1}^{\infty}s^2\left\|\mA_{s}\right\|_{\calF}^2\right) \left(\sum_{s=q+1}^{\infty} s^{-2}\right) \leq \calK_q$. Now the second summation in $\Rmnum{1}_a$. We first consider the case $0\leq j \leq q$, or $\max(1,q+1-j)=q+1-j$, such that
	\begin{align*}
	\sum_{i=\max(1,q+1-j)}^{q} \left\|\mA_{i}(i+j)-\mA_{i}(q)\right\|_{\calF}^2&= \sum_{i=q+1-j}^{q}\left\|\mA_{i}(i+j)-\mA_{i}(q)\right\|_{\calF}^2\\
	&\leq 2 \sum_{i=q+1-j}^{q}\left\|\mA_{i}(i+j)-\mA_{i}\right\|_{\calF}^2+ 2\sum_{i=q+1-j}^{q}\left\|\mA_{i}(q)-\mA_{i}\right\|_{\calF}^2 \\
	&\leq  2 \sum_{s=q+1}^{q+j}\left\|\mA_{s-j}(s)-\mA_{s-j}\right\|_{\calF}^2 + 2\sum_{i=1}^{q}\left\|\mA_{i}(q)-\mA_{i}\right\|_{\calF}^2
	\leq C\calK_q
	\end{align*}
	using arguments detailed before. This upper bound remains valid for $q+1 \leq j \leq T-q-2$. It is likewise straightforward to derive $\sum_{i=1}^{T-1-j}\left\|\mA_{i}(i+j)-\mA_{i}(q)\right\|_{\calF}^2\leq C\calK_q$. Collecting all the results, we have $\Rmnum{1}_a\leq C\sqrt{\calK_q}$, $\Rmnum{1}_b\leq C\sqrt{\calK_q}$, and thus $\big\|\FChol_{\vu}(q)-\FChol_{\vu}\big\|_1\leq C\sqrt{\calK_q}$. 
	
	For $\|\MChol_{\vu}(q)-\MChol_{\vu}\|_\infty$, we are bounding the maximum absolute row sums. For an arbitrary $(nT\times nT)$ matrix $\mQ$ partitioned as $\mQ=[\mQ_1',\mQ_2',\ldots,\mQ_T']'$, we have $\|\mQ\|_\infty=\max_{1\leq t\leq T}\|\mQ_t\|_\infty\leq \sqrt{n}\max_{1\leq t\leq T}\|\mQ_t\|_\calF$, such that
	\begin{equation*}
	\|\MChol_{\vu}(q)-\MChol_{\vu}\|_\infty
	\leq \sqrt{n}\max_{q+1\leq m\leq T-1}\Bigg\{ \sum_{j=q+1}^{m}\left\|\mA_j(m)\right\|_{\calF}^2+\sum_{j=1}^{q}\left\|\mA_j(q)-\mA_j(m)\right\|_{\calF}^2 \Bigg\}^{1/2},
	\end{equation*}
	where $\sum_{j=q+1}^{m}\left\|\mA_j(m)\right\|_{\calF}^2\leq C\calK_q$ and $\sum_{j=1}^{q}\left\|\mA_j(q)-\mA_j(m)\right\|_{\calF}^2\leq C\calK_q$, for any $q+1\leq m \leq T-1$, using the $L^2$-Baxter's inequality and the previous upper bound on $\sum_{s=q+1}^{\infty}\left\|\mA_{s}\right\|_{\calF}^2$. We conclude that $\|\MChol_{\vu}(q)-\MChol_{\vu}\|_\infty\leq C\sqrt{\calK_q}$. Together with our previous we result, we obtain $\big\|\MChol_{\vu}(q)-\MChol_{\vu}\big\|\leq C\sqrt{\calK_q}$ from \eqref{eq:holdereq}.
	
	From $\big\|\SChol_{\vu}^{-1}(q)-\SChol_{\vu}^{-1}\big\|\leq \big\|\SChol_{\vu}(q)-\SChol_{\vu}\big\|~\big\|\SChol_{\vu}^{-1}\big\| ~ \big\|\SChol_{\vu}^{-1}(q)\big\|$ we see that it suffices to inspect $\big\|\SChol_{\vu}(q)-\SChol_{\vu}\big\|$ (the other norms are bounded). Exploiting the fact that both $\SChol_{\vu}(q)$ and $\SChol_{\vu}$ are block-diagonal, we have $\big\|\SChol_{\vu}(q)-\SChol_{\vu}\big\|=\max_{q+1\leq k\leq T-1}\left\|\mS(q)-\mS(k)\right\|
	\leq 2 \max_{q \leq k\leq T-1}  \left\|\mS(k)-\mSigma_{\eta\eta}\right\|$. Let $\mA_j(\ell)=\mZeros$ for $j>\ell$, and recall the definition of $\veta_{t+1,\ell}$ in \eqref{eq:def_reverse_errors}. We find, for any $k\geq q$,
	\begin{align}
	\left\|\mS(k)-\mSigma_{\eta\eta}\right\|&=\left\|\E\left(\veta_{t+1,k}-\veta_{t+1}\right)\left(\veta_{t+1,k}-\veta_{t+1}\right)'\right\|\nonumber\\
	&\leq C\sum_{s=1}^{\infty}\left\|\mA_s(k)-\mA_s\right\|^2\leq C\sum_{s=q+1}^{\infty}\left\|\mA_{s}\right\|_{\calF}^2\leq C\calK_q.\label{eq:convergence_mSq}
	\end{align}
	We thereby obtain $\big\|\SChol_{\vu}^{-1}(q)-\SChol_{\vu}^{-1}\big\|\leq C\calK_q$. Together with the bound on $\big\|\MChol_{\vu}(q)-\MChol_{\vu}\big\|$, we deduce
	\begin{equation*}
	\left\|\mSigma_{\vu}^{-1}(q)-\mSigma_{\vu}^{-1}\right\|\leq C\sqrt{\calK_q}=C\left(\sum_{s=q+1}^{\infty}s\left\|\mA_{s}\right\|_{\calF}\right)\sqrt{\sum_{s=q+1}^{\infty}\frac{1}{s^2}}\leq C\frac{1}{\sqrt{q}}\sum_{s=q+1}^{\infty}s\left\|\mA_{s}\right\|_{\calF}.
	\end{equation*}
    The proof is complete.
\end{proof}

\begin{theorem}\label{thm:cdf_kpss}
	Let $\mW(r)=[W_1(r),W_2(r),\ldots,W_n(r)]'$ denote an $n$-dimensional standard Brownian motion. The cumulative density function (CDF) of $\int_{0}^{1}\left\|\mW(r)\right\|^2dr$ is given by
	\begin{equation*}
	F_{n}(x)=2^{n/2}\sum_{j=0}^{\infty}k_{j,n}^{}\,\mathrm{Erfc}\left(\frac{l_{j,n}}{2\sqrt{x}}\right),\qquad x > 0,
	\label{eq:cdf_Wn}
	\end{equation*}
	where $k_{j,n}= (-1)^j \, \frac{\Gamma(j+n/2)}{j! \Gamma(n/2)}$, $l_{j,n}^{}=2\sqrt{2}j+\frac{n}{\sqrt{2}}$ and $\mathrm{Erfc}(x)=\frac{2}{\sqrt{\pi}}\int_{x}^{\infty}e^{-t^2}dt$.
\end{theorem}
\begin{proof}
	We follow the approach from Example 1 of \cite{andersondarling1952} or equivalently appendix B of \cite{choisaikkonen2010}. Let $f_{n}$ denote the probability density function of $\int_{0}^{1}\left\|\mW(r)\right\|^2dr$ and write $\calL\{ \cdot \}$ and $\calL^{-1}\{\cdot\}$ for the Laplace and inverse Laplace operator, respectively. From the equality $\int_{0}^{1}\left\|\mW(r)\right\|^2dr=\sum_{i=1}^n \int_{0}^{1} W_i(r)^2 dr$, independence of the components of $\mW(r)$, and the known univariate result in \cite{choisaikkonen2010}, we have
	\begin{equation}
	\calL\{ f_{n}(x) \}(t)= \int_0^\infty e^{-tx} f_{n}(x)  dx = \left[ \cosh\left(\sqrt{2t}\, \right) \right]^{-n/2}.
	\end{equation}
	According to equation (4.28) in \cite{andersondarling1952}, the CDF is
	\begin{equation}
	\begin{aligned}
	F_{n}(x) &= \calL^{-1}\left\{ \frac{1}{t} \left[ \cosh\left(\sqrt{2 t} \,\right) \right]^{-n/2} \right\}(x)=  \calL^{-1}\left\{\frac{1}{t} \left(\dfrac{e^{\sqrt{2t}}}{2}\right)^{-n/2} \left[1+e^{-2\sqrt{2t}} \right]^{-n/2} \right\}(x) \\
	&= \calL^{-1}\left\{\frac{1}{t} \left(\dfrac{e^{\sqrt{2t}}}{2}\right)^{-n/2} \sum_{j=0}^\infty k_{j,n} e^{-2 j\sqrt{2t}}  \right\}(x)= 2^{n/2} \sum_{j=0}^\infty k_{j,n} \calL^{-1}\left\{\frac{1}{t} e^{-l_{j,n} \sqrt{t}}\right\}(x)
	\end{aligned}
	\end{equation}
	where we use (1) a $t$ with a positive real part, (2) linearity of the inverse Laplace operator, and (3) the binomial expansion of $[1+e^{-2\sqrt{2t}}]^{-n/2}$. The identity from \cite{choisaikkonen2010}, $\calL^{-1}\left\{\frac{1}{t} e^{-u \sqrt{t}}\right\}(x)=1-\mathrm{Erf}\left(\frac{u}{2\sqrt{x}}\right)=\mathrm{Erfc}\left(\frac{u}{2\sqrt{x}}\right)$, completes the proof.
\end{proof}

\section{Estimation of Quantities for Fully Modified Inference} \label{detailsFMinference}
The FM-GLS estimator relies on $\mOmega$, $\mDelta$, and $\E(\vzeta_t^{} \vzeta_t\tran)$ (see Assumption \ref{assumpt1:linearproc}). For convenience, we denote this $(2n\times 2n)$ matrix 
$\left[\begin{smallmatrix}
\mSigma_{\eta\eta}     &  \mSigma_{\eta\epsilon}\\
\mSigma_{\epsilon\eta} &  \mSigma_{\epsilon\epsilon}
\end{smallmatrix}\right]$ by $\mSigma$. Please note the difference between $\mSigma$ and the large-dimensional matrix $\mSigma_{\vu}$. In this section, we consider the estimation of these three quantities within the BIAM framework. For conenience, we recall $\vxi_t=[\vu_t',\vv_t']'$ and define $\vxi=[\vxi_1',\vxi_2',\ldots,\vxi_T']'$. Similarly to the definition of $\mSigma_u$, we used $\mSigma_{\vxi}:=\E\left(\vxi\vxi'\right)$ to denote the $(2nT\times 2nT)$ autocovariance matrix of $\{\vxi_t\}$. As a sample counterpart, we stack $\widehat{\vu}_t$ and $\Delta\vx_t=\vv_t$ in the $2n$-dimensional vector $\widehat{\vxi}_t=\big[\widehat{\vu}_t',\vv_t'\big]'$. Using $\{\widehat{\vxi}_t \}_{t=1}^T$, the BIAM estimator for $\mSigma_{\vxi}$ is now constructed as $\widehat{\mSigma_{\vxi}}(q)=\widehat{\MChol}_{\vxi}^{-1}(q)\widehat{\SChol}_{\vxi}^{}(q)\widehat{\MChol}_{\vxi}^{-1\prime}(q)$, where the matrices $\widehat{\MChol}_{\vxi}(q)$ and $\widehat{\SChol}_{\vxi}(q)$ are defined similarly to $\widehat{\MChol}_{\vu}(q)$ and $\widehat{\SChol}_{\vu}(q)$, respectively. Since the BIAM estimator is fitting VAR processes up to order $q_T$ (see \eqref{eq:sampleMCD}), the coefficient estimates $\widehat{\mF}_j(q_T)$ of the $j^{th}$ lag when a VAR$(q_T)$ is fitted, $j=1,2,\dots,q_T$, are immediate byproducts of the BIAM procedure and can thus be used to construct our estimators. Finally, if $\lagpol{F}(L)=\diag\left[\lagpol{A}(L),\lagpol{D}(L)\right]:=\mI_{2n}-\sum_{j=1}^{\infty}\mF_jL^j$, where $\mF_j=\diag\big[\mA_j,\mD_j\big]$, then $\lagpol{F}(L)\vxi_t=\vzeta_t$ holds.

\begin{theorem}\label{thm:fmquantities}
Recall the definitions $\widehat{\mOmega}_{q_T}=\left(\mI_{2n}-\sum_{j=1}^{q_T}\widehat{\mF}_j(q_T)\right)^{-1}\widehat{\mSigma}_{q_T}\left(\mI_{2n}-\sum_{j=1}^{q_T}\widehat{\mF}_j(q_T)\right)^{-1\prime}$, $\widehat{\mDelta}_{q_T,r_T}=\mQ_{r_T}'\widehat{\mSigma_{\vxi}}(q_T)\mQ_1^{}$, and
	\begin{equation*}
	\widehat{\mSigma}_{q_T}=\frac{1}{T-q_T}\sum_{t=q_T+1}^{T}\big[\widehat{\vxi}_t-\widehat{\mF}_1(q_T) \widehat{\vxi}_{t-1}-\cdots-\widehat{\mF}_{q_T}(q_T)\widehat{\vxi}_{t-q_T}\big]\big[\widehat{\vxi}_t-\widehat{\mF}_1(q_T) \widehat{\vxi}_{t-1}-\cdots-\widehat{\mF}_{q_T}(q_T)\widehat{\vxi}_{t-q_T}\big]',
	\end{equation*}
	where $\mQ_r=\left[\mZeros_{2n \times 2n},\cdots,\mZeros_{2n \times 2n},\mI_{2n},\cdots,\mI_{2n}\right]'$ is an $(2nT\times 2n)$ block matrices of zeros of which the last $r$ blocks have been replaced by identity matrices. If Assumptions \ref{assumpt1:linearproc}-\ref{assump4:bandingparameter} and \ref{assumpt:rT} hold, then
	\begin{align}
	\left\|\widehat{\mSigma}_{q_T}-\mSigma\right\|&=O_p\left(\frac{q_T}{\sqrt{T}}\right)+O\left(\sum_{s=q_T+1}^{\infty}\|\mF_s\|_{\calF}^2\right)=o_p(1),\label{eq:shortrun_var}\\
	\left\|\widehat{\mOmega}_{q_T}-\mOmega\right\|&=O_p\left(\sqrt{\frac{q_T^3}{T}}+\frac{1}{q_T}\sum_{s=q_T+1}^{\infty}s\|\mF_s\|_{\calF}\right)=o_p(1),\label{eq:lrv}\\
	\left\|\widehat{\mDelta}_{q_T,r_T}-\mDelta\right\|&=o(r_T^{-1})+O_p\left(\sqrt{\frac{q_T^3}{T}r_T}+\sqrt{\frac{r_T}{q_T}}\sum_{s=q_T+1}^{\infty}s\|\mF_s\|_{\calF}\right)=o_p(1).\label{eq:onesided_lrv}
	\end{align} 
\end{theorem}


\begin{proof}
Note that $\widehat{\vxi}_t^{}-\vxi_t^{}=[(\widehat{\vu}_t-\vu_t)',\vzeros']'$ and hence $\|\widehat{\vxi}-\vxi\|^{2}=\|\widehat{\vu}-\vu\|^{2}=O_p(1)$ by Assumption \ref{assump3:residuals}. The conditions for Lemmas \ref{lem:max_bound} -- \ref{lem:biam_consistency_part2} and Theorem \ref{thm:consistentDECOMP} are thus satisfied and we can use these results in subsequent proofs. \textbf{(a)} The result \eqref{eq:shortrun_var} follows from the triangle inequality, Lemma \ref{lem:max_bound} and \eqref{eq:convergence_mSq}. \textbf{(b)} The second result \eqref{eq:lrv} is obtained by the definition $\mOmega=(\mI_{2n}-\sum_{j=1}^\infty \mF_j)^{-1}\mSigma(\mI_{2n}-\sum_{j=1}^\infty\mF_j)^{-1\prime}$, Lemma \ref{lem:max_bound} and a straightforward modification of \eqref{eq:decom_partI}. \textbf{(c)} By $\mDelta=\sum_{h=r_T}^{\infty}\E\big(\vxi_t^{}\vxi_{t+h}'\big)+\mQ_{r_T}'\mSigma_{\vxi}\mQ_1^{}$, the LHS of \eqref{eq:onesided_lrv} can be bounded 
	\begin{equation*}
	\left\|\widehat{\mDelta}_{q_T,r_T}-\mDelta\right\|\leq \sum_{h=r_T}^{\infty}\left\|\E\big(\vxi_t^{}\vxi_{t+h}'\big)\right\|_{\calF}+\big\|\mQ_{r_T}'\mSigma_{\vxi}\big\|\,\big\|\widehat{\mSigma_{\vxi}^{-1}}(q_T)-\mSigma_{\vxi}^{-1}\big\|\,\big\|\widehat{\mSigma_{\vxi}}(q_T)\mQ_1\big\|.
	\end{equation*}
	Since summability conditions on the coefficient matrices carry over to the autocovariances, we have $\sum_{h=r_T}^{\infty}\left\|\E\big(\vxi_t^{}\vxi_{t+h}'\big)\right\|_{\calF}\leq r_T^{-1}\sum_{h=r_T}^{\infty}h\left\|\E\big(\vxi_t^{}\vxi_{t+h}'\big)\right\|_{\calF}=o(r_T^{-1})$ by Assumption \ref{assumpt1:linearproc}. Moreover, $\left\|\mQ_{r_T}'\mSigma_{\vxi}\right\|\leq C\sqrt{r_T}$ and $\big\|\widehat{\mSigma_{\vxi}^{-1}}(q_T)-\mSigma_{\vxi}^{-1}\big\|$ is discussed in Theorem \ref{thm:consistentDECOMP}. Finally, showing $\big\|\widehat{\mSigma_{\vxi}}(q_T)\mQ_1\big\|=O_p(1)$ will complete the proof after a straightforward comparison of the established stochastic orders. It suffices to prove $\big\|\widehat{\mSigma_{\vxi}}(q_T)\big\|=O_p(1)$. Weyl's inequality (e.g. pages 40 and 46 in \cite{tao2012}) and Theorem \ref{thm:consistentDECOMP} imply
	\begin{equation*}
	\left|\lambda_{min}\Big(\widehat{\mSigma_{\vxi}^{-1}}(q_T)\Big)-\lambda_{min}\left(\mSigma_{\vxi}^{-1}\right)\right|\leq \big\|\widehat{\mSigma_{\vxi}^{-1}}(q_T)-\mSigma_{\vxi}^{-1}\big\|=o_p(1).
	\end{equation*}
	By the uniform boundedness of $\big\|\mSigma_{\vxi}\big\|$, for a sufficiently large $T$, there exists a constant $C>0$ such that $\big\|\widehat{\mSigma_{\vxi}}(q_T)\big\|^{-1}=\lambda_{min}\Big(\widehat{\mSigma_{\vxi}^{-1}}(q_T)\Big)\leq C$ and thus $\big\|\widehat{\mSigma_{\vxi}}(q_T)\big\|\leq C^{-1}$ with arbitrarily high probability.	
\end{proof}

\section{Additional information for the Empirical Application} \label{empiricalillustration}

\subsection{Model fit}

\begin{figure}[h!]
	\centering
	\includegraphics[width=0.9\textwidth]{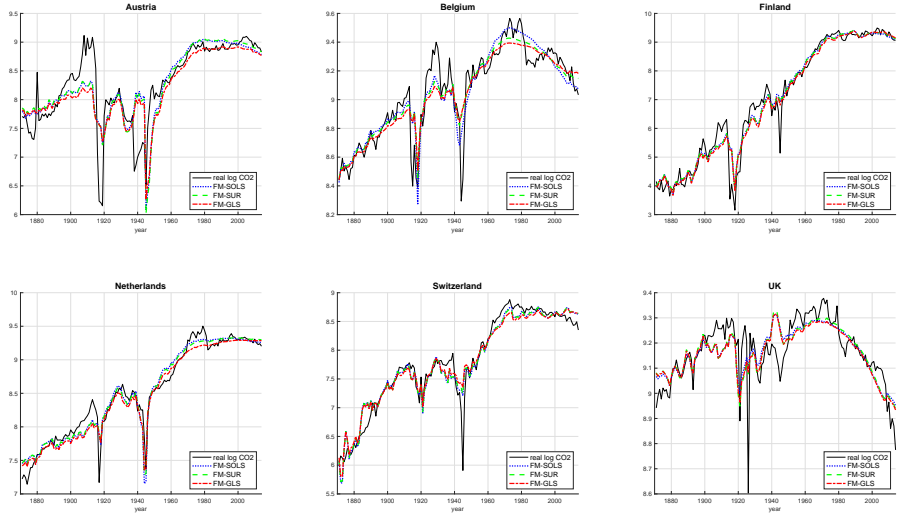}
	\caption{\footnotesize The fit of the FM-SOLS, FM-SUR, and FM-GLS estimates.}
	\label{fig:fitmethods}
\end{figure}

\subsection{Simulation DGP}
The following procedure was used to obtain a simulation DGP that closely mimics the data characteristics.
 \begin{enumerate}[(a)]
  \item Fit VAR($p$) models ($1\leq p \leq 8$) to the series $\{\hat \vu_{t,FGLS}\}$ and $\{\diff \vx_t \}$ individually. The BIC criterion select the VAR(1) specification for both series (Table \ref{tab:BICresults}). Store the coefficient matrices $\widehat{\mA}_{u}$ and $\widehat{\mA}_{v}$ as well as the residual series $\{\hat{\veta}_t\}$ and $\{\hat{\vepsi}_t\}$, respectively.
  \item Stack $\hat \vzeta_t = [\hat\veta_t^{\prime}, \hat\vepsi_t\tran]\tran$ and compute $\widehat{\mSigma} = \frac{1}{T} \sum_{t=2}^T \hat\vzeta_t \hat \vzeta_t\tran$.
  \item Denoting the estimated coefficients from the data by $\widehat{\vbeta}_{FGLS}^+$, we generate the new data according to the following equations:
  $$
  \begin{aligned}
    \vy_t &=\mZ_t\tran \widehat{\vbeta}_{FGLS}^+ +\vu_t \\
    \diff \vx_t &= \vv_t,\text{ with }\vx_0=\vzeros \\
    \vxi_t &=
    \begin{bmatrix}
     \vu_t \\
     \vv_t
    \end{bmatrix}
    = 
    \begin{bmatrix}
     \widehat{\mA}_{u}			& \mZeros \\
     \mZeros				& \widehat{\mA}_{v}
    \end{bmatrix}
    \vxi_{t-1} + \vzeta_t,\text{ with } \vzeta_t\stackrel{i.i.d.}{\sim} \rN\left( \vzeros , \widehat{\mSigma} \right).
  \end{aligned}
  $$
 \end{enumerate}
 
 \begin{table}[h!]
	\centering
	\caption{The numerical values for the BIC criterion.}
	\label{tab:BICresults}
	\begin{tabular}{c  c c c c c c c c }
	\toprule
						& $p=1$			& $p=2$			& $p=3$			& $p=4$			& $p=5$			& $p=6$ 			& $p=7$		& $p=8$ \\
	\midrule
	$\{\hat \vu_{t,FGLS}\}$	& -23.88			& -23.49			& -22.75			& -21.87			& -21.25			& -20.66			& -20.11		& -19.27\\
	$\{\diff \vx_t \}$			& -37.94			& -37.78			& -37.15			& -36.44			& -35.68			& -35.03			& -34.34		& -33.64\\
	\bottomrule
	\end{tabular}
 \end{table}
 
 \section{Details on Implementation}
The implementation of the BIAM estimator and the subsampling KPSS tests requires selecting the banding parameter $q$ and the block length $b$. In our simulations and empirical application, we follow the subsampling and risk-minimization approach previously used by \cite{bickellevina2008}, \cite{wupourahmadi2009} and \cite{ingchiouguo2016} to select $q$. The steps are as follows:
\begin{enumerate}[Step 1]
	\item Split the series of (first-step OLS) residuals, $\{\widehat{\vu}_t\}_{t=1}^T$, into $J_0$ non-overlapping subsequences of length $l_0$. These subsequences are $\{\widehat{\vu}_t\}_{t=(j-1)l_0+1}^{jl_0}$ for $j=1,\dots,J_0$ with $J_0=[T/l_0]$.
	\item Select an integer $H$, $1\leq H <l_0$, and construct the $(nH\times nH)$ sample autocovariance matrix $\widehat{\mPi}_{\vu,nH}=\frac{1}{T-H}\sum_{t=H}^{T-1}\widehat{\vu}_t(H)^{}\widehat{\vu}_t(H)'$ which is an estimator of $\mSigma_{\vu,nH}:=\E\left[\vu(H)\vu(H)'\right]$ with $\vu(H):=\left[\begin{smallmatrix}
	\vu_1\\
	\vdots\\
	\vu_H
	\end{smallmatrix}\right]$, where $\widehat{\vu}_t(\ell)=\big(\widehat{\vu}_t',\cdots,\widehat{\vu}_{t-\ell+1}'\big)'$. Compute $\widehat{\mPi}_{\vu,nH}^{-1}$.
	\item For every subsequence of residuals $1\leq j \leq J_0$, compute the BIAM estimate of $\mSigma_{\vu,nH}$ repeatedly for all possible banding parameters $1\leq \bar{q}< H$, denoted as $\widehat{\mSigma_{\vu,nH}^{-1}}(\bar{q};j)$.
	\item Select the banding parameter that minimizes the feasible average risk, i.e.
	\begin{equation*}
	q:=\argmin_{\bar{q} \in[1,H)}\frac{1}{J_0}\sum_{j=1}^{J_0}\left\|\widehat{\mSigma_{\vu,nH}^{-1}}(\bar{q};j)-\widehat{\mPi}_{\vu,nH}^{-1}\right\|_p.
	\end{equation*}
\end{enumerate}
We take $p=1$, $H=[2 \times T^{1/4}]$ and $l_0=[T/5]$ and obtain satisfactory results for all the settings we have explored. As mentioned in \cite{bickellevina2008}, the use of another vector norm (e.g. $p=2$) does not lead to qualitatively different results. 

When we implement the minimum volatility rule as mentioned in Section 3 to select $b$, the values of tuning parameters are adopted from \cite{wagnerhong2016}
, see their online supplementary material.

\end{document}